\documentclass[11pt]{article}

\usepackage{fullpage}

\usepackage{amssymb,hyperref,amsthm,mathtools,enumerate,thmtools,thm-restate}
\usepackage[capitalise,nameinlink, noabbrev]{cleveref}

\usepackage[vlined,boxed,commentsnumbered, ruled,linesnumbered]{algorithm2e}

\SetKwInOut{Given}{Input}
\SetKwFor{RepTimes}{repeat}{times}{}

\usepackage{dsfont}

\usepackage[noend]{algpseudocode}

\usepackage[margin=1in]{geometry}
\newcommand{\mc}{\mathcal}

\declaretheorem[name=Theorem, parent=section]{theorem}

\declaretheorem[name=Lemma, sibling=theorem]{lemma}
\declaretheorem[name=Claim, sibling=theorem]{claim}

\newtheorem*{remark}{Remark}
\declaretheorem[name=Fact, sibling=theorem]{fact}
\theoremstyle{definition}
\declaretheorem[name=Definition, sibling=theorem]{definition}

\theoremstyle{remark}

\newcommand\E{\mathbb{E}}

\newcommand\N{\mathbb{N}}

\newcommand\F{\mathbb{F}}

\newcommand\card[1]{\left| {#1} \right|}

\newcommand\set[1]{{\left\{ #1 \right\}}}

\newcommand\Prob[2]{{\Pr_{#1}\left[ {#2} \right]}}

\newcommand\Expc[2]{{\mathop{\mathbb{E}}_{#1}\left[ {#2} \right]}}

\newcommand\KMconst{c_{_{\text{KM}}}}

\newcommand\ceil[1]{\left\lceil{#1}\right\rceil}

\newcommand\floor[1]{\left\lfloor{#1}\right\rfloor}

\newcommand\eps{\varepsilon}

\newcommand\ind[1]{{\mathds 1}_{#1}}

\newcommand{\RM}{{\sf RM}}

\newcommand{\oracle}[1]{\mathcal{O}_{#1}}

\newcommand{\sem}{{\sf semi}}
\newcommand{\ka}{k_{\sf adv}}
\newcommand{\dist}{{\sf dist}}
\newcommand{\Dist}{{\text{Dist}}}

\newcommand{\qbin}[2]{\begin{bmatrix}{#1}\\ {#2}\end{bmatrix}_q}
\DeclareMathOperator{\var}{var}
\usepackage{soul}
\usepackage[dvipsnames]{xcolor}

\newcommand{\tester}[1]{{\mathrm{SSB}_{{#1}}}}

\title{Optimal Testing of Reed-Muller Codes with an Online Adversary}

\author{}
 \author{
 Esty Kelman\thanks{Department of Mathematics, Massachusetts Institute of Technology and Boston University, Cambridge, MA, USA. Supported in part by the National Science Foundation under Grant No. 2022446 and the NSF TRIPODS program (award DMS-2022448). Email: \texttt{ekelman@mit.edu}} 
 \and
Uri Meir\thanks{Blavatnik School of Computer Science, Tel Aviv University, Israel. Email: \texttt{urimeir.cs@gmail.com}}
 \and
 Kai Zhe Zheng\thanks{Department of Mathematics, Massachusetts Institute of Technology, Cambridge, MA, USA. Supported by the NSF GRFP DGE-2141064. Email: \texttt{kzzheng@mit.edu}}
 }

\date{}

\begin{document}

\maketitle

\begin{abstract} 

Motivated by applications to property testing in the online-erasure model of Kalemaj, Raskhodnikova, and Varma (ITCS 2022 and Theory of Computing 2023), we define and analyze {\em semi-sample-based testers} for Reed-Muller codes. The task in Reed-Muller testing is to determine whether an input function $f: \F^n \to \F$ belongs to the Reed-Muller code or is far from it, using as few point queries to $f$ as possible. Reed-Muller testing is a well-studied task with its roots in both the Property Testing and Probabilistically Checkable Proofs literature.
The online-erasure model introduces a twist: after each query made, an adversary may erase up to $t$ points of the input function, potentially thwarting any test in which the queries follow a predictable pattern.

Semi-sample-based testers are a hybrid between sample-based testers --- which can only make uniformly random queries to the input function --- and standard testers, which can choose their queries freely. They are designed with the online-erasure model in mind and operate by first choosing some subset $S$ of the domain and then making their queries uniformly at random inside of $S$. 
We describe semi-sample-based testers for the Reed-Muller code and give an optimal analysis of their soundness. 

Consequently, we show that semi-sample-based testers are indeed effective in the presence of online erasures, and thereby achieve optimal query complexity for testing the Reed-Muller code in the online-erasure model. This result improves upon prior work of Minzer and Zheng (SODA 2024). As an added bonus, we show that semi-sample-based testers also exist for the lifted affine-invariant codes of Guo, Kopparty, and Sudan (ITCS 2013), thereby providing the first known testers for these codes in the online-erasure model.

\end{abstract}

\newpage
\section{Introduction}
Local testing of the Reed-Muller code is a fundamental problem in property testing and complexity theory. In this problem, we are given oracle access to a multivariate function over a finite field, $f : \F^n \to \F$, and must decide whether $f$ is a polynomial of total degree at most $d$, or is instead far from every polynomial of degree at most $d$. The notion of distance used here is the fractional Hamming distance. The goal is to design a randomized algorithm, called a \emph{tester}, that accepts $f$ with probability $1$ in the former case and rejects $f$ with probability at least $2/3$ in the latter case, while making as few queries to $f$ as possible. Ideally, the query complexity should be independent of $n$.

The problem of Reed-Muller testing has been extensively explored over the past few decades. Its study began with the degree-$1$ case, which was considered in the seminal work of Blum, Luby, and Rubinfeld~\cite{BLR93}. Their work introduced the well-known BLR linearity test and, more broadly, initiated the field of property testing. Subsequently, a large body of work gave and optimized testers for the general degree $d$ case, $d > 1$~\cite{AlonKKLR05,KaufmanR06,BKSSZ10,HaramatySS13,RZS,kaufman2025improved,MinZCodes}. Additionally, a related variant of the Reed-Muller testing problem was instrumental in the early construction of Probabilistically Checkable Proofs (PCPs)~\cite{AS92,RazS97}, and has continued to inspire subsequent research, including~\cite{MoshkovitzR08,BDN,MZcube,HKSS,BS}.

More recently, a new model of property testing introduced by Kalemaj, Raskhodnikova, and Varma, called the \emph{online-erasure model}~\cite{KalemajRV22}, has garnered significant attention. Several works have revisited Reed-Muller testing and related questions in this setting~\cite{KalemajRV22,MinZ,ben2024property,WYZ,AKM,KLR,KelmanMNDR25}. In this model, an adversary may \emph{erase} up to $t$ inputs after each query $f(x)$ is made, where typically $t$ is assumed to be small relative to the domain size $|\F|^n$. Testers that work in the online-erasure model are called \emph{online-erasure-resilient}.

We continue this line of work and give improved online-erasure-resilient testers for the Reed-Muller code. Our testers have query complexity that improves upon the previous state of the art in~\cite{MinZ}. In addition, we give a matching lower bound showing that, for fixed field size and degree parameter, our query complexity is optimal as a function of the erasure rate parameter $t$.

Toward this result, we find it convenient to define a new class of testers, which we refer to as \emph{semi-sample-based testers}. At a high level, semi-sample-based testers make their queries uniformly at random within some moderately large set of points $S$. In the case of Reed-Muller testing, the set $S \subset \F^n$ is chosen to have size independent of $n$, but still sufficiently large as a function of $t$, $d$, and $q := |\F|$. By making $S$ large enough relative to the number of queries made, we ensure that our semi-sample-based testers are also online-erasure-resilient. Beyond Reed-Muller codes, we note that our approach extends naturally to give semi-sample-based testers for the lifted affine-invariant codes of~\cite{GKS}. To the best of our knowledge, these are the first testers for these codes that are resilient to online erasures.

We believe that the notion of semi-sample-based testing is conceptually useful and may find further applications in property testing, particularly in adversarial models such as the online-erasure model. To motivate this notion and explain why it arises naturally in this setting, we examine several natural candidate testers for Reed-Muller codes and show why they fail under online erasures. These candidates represent two extremes, which semi-sample-based testers naturally interpolate between. For simplicity, we restrict the discussion below to the case $d = 1$, as these issues already arise in this setting.

\paragraph{The Standard Tester.}
A first approach one might try is to run the standard tester in the online-erasure model. This tester is the well-known BLR linearity tester, which chooses $x, y \in \F^n$ uniformly at random, queries $f(x)$, $f(y)$, and $f(x+y)$, and then checks whether $f(x) + f(y) = f(x+y)$. While this tester has optimal query complexity in the standard model, it fails completely in the online-erasure model. Indeed, an adversary can erase the value $f(x+y)$ after the first two queries are made, making it impossible to carry out the test. The issue here is that the queries are \emph{too structured}.

\paragraph{Sample-based testing.}
At the other extreme, one can design a tester whose queries are as unstructured as possible. For example, one may consider a tester that queries $f$ at uniformly random points $x \in \F^n$. Such testers were introduced in~\cite{GGR98} and are called \emph{sample-based testers}. In this case, it is unlikely that an erased point is ever queried (since $t \ll |\F|^n$), but this robustness comes at a cost: the tester requires at least $n$ queries to succeed. Indeed, as shown in \cite{GoldreichR2016}, with high probability, any set of fewer than $n$ uniformly random points is linearly independent, and therefore there always exists a linear function that agrees with $f$ on those points. Thus, while sample-based testers are robust to online erasures, they are not query efficient.

\paragraph{Semi-sample-based testing.}
Semi-sample-based testers interpolate between these two extremes. They operate by first choosing a subset of the domain and then taking uniformly random samples from within that subset. Our tester, in particular, chooses a $k$-dimensional linear subspace $U$ and queries $f$ at uniformly random points from $U$. Note that the case $k = n$ corresponds to sample-based testing, while the case $k = 1$ closely resembles the BLR tester.

By choosing the dimension of $U$ appropriately, we obtain the best of both worlds. We achieve erasure resilience by taking $k$ sufficiently large relative to $t$, while retaining query efficiency by keeping $k$ independent of $n$.

\subsection{Main Results}

Our main result is the construction and analysis of a new tester for the Reed-Muller code using the semi-sampled-based approach. Along the way, we note that our semi-sample-based testers can also be made to work for other affine-invariant families, such as lifted affine-invariant codes \cite{GKS}. Throughout this section, we say that a function $f: \F^n \to \F$ is $\eps$-far from a given family of functions if, for every function in that family, $f$ differs from it on at least an $\eps$-fraction of all possible inputs.

\paragraph{Reed-Muller Codes.}
Fix a degree parameter $d \in \N$ and a finite field $\F$ of size $q$, where $q$ is a power of a prime $p$. The starting point for our tester is the simple observation that the restriction of any polynomial of degree at most $d$ to a linear subspace remains a polynomial of degree at most $d$. 

Formally, let $U \subset \F^n$ be a linear subspace of dimension $k$ with basis $v_1, \ldots, v_k$. If $f : \F^n \to \F$ has degree at most $d$, then the induced $k$-variate function $f|_U$, defined by
\[
f|_U(z_1, \ldots, z_k) := f\left(\sum_{i=1}^k z_i v_i\right),
\]
is a polynomial of degree at most $d$ in the variables $z_1, \ldots, z_k$. This observation underlies most previous Reed-Muller testers, including those of~\cite{AlonKKLR05,BKSSZ10,HaramatySS13,kaufman2025improved}.

In the standard model, the tester proceeds by choosing a subspace $U$ of dimension $k = k_{q,d} + 1$, where
\[
k_{q,d} := \ceil{\frac{d+1}{q - q/p}},
\]
and checking whether $f|_U$ has degree at most $d$ by querying $f(x)$ for every point $x \in U$. Our semi-sample-based variant, which is instead parameterized by $k \geq k_{q,d} + 1$ and denoted by $\tester{k}$, queries $f$ on a random subset of $s_k$ points from $U$, rather than on all of $U$. The precise choice of $s_k$ is stated in \Cref{th: semi-sample rm}, but crucially it satisfies
\[
    s_k = O\!\left(q^{k_{q,d}+1} \left(\frac{e(d+k)}{k}\right)^k \right) .
\]

\paragraph{The tester $\tester{k}$.} 
The degree-$d$ tester with parameter $k$ proceeds  as follows:
\begin{enumerate}
    \item Choose a linear subspace $U \subset \F^n$ of dimension $k$ uniformly at random.

    \item Choose $s_k$ points from $U$ uniformly at random, and let $S$ denote the resulting set.
      \item Accept if there exists a degree $d$ function that agrees with $f$ on every point in $S$. Otherwise, reject.
\end{enumerate}

We note $\tester{k}$ makes at least as many queries as the $q^{k_{q,d}+1}$ made by the standard tester, and its query complexity increases with $k$.
However, $\tester{k}$ has two properties which are important for us: (1) all queries made are uniformly random inside $U$; and (2) the \emph{fraction} of points queried in the chosen subspace $U$ equals to $s_k/q^k$ which goes to $0$ as $k$ increases.
These properties turn out to be crucial for our desired application to testing in the online erasure model.

The focus of this paper is thus to analyze $\tester{k}$ above, and our main result is the following.
\begin{theorem}[Informal Version of \cref{th: semi-sample rm}] \label{thm:main1}
The tester $\tester{k}$ for degree $d$ Reed--Muller codes over $\F^n$ satisfies the following:
\begin{itemize}
    \item \emph{Completeness:} If $\deg(f) \leq d$, then the tester accepts with probability $1$.
    \item \emph{Soundness:} If $f$ is $\eps$-far from every degree $d$ function, then the tester rejects with probability at least
    \[
    \min\{1/128,\; s_k \eps / 8\}.
    \]
\end{itemize}
\end{theorem}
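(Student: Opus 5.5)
Completeness is immediate from the restriction observation stated just above: if $\deg(f)\le d$, then $f|_U$ is a degree-$d$ polynomial in $z_1,\dots,z_k$ and agrees with $f$ on every point of $S\subseteq U$, so the tester accepts. All the work is in soundness. I would split the analysis of $\tester{k}$ into two stages: first the choice of $U$, then the sampling of $S$ inside $U$.

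\textbf{Stage 1 (the random subspace is typically far).} The first goal is a statement of the form: if $f$ is $\eps$-far from degree $d$, then with constant probability over a uniformly random $k$-dimensional $U$ (with $k\ge k_{q,d}+1$), the restriction $f|_U$ is $\Omega(\min\{\eps,\delta_0\})$-far from degree $d$ inside $U$, for a suitable constant $\delta_0$. I would derive this from known optimal analyses of the standard tester, for instance \cite{HaramatySS13,kaufman2025improved,MinZCodes}. Those results say that a random $(k_{q,d}+1)$-dimensional subspace $V$ has $f|_V\notin \RM$ with probability at least $\min\{c,\; q^{k_{q,d}+1}\eps\}$. To pass to a larger $k$, I would write a random $V$ as a random subspace of a random $U$. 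This gives
\[
\Prob{V}{f|_V\notin\RM}=\Expc{U}{\Prob{V\subset U}{f|_U|_V\notin\RM}}.
\]
Applying the standard-tester soundness inside $U$ to the function $f|_U$ relates this to $\dist(f|_U,\RM)$. Combining with a second-moment or Markov argument then shows that $\dist(f|_U,\RM)$ has $U$-expectation at least about $\min\{\eps,\delta_0\}$, and is not concentrated on few $U$. A cleaner alternative route is sampling concentration: for fixed $g$ of degree $d$, the agreement of $f$ with $g$ on a random $U$ concentrates around its global value. The difficulty is that a union bound over all $g$ is too costly, which is why I would prefer the local-testing route.

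\textbf{Stage 2 (sampling inside $U$).} Fix $U$ with $\delta:=\dist(f|_U,\RM)$. Here $\RM$ restricted to $U\cong\F^k$ is a linear code with at most $q^{\binom{k+d}{d}}$ codewords, and $\binom{k+d}{d}\le (e(d+k)/k)^k$. The tester accepts only if some codeword agrees with $f|_U$ on all of $S$. The naive union bound gives acceptance probability at most $q^{\binom{k+d}{d}}(1-\delta)^{s_k}$. This requires $s_k\gtrsim \binom{k+d}{d}\log q/\delta$, which matches the stated shape of $s_k$ only when $\delta$ is constant. To obtain the linear bound $\min\{1/128,s_k\eps/8\}$, I would handle two regimes:
\begin{itemize}
\item When $\delta$ is at least a constant, use the union bound above together with the choice of $s_k$; this gives rejection probability at least a constant.
\item When $\delta$ is small, $f|_U$ has a unique closest codeword $g$, disagreeing on a $\delta$ fraction of $U$. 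A single sample hits the disagreement set with probability $\delta$, so $S$ hits it with probability about $\min\{1,s_k\delta\}/2$. I would then argue that if $S$ hits it, no other codeword can fit $S$, except with small probability. The reason is that any other codeword $h$ is at distance at least the minimum distance $q^{-k_{q,d}}$-ish from $g$, hence far from $f|_U$, and the union bound over such $h$ is affordable with $s_k$ at least $q^{k_{q,d}+1}\binom{k+d}{d}$.
\end{itemize}

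Averaging over $U$ using Stage 1 then yields rejection probability at least $\min\{1/128,\,s_k\eps/8\}$, after tracking constants.

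\textbf{Main obstacle.} I expect the main obstacle to be Stage 1 with the correct linear dependence on $\eps$. One needs $\E_U[\min(1,s_k\,\dist(f|_U,\RM))]\gtrsim\min(1,s_k\eps)$, and distance can be lost when restricting to a subspace. My first attempt would be a pairwise-independence argument. Points of $U$ are pairwise-independent uniform samples of $\F^n$ (up to the origin), so the fraction of $U$ where $f$ differs from its global closest polynomial concentrates. The local-testing theorem is then used to prevent $f|_U$ from being close to some \emph{other} polynomial.
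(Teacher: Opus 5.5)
Your Stage~2 is fine. It is essentially the paper's sample-based lemma (\cref{lm: sample-based lifted codes soundness}) applied to $f|_U$, which is exactly the base case $n=k$ of the paper's induction. The gap is Stage~1 in the regime $\eps\gtrsim 1/s_k$, where all the difficulty lies, and the route you propose there cannot give the stated bound. Write $\rho_U$ for the rejection probability of the $t_{q,d}$-space test inside $U$. What this test can certify is essentially $\dist(f|_U,\RM)\gtrsim q^{-t_{q,d}}\rho_U$, while \cref{thm: rm local testability} only guarantees $\E_U[\rho_U]\ge q^{-\KMconst}\min\{1,q^{t_{q,d}}\eps\}$. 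That guarantee is consistent with $\rho_U$ being supported on a $q^{-\KMconst}\min\{1,q^{t_{q,d}}\eps\}$ fraction of subspaces. So after Stage~2 the certified rejection probability is only of order $q^{-\KMconst}\min\{1,q^{t_{q,d}}\eps\}$. Against the target $\min\{1/128,s_k\eps/8\}$ this loses two things:
\begin{itemize}
\item a factor $q^{-\KMconst}$ for constant $\eps$, whereas the theorem's $1/128$ does not depend on $q$;
\item a factor of order $q^{-\KMconst}q^{t_{q,d}}/s_k$ at $\eps\approx 1/s_k$, which tends to $0$ as $k$ grows, since $s_k$ grows with $\ln|\RM[k,q,d]|$ while $t_{q,d}$ does not depend on $k$.
\end{itemize}
No Markov or second-moment step can recover information the test does not provide. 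Stage~1 as literally stated (constant probability over $U$) is also false for $\eps\ll q^{-k}$ when $f$ agrees with its nearest polynomial $g^*$ at $0$, since a random $U$ then typically misses $\Delta=\{f\ne g^*\}$. The expectation form in your last paragraph is the right target. Your pairwise-independence idea does handle $\eps\lesssim 1/s_k$, and no local testing theorem is needed there: if $|\Delta\cap U|<\delta_0q^k/2$, the code distance alone makes $g^*|_U$ the nearest codeword on $U$. But for $\eps\gtrsim 1/s_k$ nothing in your plan rules out that $f|_U$ is close to some polynomial other than $g^*|_U$ for most $U$, and that is precisely what must be excluded.

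The missing idea is a local-to-global step run inside an induction on the ambient dimension, as in \cite{BKSSZ10}. The paper never estimates $\dist(f|_U,\RM)$ directly. Let $\eps_f:=\dist(f,\RM)$.
\begin{itemize}
\item For $\eps_f\le 1/(2s_k)$ it analyzes $(U,S)$ jointly, with a fixed $U$ and samples moved by a random invertible $T$. The first tenth $S_1$ of the samples separates all degree-$d$ polynomials on $U$ with probability at least $1/2$. It misses $\Delta$ except with probability $s_k\eps_f/10$, so $g^*|_U$ is the only polynomial consistent with it. The remaining samples hit $\Delta$ with probability at least $3s_k\eps_f/8$ by inclusion--exclusion (\cref{lm: small distance general}).
\item For $\eps_f\ge 1/(8s_k)$ it proves $\Pr[\mc{T}^f_k\text{ rejects}]\ge\frac{1}{128}+\gamma/q^n$, with $\gamma=q^{O(1)}s_k$, by induction on $n$. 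Choosing $U$ inside a random hyperplane $W$ gives $\Pr[\mc{T}^f_k\text{ rejects}]=\E_W\Pr[\mc{T}^{f|_W}_k\text{ rejects}]$. If at most about $\gamma$ hyperplanes have $f|_W$ $\frac{1}{8s_k}$-close to $\RM$, the slack $\gamma/q^{n-1}-\gamma/q^n$ pays for them. Otherwise the hyperplane agreement lemma (\cref{lm: hyperplane general}) shows $f$ is $\frac{1}{2s_k}$-close to $\RM$, which sends you back to the small-distance case. That lemma uses the consistency graph, Raz--Safra transitivity, the sampling lemma, and stitching a large clique into a global $F$.
\end{itemize}
The local testing theorem enters only inside that lemma, to show $F$ is close to $\RM$. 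Its $q^{-\KMconst}$ loss is absorbed into the number of hyperplanes $M=q^{O(1)}s_k$, and hence into the requirement $k\ge 8d+3\KMconst+150$ that makes $\gamma/q^k<1/128$ at the base case, not into the rejection probability.
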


To obtain a tester for Reed-Muller codes in the online erasure model, we set $k$ sufficiently large with respect to the erasure parameter $t$, momentarily setting $\eps = \Omega(1)$.
Concretely, we choose $k = \Theta(d) + \log_q t$ and observe that for $k$ at least this large, the probability of querying an erased point is negligible. This observation leads to the following result.

\begin{theorem}[Informal Version of \cref{thm: rm online erasure test}] \label{thm: main erasure}
For any degree parameter $d$, field $\F$ of size $q$, dimension $n$, and erasure parameter
$t \leq O(q^{\,n - \Theta(d)})$, there exists a tester for the degree $d$ Reed-Muller code over $\F^n$ in the $t$-online-erasure model with query complexity $\left(O_q\!\left(\frac{\log t}{d}\right)\right)^d$  that satisfies the following:
\begin{itemize}
    \item \emph{Completeness:} If $\deg(f) \leq d$, then the tester accepts with probability $1$.
    \item \emph{Soundness:} If $f$ is $\Omega(1)$-far from every degree $d$ function, then the tester rejects with probability at least $\Omega(1)$.
\end{itemize}
\end{theorem}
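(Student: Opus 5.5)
We would derive \cref{thm: main erasure} from \cref{thm:main1} by running $\tester{k}$ in the online-erasure model with $k$ chosen large enough that erasures are unlikely to be hit. The tester first picks a uniformly random $k$-dimensional subspace $U$, and only then queries $s_k$ points of $U$ one at a time. Each point is uniform in $U$ and independent of the earlier ones. If the sampled point has been erased, we replace it by a fresh uniform point of $U$, or simply accept. Completeness is immediate: a degree-$d$ function agrees with itself on any set of queried points, so the tester accepts with probability $1$.

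\textbf{Step 1: bounding the chance of hitting an erasure.} Let $U$ be the chosen subspace; the adversary's strategy may depend adaptively on the queries. Before the $i$-th query at most $t(i-1)$ points are erased in total. However, the adversary does not know $U$, and only sees the points queried so far, which lie in $U$. The worst case is that every erased point lands in $U$, giving at most $t s_k$ erased points in $U$. Each query is then erased with probability at most $t s_k/q^k$. A union bound over the $s_k$ queries gives total probability at most $t s_k^2/q^k$ that any query hits an erased point.

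\textbf{Step 2: choosing $k$.} Using $s_k = O\!\left(q^{k_{q,d}+1}(e(d+k)/k)^k\right)$, we need $t s_k^2 \le q^k/1000$. Set $k = Cd + \log_q t$ with $C$ a large constant depending on $q$. The factor $(e(d+k)/k)^{2k} = (e(1+d/k))^{2k}$ is dangerous, because $e^{2k}$ may exceed $q^k$ when $q$ is small. This calls for care: we cannot beat $e^{2k}$ against $q^k$ for $q=2$.

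The fix is to avoid squaring the whole sampling budget. Each individual query is uniform in $U$ and conditionally independent of the erased set given the past. So the erased fraction inside $U$ after $i$ queries is at most $ti/q^k$ in the worst case, and the bound from Step 1 stands as stated. We therefore need $k$ with $q^k \gg t s_k^2$. Write $s_k \le q^{O(d)} \cdot 2^{O(k\log(1+d/k))+k\log_2 e}$.

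For $q \ge 3$, taking $k \ge Cd + 4\log_q t$ makes $(1+d/k)^{2k}\le e^{2d/C'}$, which is absorbed into the $q^{Cd}$ term. The remaining $e^{2k}$ versus $q^k$ issue is the main obstacle. I would handle it by replacing the union bound with the expected number of erased queries, using that later points are sampled from the $\ge q^k - ts_k$ unerased points. More carefully, I would reduce the number of queries needed via the soundness bound $\min\{1/128, s_k\eps/8\}$. Only $\Theta(1/\eps)$ effective queries are needed for $\eps=\Omega(1)$, but the consistency check requires the full $s_k$ samples. So I would instead verify that the exact $s_k$ from \cref{th: semi-sample rm} is $q^{O(d)}\binom{d+k}{d}$-type, i.e.\ polynomial in $k$ of degree $d$, rather than exponential in $k$. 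This is consistent with the claimed final complexity $(O_q(\log t / d))^d$, and it makes $t s_k^2 \le t\,q^{O(d)}(k)^{2d} \ll q^k$ for $k = Cd+2\log_q t$.

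\textbf{Step 3: concluding soundness and complexity.} With this $k$, hitting an erasure has probability $o(1)$. Conditioned on no erasure, the run is distributed exactly as $\tester{k}$ in the standard model. It therefore rejects $\Omega(1)$-far $f$ with probability $\min\{1/128, s_k\eps/8\} - o(1) = \Omega(1)$. The query complexity is $s_k = q^{O(d)}\binom{d+k}{d} = (O_q(\log t/d))^d$ when $\log_q t \gtrsim d$. The hypothesis $t \le O(q^{n-\Theta(d)})$ ensures that $k \le n$, so the subspace $U$ exists.
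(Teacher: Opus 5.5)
Once the detour in Step~2 is removed, your argument is the paper's own proof of \cref{thm: rm online erasure test}. You run $\mc{T}_k$ and accept as soon as an erased value is returned. You bound the chance that some query hits one of the at most $ts_k$ erased points by $ts_k^2/q^k$: given the history, each query is uniform in $U$, so it lands in the current erased set $E$ with probability at most $|E|/q^k$. You then make this small using that $s_k\le 100q^{t_{q,d}}\binom{d+k}{d}$ from \eqref{eq: s bound} is polynomial in $k$. Your $e^{2k}$ worry came only from the loose form $(e(d+k)/k)^k$ quoted in the introduction, and you can simply delete it.

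The one step that fails as written is the last. With $k=Cd+2\log_q t$, the hypothesis $t\le O(q^{n-\Theta(d)})$ does not give $k\le n$; you only cover $t\lesssim q^{(n-Cd)/2}$. Take instead $k=\log_q t+O(d)+O\bigl(d\log_q(1+\log_q t/d)\bigr)$, and also at least $8d+3\KMconst+150$, as \cref{th: semi-sample rm} requires. Bound $\binom{d+k}{d}\le\bigl(e(1+k/d)\bigr)^d$ rather than by $k^d$, which loses a $d^{\Theta(d)}$ factor and breaks uniformity in $d$. With that bound, this $k$ already gives $ts_k^2\le q^k/1000$ and keeps $s_k=q^{O(d)}\bigl(O(1+\log_q t/d)\bigr)^d$. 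It fits the stated range of $t$ up to the same lower-order slack that the paper's choice $\ka=O(d)+\log_q(t/\eps^2)$ also hides.

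Finally, the claim that ``conditioned on no erasure the run is distributed exactly as $\tester{k}$'' is not literally true. The adversary's erasures can correlate with $U$ and with the samples. What you need, and what your final estimate actually uses, is the coupling bound $\Pr[\text{online run rejects}]\ge\Pr[\text{offline run rejects}]-\Pr[\text{some query is erased}]$.
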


Prior to our work, the best-known online-erasure-resilient Reed-Muller testers used $O_q(\log t)^{3d}$ queries over prime fields and $O_q(\log t)^{3d+3q}$ queries over non-prime fields. Our results improve upon these bounds in all cases. We complement this upper bound with an almost matching lower bound, showing that our query complexity is optimal (up to constant factors) as a function of the erasure parameter for fixed field size and degree.

\begin{theorem}[Informal Version of \cref{thm:lower bound for all fields}] \label{thm: erasure lower bound inf}
For any degree parameter $d$, field $\F$ of size $q$, dimension $n$, and erasure parameter $t$, any tester for the degree $d$ Reed-Muller code over $\F^n$ must make at least $
\left(\frac{\log_q t}{d}\right)^d$ queries.
\end{theorem}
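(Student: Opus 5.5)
Our plan is to build a distribution $\mathcal{D}_{\rm yes}$ over degree-$d$ polynomials and a distribution $\mathcal{D}_{\rm no}$ over functions that are $\Omega(1)$-far from degree $d$, and to give an online adversary that makes the two look identical to any tester making fewer than $Q:=\left(\frac{\log_q t}{d}\right)^d$ queries. We would apply Yao's principle, so it suffices to handle deterministic adaptive testers. For $\mathcal{D}_{\rm yes}$ take a uniformly random polynomial of degree at most $d$. For $\mathcal{D}_{\rm no}$ take a uniformly random function, which is far from degree $d$ with high probability for $n$ large by a counting argument. The aim is that, against the adversary, the answers the tester sees are uniform and independent in both cases.

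The adversary keeps the set $X$ of points queried so far. After each query it erases every point whose value is determined by $f|_X$ under the degree-$d$ hypothesis, namely every point outside $X$ lying in the span closure $\mathrm{cl}_d(X)$ of the evaluation functionals. The first key step is to bound this closure. We would show that every point whose evaluation is a linear combination of evaluations on $X$ lies in the affine span $A$ of $X$. Indeed, if a point $z$ lies outside $A$, there is a degree-$1$ polynomial vanishing on $A$ and not at $z$, so its evaluation cannot be such a combination. Hence at most $q^{|X|}$ points need erasing. This crude bound is too weak, so we refine it. Once a query lands on a previously erased point the tester receives $\perp$, and otherwise the adversary only needs to erase the new points of $\mathrm{cl}_d(X)$. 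We would bound the size of $\mathrm{cl}_d(X)$ when $|X|=m$ through the rank of the degree-$d$ evaluation matrix.

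The main obstacle is the exact counting that yields the exponent $\left(\log_q t/d\right)^d$. The rank of the degree-$d$ evaluation matrix is at most $\binom{k+d}{d}$ on a $k$-dimensional span. Forcing an unerased linear dependency therefore needs $m\ge \binom{k+d}{d}\ge (k/d)^d$ queries whose span reaches dimension $k$. We would take $k\approx\log_q t$, so that erasing a whole $k$-dimensional span costs at most $q^k\le t$ erasures per step. The plan for this step is a potential argument. While the queried points stay inside a span of dimension below $\log_q t$, the adversary erases that entire span, which costs at most $t$ points per step. Any query that stays in the span is then erased, and any query outside it increases the dimension. So with at most one $\perp$ per step, the tester needs at least $\log_q t$ informative queries before a dependency could appear outside the erased region. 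We then strengthen this to $\left(\log_q t/d\right)^d$ using the polynomial-rank count above: queries that increase the dimension yield independent values, and a dependency forces the number of points to exceed the dimension of degree-$d$ polynomials on that span. Given this, all non-erased answers are uniform and independent under both distributions, so the tester's acceptance probabilities differ by $o(1)$. This contradicts the requirement of accepting the yes case with probability $1$ while rejecting the no case with probability $2/3$.
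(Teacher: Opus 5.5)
\textbf{There is a genuine gap, at the one quantitative step.} Your overall framework matches the paper's, which defers to the proof of Theorem 1.2 of \cite{ben2024property}. The framework is Yao's principle, a uniformly random degree-$d$ polynomial versus a uniformly random function, and an adversary that after every query erases $\mathrm{cl}_d(X)\setminus X$, so every non-erased answer is a fresh uniform symbol in both cases. The failing step is bounding $|\mathrm{cl}_d(X)|$.

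You use the \emph{upper} bound $\binom{k+d}{d}$ on the rank of degree-$d$ evaluations on a $k$-dimensional span. From it you conclude that an unerased dependency needs $m\ge\binom{k+d}{d}$ points spanning dimension $k$. This reverses the logic: an upper bound on rank tells you when a dependency \emph{must} exist, not that one \emph{cannot} exist with fewer points. The conclusion is false. Take $q\ge d+2$, put $d+1$ queries on a line $\ell$, and add $k-1$ further points so the affine span has dimension $k$. Then all of $\ell$ lies in $\mathrm{cl}_d(X)$, although $m=d+k\ll\binom{k+d}{d}$.

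So dependencies cannot be prevented; what the adversary needs is only that the closure stays smaller than $t$. For that you need a \emph{lower} bound on the rank of an arbitrary set in terms of its size. This is exactly what the paper imports from \cite[Corollary 1.4]{golovnevGHNY2024}, the $\F_q$ analogue of the Keevash--Sudakov and Ben-Eliezer--Hod--Lovett results used over $\F_2$. It states that every $S$ with $|S|\ge q^r$ has degree-$d$ evaluation rank at least $|M_q^n(q^r)_{\le d}|$, which equals $\binom{r+d}{d}\ge (r/d)^d$ when $d<q$. This is a genuine extremal (shifting/footprint-type) statement and does not follow from counting monomials on a span.

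\textbf{How the argument closes with that lemma.} The closure has the same evaluation span as $X$, so its rank is at most $|X|$. Take $r=\floor{\log_q t}$. If $|\mathrm{cl}_d(X)|\ge q^r$, a subset of size $q^r$ already forces $|X|\ge |M_q^n(q^r)_{\le d}|$. Hence a tester making fewer queries than this always has $|\mathrm{cl}_d(X)|<q^r\le t$. The adversary can then erase the entire closure after every query within its per-step budget, and the two views are identically distributed.

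This also makes your dimension-potential detour unnecessary. That detour caps out at about $\log_q t$ queries anyway. Once the affine span has more than $t$ points the adversary can no longer erase it and must fall back on erasing the closure, whose size is precisely what needed bounding.
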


Previously, this lower bound was shown over $\F_2$ in~\cite{ben2024property}; we note that, by leveraging recent results from \cite{beameOY2018,golovnevGHNY2024}, the argument extends naturally to general finite fields.

\paragraph{Lifted Affine Invariant Codes}
In addition to testing the Reed-Muller code, we note that our semi-sample-based testers can also be defined quite naturally for a family of functions called lifted affine-invariant codes \cite{GKS}. These codes are a generalization of Reed-Muller codes and were previously studied in the context of property testing in \cite{GKS, HRZS, kaufman2025improved}. They are defined as follows. Let $\mc{C}_j \subseteq \{\F^j \to \F\}$ be an \emph{affine-invariant} family of functions, meaning $f \in \mc{C}_j$ if and only if $f \circ T \in \mc{C}_j$ for any affine transformation $T: \F^j \to \F^j$. Then, the $n$-dimensional lift of $\mc{C}_j$ is $\mc{C}^{j \nearrow n}$ and is the set of functions $f: \F^n \to \F$ such that $f|_U \in \mc{C}_j$ for every $j$-dimensional affine subspace $U \subset \F^n$. 

It is straightforward to design a semi-sample-based tester for $\mc{C}^{j \nearrow n}$. Indeed, one may set any $k\geq j$ and adjust the number of random queries in step 2 suitably (denoted by $Q_k$ below). Then, in step 3 the tester checks if there is a function from $\mc{C}^{j \nearrow k}$ agreeing with the queries. Our main result for lifted affine-invariant codes is the following.

\begin{theorem}[Informal Version of \cref{th: semi-sample lifted}]  \label{thm: main 1}
    Let $\mc{C}_n = \mc{C}^{j \nearrow n}$ be the $n$-dimensional lift of the affine invariant code $\mc{C}_j \subseteq \{\F^j \to \F\}$ and let $\delta_0$ be the smallest minimum distance of the codes $\mc{C}_j, \cdots, \mc{C}_n$. Choose dimension $k\geq j$ and set the number of queries to be
    \[
    Q_k := \ceil{\frac{100\ln |\mc{C}_k|}{\delta_{0}}}+ 1.
    \]
    The semi-sample-based tester for $\mc{C}^{j \nearrow n}$ satisfies the following:
    \begin{itemize}
        \item Completeness: If $f \in \mc{C}^{j \nearrow n}$, then the tester accepts with probability $1$.
        \item Soundness: If $f$ is $\eps$-far from $\mc{C}^{j \nearrow n}$, then the tester rejects with probability at least 
        \[
        \min\{Q_k \eps/8, 1/128 \}.
        \]
    \end{itemize}
\end{theorem}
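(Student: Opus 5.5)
Completeness is immediate. If $f\in\mc{C}^{j\nearrow n}$, then for any $k$-dimensional affine subspace $U$ and any $j$-dimensional $V\subseteq U$ we have $(f|_U)|_V = f|_V\in\mc{C}_j$. Hence $f|_U\in\mc{C}^{j\nearrow k}=\mc{C}_k$, and $f|_U$ itself agrees with every sample in $S$.

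For soundness we split into two cases according to how often the restriction to a random $k$-flat lies in the code. Let $\delta(U)$ denote the distance of $f|_U$ from $\mc{C}_k$. The tester accepts only if some $g\in\mc{C}_k$ agrees with $f|_U$ on all of $S$.

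\textbf{Case 1: $f|_U$ is far from $\mc{C}_k$ for a noticeable fraction of $U$.} Fix a $U$ and suppose $g\in\mc{C}_k$ agrees with $f|_U$ on $S$. We bound the probability of this by a union bound over $g$, treating two kinds of $g$ separately. If $g$ is at distance at least $\delta_0/2$ from $f|_U$, then $Q_k$ uniform samples all land in the agreement set with probability at most $(1-\delta_0/2)^{Q_k}\le e^{-50\ln|\mc{C}_k|}$. Summed over all $g\in\mc{C}_k$ this is at most $|\mc{C}_k|^{-49}$, which is negligible. Since $\mc{C}_k$ has minimum distance at least $\delta_0$, at most one $g$ (the closest codeword $g_U$) can lie within distance $<\delta_0/2$ of $f|_U$. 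For that single $g_U$, the probability that $S$ misses the disagreement set is $(1-\delta(U))^{Q_k}$. Therefore
\[
\Pr[\text{reject}\mid U]\ \ge\ 1-(1-\delta(U))^{Q_k}-|\mc{C}_k|^{-49}\ \gtrsim\ \min\{1/2,\ Q_k\delta(U)/2\}.
\]
Averaging over $U$ reduces soundness to showing that $\E_U[\min\{1,Q_k\delta(U)\}]\gtrsim \min\{Q_k\eps,1\}$.

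\textbf{Case 2: the lower bound on $\E_U[\delta(U)]$.} This is the main obstacle, and it requires a local-to-global statement. If $\E_U[\delta(U)]\ge c\eps$, we are done by the averaging step (the $\min$ is handled by Markov). The key structural input is a robust testing theorem for lifted codes of the form
\[
\Pr_U\bigl[f|_U\notin\mc{C}_k\bigr]\ \ge\ \min\{c',\ \Omega(\eps)\}.
\]
This follows from the optimal analysis of the $j$-flat test for lifted codes, as in \cite{kaufman2025improved,HRZS}, since testing on $k$-flats rejects at least as often as testing on $j$-flats. Whenever $f|_U\notin\mc{C}_k$ we have $\delta(U)\ge q^{-k}$, so $Q_k\delta(U)$ could still be small. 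We therefore need $\delta(U)$ to be comparable to $\eps$ on average, and not merely nonzero.

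To get this, I would first try a sampling argument. Choose $U$ as a random $k$-flat inside a random $m$-flat $W$ with $m$ large. Let $g_W$ be the closest codeword to $f|_W$; by the rejection bound above and induction on dimension, $f|_W$ is close to $g_W$. Then $\delta(U)$ is at least the fraction of $U$ on which $f\neq g_W$, minus the probability that $g_W|_U$ is not the decoding of $f|_U$. Pairwise independence of points in a random flat gives concentration, so $\delta(U)\approx \Pr_{x\in U}[f(x)\ne g(x)]\approx\eps$ with constant probability whenever $\eps\gtrsim q^{-k}$. When instead $\eps\ll 1/Q_k$, we only need $\Pr_U[\delta(U)>0]\gtrsim Q_k\eps$, and here we use the rejection probability $\Omega(\eps)$ boosted by the $q^{k-j}$ factor gained when passing from $j$-flats to $k$-flats. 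These two regimes combine to give the $\min\{Q_k\eps/8,1/128\}$ bound, and the constants $1/8$ and $1/128$ come from the Markov and averaging losses.
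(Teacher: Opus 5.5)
Your reduction to lower-bounding $\E_U[\min\{1,Q_k\delta(U)\}]$ is fine in outline. One small repair: the additive $|\mc{C}_k|^{-49}$ in your per-$U$ bound is not proportional to $\delta(U)$, so it can swamp $Q_k\delta(U)\approx Q_kq^{-k}$. The learn-then-test split of \cref{lm: sample-based lifted codes soundness} gives $\Pr[\text{reject}\mid U]\ge\min\{1/4,Q_k\delta(U)/8\}$ with no additive loss.

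The real gap is that none of your tools actually lower-bounds that expectation.
\begin{itemize}
\item A mean bound $\E_U[\delta(U)]\ge c\eps$ is \emph{not} sufficient. The map $x\mapsto\min\{1,Q_kx\}$ is concave, so $\delta(U)$ could equal $1$ on a $c\eps$-fraction of subspaces and $0$ elsewhere. That gives only $c\eps$, a factor $Q_k$ short; you need distributional information, not a mean.
\item \cref{thm: testing lifted km} only bounds $\Pr_U[\delta(U)>0]$, and a nonzero $\delta(U)$ is only guaranteed to be $q^{-k}$, contributing $Q_kq^{-k}$. So in the small regime you would need $\Pr_U[\delta(U)>0]\gtrsim q^k\eps$, not $Q_k\eps$ as you wrote. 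Even then, the $q^{-O(1)}$ factor of that theorem lands directly in your rejection probability, so the $q$-independent constants $1/8$ and $1/128$ are out of reach this way.
\item Your concentration argument works only relative to a \emph{fixed} codeword $g^*$, and only while $\Pr_{x\in U}[f(x)\ne g^*(x)]<\delta_0/2$, so that $g^*|_U$ is the unique decoding of $f|_U$. This does handle $\eps_f$ from roughly $q^{-k}$ up to about $\delta_0/4$, a reasonable alternative to the random-$T$ argument of \cref{lm: small distance general}. Beyond the unique-decoding radius, however, nothing prevents $f|_U$ from being close to a different codeword for most $U$. The term ``minus the probability that $g_W|_U$ is not the decoding of $f|_U$'' is exactly what you never bound.
\end{itemize}

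That large-distance regime is the heart of the theorem, and it needs a local-to-global statement your sketch only asserts. The step ``by the rejection bound and induction on dimension, $f|_W$ is close to $g_W$'' has no justification: a far $f$ need not have close restrictions, and if many restrictions are close you must still deduce global closeness.

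The paper supplies this by induction on $n$. It views the tester as a random hyperplane $W$ followed by $\mc{T}_k$ on $f|_W$, and carries the potential $\eta+\gamma/q^n$ with $\eta=1/128$ and $\gamma=q^{O(1)}Q_k$. There are two cases:
\begin{itemize}
\item If fewer than about $\gamma$ hyperplanes have $f|_W$ $\frac{1}{8Q_k}$-close to $\mc{C}_{n-1}$, the induction hypothesis closes the step.
\item Otherwise the hyperplane agreement lemma (\cref{lm: hyperplane general}, via \cref{lm: hyperplane affine lifted}) forces $f$ to be $\frac{1}{2Q_k}$-close to $\mc{C}_n$, and the small-distance bound applies.
\end{itemize}
That lemma uses the consistency graph on the good hyperplanes, the Raz--Safra transitivity lemma (\cref{lm: transitive subgraph}), and the hyperplane sampling lemma (\cref{lm: sampling points v2}). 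It then applies \cref{thm: testing lifted km} to the function $F$ extrapolated from a large clique.

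Crucially, the $q^{-O(1)}$ loss of \cref{thm: testing lifted km} enters only through the threshold $\gamma$. That threshold is absorbed in the base case $n=k$ by requiring $\gamma/q^k\le 1/128$, which is where $Q_k\ll q^k$ (suppressed in the informal statement) is used. This is how the paper's constants stay absolute while yours would not.
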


Once again, by choosing $k$ suitably, we also get the following online-erasure-resilient tester for lifted affine-invariant codes. We state it informally below, saving the precise upper threshold for the erasure parameter to be formally defined later.
\begin{theorem}[Informal Version of \cref{thm: lifted t online erasure tester}] \label{thm: lifted t online erasure tester_informal}
Let $\mc{C}^{j \nearrow n}$ be a lifted affine-invariant code and let the erasure parameter $t$ and dimension parameter $k$ be chosen suitably. Then $\mc{C}^{j \nearrow n}$ has an $O( Q_k)$ query tester in the $t$-online erasure model that satisfies the following
  \begin{itemize}
    \item Completeness: If $f \in \mc{C}_n$, then the tester accepts with probability $1$.
    \item Soundness: If $f$ is $\Omega(1)$-far from $\mc{C}^{j \nearrow n}$, the tester rejects with probability at least $2/3$.
    \end{itemize}
\end{theorem}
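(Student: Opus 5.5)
The plan is to reduce the online-erasure statement to \Cref{thm: main 1} (the semi-sample-based soundness for lifts) and then argue that erasures are unlikely to disturb the test. Fix $\eps = \Omega(1)$, the distance from $\mc{C}^{j \nearrow n}$. By \Cref{thm: main 1}, the semi-sample-based tester with dimension $k$ and $Q_k$ uniform queries inside a random $k$-dimensional subspace $U$ rejects with probability at least $\min\{Q_k\eps/8, 1/128\}$. Since $Q_k \ge 100\ln|\mc{C}_k|/\delta_0$, this is at least $1/128$ once $Q_k\eps \ge 1/16$. Repeating independently $O(1)$ times, for instance $r = 141$ times since $(1-1/128)^{141} < 1/3$, raises the rejection probability to $2/3$. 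The total cost is $O(Q_k)$ queries. Completeness is immediate, because restricting a codeword of $\mc{C}^{j\nearrow n}$ to $U$ gives a codeword of $\mc{C}^{j\nearrow k}$. So the only remaining work is handling erasures.

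To simulate the tester online, we first sample $U$ and then issue the $Q_k$ queries one by one, each uniform in $U$. A point of $U$ is "bad" if it has been erased before it is queried. The adversary makes at most $t$ erasures per query, so at most $rQ_k t$ points are ever erased. However, the adversary sees the queried points, and these points partially reveal $U$, so we cannot treat $U$ as hidden.

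The safe bound is to condition on everything revealed so far, which is $U$ itself. Each new query is uniform in $U$, independent of the past given $U$. Hence it lands on one of the at most $rQ_kt$ erased points with probability at most $rQ_kt/q^k$. A union bound over all $rQ_k$ queries gives failure probability at most $(rQ_k)^2 t/q^k$. The choice "$t$ and $k$ chosen suitably" should therefore be the condition $q^k \ge C\,Q_k^2 t$ for a large constant $C$, which makes this probability at most, say, $1/100$.

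If a query does hit an erased point, the tester may simply accept; this costs at most $1/100$ of the soundness. The tester must also never reject a codeword because of erasures. With the convention of accepting on any erased answer, completeness stays perfect. Soundness is then at least $2/3 - 1/100$, which can be restored by a few more repetitions, adjusting the constant in $r$.

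The main obstacle is making sure the threshold is consistent with the rest of the setup. The condition $q^k \gtrsim Q_k^2 t$ must hold while $k \le n$, which caps $t$ at roughly $q^{n}/Q_k^2$; this matches the "suitably chosen" regime in the formal statement. In addition, $\delta_0$ must bound the distance of every $\mc{C}_k$ so that $Q_k$ is well defined. Neither step is hard, but the quantifiers need to be stated carefully: the adversary is adaptive, and the argument above uses only the fact that each fresh query is uniform in $U$ given $U$.
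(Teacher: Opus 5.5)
Your proposal is correct and matches the paper's argument. You repeat the semi-sample-based tester of \cref{th: semi-sample lifted} a constant number of times, accept whenever an erased answer is seen (which keeps completeness perfect), and union-bound the chance of ever querying an erased point by $(\text{total queries})^2\, t/q^k$, using only that each query is uniform in its subspace given the history.

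Your bookkeeping is in fact more careful than the paper's write-up, which bounds the erasure probability of the repeated tester by $Q_k^2 t/q^k$ and omits the $r^2$ repetition factor. Requiring $q^k \ge C Q_k^2 t$ for a large constant $C$ fixes this. For $t \ge 1$, the same condition also gives the rate hypothesis on $k$ that \cref{th: semi-sample lifted} needs.
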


\paragraph{Optimal Soundness Testing and Comparison with \cite{MinZ}.} \label{par:optimal testing}

In \cite{MinZ}, a similar tester is analyzed in the online-erasure model and a version of \cref{thm: main erasure} is obtained with a higher query complexity\footnote{The result of~\cite{MinZ} states $3d+3$ in the exponent, but their analysis easily shows an exponent of $3d$.} of $O_q(\log(t))^{3d}$. We discuss the differences between our result and the main result of \cite{MinZ}. The tester in \cite{MinZ} can be viewed as $\tester{k}$ with a modified third step. Instead of checking if there is a degree $d$-function agreeing with $f$ on all queried points, the tester of \cite{MinZ} finds a structured dual codeword, over domain $U$, that is supported on the queried points, and checks if $f|_U$ is indeed dual to this codeword. That is, \cite{MinZ} finds a function $h: U \to \F_q$ such that every $f: U \to \F_q$ with $\deg(f) \leq d$ has inner product $0$ with $h$ (i.e. $\sum_{x \in U} f(x) h(x) = 0$) and checks whether this is the case. Note that computing this inner product only requires knowing $f(x)$ for $x$ in the support of $h$, hence the reason for requiring $h$ to be supported on the queried points. From here, \cite{MinZ} rely on the generic soundness result of \cite{KS08}, which 
guarantees, under certain conditions satisfied here, that a $Q$-query tester rejects constant-far functions with probability roughly $Q^{-2}$.
Thus, while their base tester makes $O_q(\log(t))^{d}$ queries, the soundness they obtain for this test requires $O_q(\log(t))^{2d}$ further repetitions to reject with constant probability, and hence leads to a version of \cref{thm: main erasure} with $O_q(\log(t))^{3d}$.

By adapting the third step of $\tester{k}$ to accept if and only if there is a degree $d$ function explaining the queried points, we are able to obtain (in \cref{thm:main1}) a better soundness analysis for $\tester{k}$ which rejects constant far functions with constant probability (and hence leads to \cref{thm: main erasure} with the stated query complexity). In fact, our soundness analysis for the $\tester{k}$ is \emph{optimal} up to constant factors. As described in \cite{kaufman2025improved}, a tester making $Q$ queries is called an \emph{optimal tester} if it achieves rejection probability $\min\set{\Omega(1), \Omega(Q \eps)}$ in the soundness case. Such testers are optimal in the sense that this is the best probability with which one can hope to reject an $\eps$-far function with $Q$ queries. Indeed, thinking of an $\eps$-far function as a valid function with $\eps$-fraction of its entries perturbed, the probability of even querying a perturbed point, and hence having any hope of rejecting, is at most $\min\set{\Theta(1), Q \eps}$.

Showing that a fixed tester has such optimal soundness is typically a difficult task. For example, the first optimal Reed-Muller testers were only shown relatively recently: in \cite{BKSSZ10} when $|\F| = 2$, and in \cite{HaramatySS13, kaufman2025improved, MinZCodes} when $|\F| > 2$. The works mentioned only show that the specific tester they consider is optimal, and do not easily extend to other testers such as our semi-sample-based ones or the general testers in \cite{KS08}.

Finally, an additional factor of $(\Theta(d))^d$ in the query complexity comes from a precise choice of $k = \log_q t + \Theta(d)$ (rather than $k = \Theta(d \log_q t)$). This results in the main term of $(\Theta(\frac{\log t}{d}))^d$ in the query complexity, matching with our lower bound in \Cref{thm: erasure lower bound inf}.

\subsection{Technical Overview}
\label{sec:intro-techniques}

We describe our techniques towards our main theorem for semi-sample-based Reed-Muller testers \cref{thm:main1}. The approach for lifted affine-invariant codes is similar. Our analysis follows the inductive approach pioneered by \cite{BKSSZ10}. However, for our purposes, we need to make a few key enhancements, which use techniques from the PCP literature.
In particular, we show a hyperplane agreement lemma (in the low error regime), that is motivated by the Plane versus Point analysis of \cite{RazS97}, as well as the high-error hyperplane agreement lemmas that were key to recent breakthroughs on 2-to-2 games \cite{KhotMS18} and optimal alphabet-soundness tradeoff PCPs \cite{MZpcp}. Similar ideas were also recently used in the testing of direct-sums \cite{WYZ}.

\paragraph{Proof outline.}
Let $s$ be the number of queries made, and consider $f$ of distance exactly $\eps$ from having degree $d$. Let $f = g + h$ be such that $g$ is the degree $d$ function that is closest to $f$ and $h$ is a function with fractional support size $\eps$.
As in \cite{BKSSZ10}, we split the soundness analysis into a small distance case, say where $\eps < \frac{1}{10s}$, and a large distance case, where $\eps \geq \frac{1}{10s}$. In the small distance case, one can show directly that with probability roughly $s \cdot\eps$, the tester will only query one input from the support of $h$ and in this case, one can easily argue that the tester will indeed reject.

More difficult to analyze is the large distance case, where one relies on induction on the ambient dimension $n$ and assumes good soundness is guaranteed for all smaller ambient dimensions.
Here we want to show that the test rejects with some constant probability, say $1/100$.
The key piece of the inductive step is a lemma of the following form:
 \begin{lemma} \label{lm: hyp intro}
            If there are $M = q^{\Theta(1)} \cdot s$ hyperplanes $W_1,\dots, W_M$ on which the restriction of $f$ to any $W_i$ is 
        $\frac{1}{100s}$-close to degree $d$, then $f$ is $\frac{1}{10s}$-close to degree $d$.
\end{lemma}
Observe that we can perform the semi-sample-based tester of \cref{thm:main1} by first choosing a hyperplane $W$ and then performing the tester on the restriction of $f$ to $W$, denoted by $f|_W$, which is a function over a space of dimension $n-1$. Thus, we can write the rejection probability as $\Prob{}{\text{$f$ is rejected}} = \E_{W}\Pr[f|_W \text{ is rejected}]$. If there are more than $M$ hyperplanes on which $f|_W$ is $\frac{1}{100s}$-close to degree $d$, then we get that $f|_W$ is $\frac{1}{10s}$-close to degree $d$ and can apply the small distance analysis.
On the other hand, if there are fewer than $M$ such hyperplanes, then with probability roughly $1 - M/q^{n-1}$ over the choice of $W$, the function $f|_W$ is still relatively far from degree $d$, and by the inductive hypothesis we reject with some constant probability.
In order to ensure that we still get constant soundness overall though, we must ensure that $M/q^{n-1}$ is a small constant, and in particular, this means we must start our induction at dimension roughly $n = \log_q(M)$. Fortunately, \cref{lm: hyp intro} gives a strong enough bound on $M$ above for the Reed-Muller code and take $n$ roughly $\Theta(k)$ to get \cref{thm:main1}. 

We note that similar statements to \cref{lm: hyp intro} are shown in \cite{BKSSZ10, HaramatySS13}, but neither of these is sufficient for our purposes.
The reason is because \cite{BKSSZ10} only works over field size $q = 2$ while \cite{HaramatySS13} requires obtains $M = c(q) \cdot s$ for some $c(q)$ with tower type dependence on $q$ (rather than polynomial in $q$ as in \cref{lm: hyp intro}), due to the use of the density Hales-Jewett theorem. Additionally, these arguments do not apply beyond Reed-Muller codes, whereas we ultimately want our techniques to generalize to other affine-invariant codes and function families. 

Our proof of \cref{lm: hyp intro} instead employs techniques from agreement testing and PCPs. In \cite{RazS97}, for example, a similar statement is shown except with planes in place of hyperplanes and non-trivial agreement with a degree $d$ function on the planes, instead of being close to degree $d$ on the planes. Similarly \cite{KhotMS18, MZpcp} show a similar statement to \cref{lm: hyp intro}, but with non-trivial agreement on the hyperplanes. Thus, \cref{lm: hyp intro} can be viewed as a low-error version of the lemmas in \cite{RazS97, KhotMS18, MZpcp}, and at a high level, the proof proceeds in the same way.
We consider the \emph{consistency graph} over the $M$ hyperplanes, $W_1, \ldots, W_M$, and say that two vertices $i, j \in [M]$ are adjacent if the local degree $d$ functions on $W_i, W_j$, call them $f_i, f_j$, agree on the intersection $W_i \cap W_j$.
By showing that this graph is ``nearly  transitive'' we find a large clique, which corresponds to many of the $f_i$'s agreeing in pairs. From here we arbitrarily extrapolate the $f_i$'s from the large clique to some global function $F$ and show that 1) $F$ is close to $f$; and 2) $F$ is close to degree $d$. By the triangle inequality, this establishes \cref{lm: hyp intro}.

While \cref{lm: hyp intro} on its own right is not hard to establish with existing techniques, we believe its statement and the way we use to obtain testing results are generic enough and could find more applications, making it one novel aspect of our work. In particular, using the same outline we extend our results to families of affine-invariant properties.
The main bulk of our strategy follows through, save for step 2). 

In order to accomplish 2), we rely on known local testing results to show that $F$ passes a local test with high probability and is thus close to the desired family of functions. Indeed, if all the tester's queries fall within one of the $W_i$'s above in the clique, then the test surely passes, as here $F$ is consistent with the legitimate function $f_i$. If $M$ is sufficiently large, then this probability is high.
That is, step 2) relies on a known (typically structured) test, and leads the way to obtaining unstructured tests for the same property.

Currently, we require some lower bound on the soundness of known testers for the analysis of our tester to work (specifically in step 2). For future work, one could hope to perform step 2) algebraically for some fixed properties, as was done for Reed Muller codes in \cite{BKSSZ10, HaramatySS13}. For non-linear affine-invariant properties, such as some of those considered in \cite{BFL, bfhhl}, such an approach could yield significantly improved soundness.

\section{Preliminaries}

\paragraph{Notation.}
In general, we let $\F$ denote a finite field with characteristic $p$ and cardinality $\card{\F} = q = p^{\ell}$.
At times we will use a subscript such as $\F_q$ or $\F_2$ to specify a specific field size, but we omit the subscript when it is clear from context. For a function $f: \F^n \to \F$, it is well known that there is an $n$-variate polynomial with degree in each variable at most $|\F|-1$ whose evaluation over $\F^n$ is equal to $f$. We let $\deg(f)$ denote the total degree of this polynomial. Through the paper, for the notion of distance, we will use the fractional Hamming distance. For two functions $f, g:\F^n \to \F$, the fractional Hamming distance is
\[\dist(f,g) := \frac{ \card{\set{x\in\F^n: f(x) \neq g(x)}}}{q^n}.\]
We extend the definition to a family of functions $\mc{F}\subseteq \{\F^n\to\F\}$ by 
\[\dist(f,\mc{F}) := \min_{g\in\mc{F}} \dist(f,g)\ .\]
The function families that we consider typically form an error-correcting code, meaning any two members of the family are far apart in fractional hamming distance. For a function family $\mc{F}$, we define its distance as
\[
\delta(\mc{F}) = \min_{f, g \in \mc{F}, f \neq g} \dist(f, g).
\]

We let $\mc{G}_k$ be the set of linear subspaces of dimension $k$ inside of $\F^n$, where the ambient space $\F^n$ will always be clear from context. We denote the number of distinct linear subspaces of dimension $k$ inside a linear space of dimension $n$ by
\[
    \qbin{n}{k} := \card{\mc{G}_k} = \prod_{i=0}^{k-1}\ \frac{q^n-q^i}{q^k-q^i} .
\]
We will also work with affine subspaces of dimension $k$, which are sets of the form $x + U = \{x + y\; |\; y \in U \}$, where $U\in\mc{G}_k$. We use $\dim(U)$ to denote the dimension of $U$ when it is either an affine or linear subspace. It will always be clear from context which kind of subspace $U$ is.

\paragraph{Affine Invariance.}
A transformation $T: \F^n \to \F^n$ is called an affine transformation if it can be expressed as
\[
    T(x) = Mx + b,
\]
for some matrix $M \in \F^{n \times n}$ and affine shift $b \in \F^n$. 

A family of functions $\mc{C}_n \subseteq \{\F^n \to \F\}$ is called affine invariant if for any affine transformation, even non-invertible, $T: \F^n \to \F^n$ the following holds:
\[
f \in \mc{C}_n \implies f \circ T \in \mc{C}_n,
\]
where $f \circ T$ is the function given by $f \circ T (x) = f(T(x))$. Note that we require the above to hold even for $T$ non-invertible.

For a $k$-dimensional subspace $U$ (either linear or affine), we let $f|_U: \F^k \to \F$ denote the restriction of $f$ to $U$. For our purposes, we will not care about the parametrization of $U$, so we can define $f|_U = f \circ T$ by taking any affine transformation $T: \F^k \to \F^n$ whose image is $U$.
For every dimension $k$, let 
\[
\mc{C}_k = \{f|_U \; | \; f\in \mc{C}_n, U \in \mc{G}_k \}.
\]
One can check that if $\mc{C}_n$ is affine invariant, then so is $\mc{C}_k$. A common local test for $\mc{C}_n$, which we call the $k$-space test, goes as follows:
\begin{itemize}
    \item Choose $U \in \mc{G}_k$ uniformly at random.
    \item Accept if $f|_U \in \mc{C}_k$. Otherwise, reject.
\end{itemize}

Note that since $\mc{C}_k$ is affine invariant, it does not matter what parametrization we choose for $f|_U$ when performing the test. We remark that typically $\mc{C}_k$ is defined using affine-subspaces (or flats), instead of only linear subspaces as we do above, and the local test considered is usually based on restrictions to flats, rather than to linear subspaces. This test is typically called the $(k-1)$-flat test, but note that we can simulate a $(k-1)$-flat test using a $k$ dimensional linear subspace test. Indeed, instead of checking $f|_{U'}$ for a random $(k-1)$-flat $U'$, we can instead sample a random linear subspace $U$ of dimension $k$ containing $U'$. Over random $U'$, $U$ is uniformly random in $\mc{G}_k$, and hence performing the $k$-space test above with $f|_U$ is just as good as the $(k-1)$-flat test. 

\paragraph{The Reed-Muller Code.}
Given a finite field vector space $\F_q^n$, we refer to the set of degree at most $d$ polynomials as the degree $d$ Reed-Muller code and denote this by $\RM[n,q,d]$. 

We will need the following well-known facts about the rate, distance and local testability of Reed-Muller codes.
\begin{fact} \label{fact: rm distance}
    The code $\RM[n,q,d]$ has distance $q^{-\frac{d}{q-1}}$ and rate at most $\binom{n + d}{d}/q^n$. That is, for any two distinct functions $f, g \in \RM[n,q,d]$,
    \[
    \Pr_{x \in \F_q^n}[f(x) \neq g(x)] \geq q^{-\frac{d}{q-1}},
    \]
    and 
    \[
    \frac{\log_q |\RM[n,q,d]|}{q^n} \leq \frac{\binom{n + d}{d}}{q^n}.
    \]
    For the rate we are using the fact that $\RM[n,q,d]$ is spanned by all total degree at most $d$ monomials over $n$ variables and there are at most $\binom{n+d}{d}$ such monomials.
\end{fact}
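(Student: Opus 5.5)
The fact has two parts, and I would prove them separately: the distance bound by induction on the number of variables, and the rate bound by counting monomials.

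\textbf{Distance.} Since $\RM[n,q,d]$ is an $\F$-linear space, for distinct $f,g$ the difference $P=f-g$ is a nonzero member of the code. So it suffices to show that every nonzero $P$ of total degree at most $d$, reduced so that each individual degree is at most $q-1$, satisfies $\Pr_{x}[P(x)\neq 0]\ge q^{-d/(q-1)}$. I would prove this by induction on $n$.
\begin{itemize}
\item Base case $n=0$: $P$ is a nonzero constant, so the probability is $1$.
\item Inductive step: write $P=\sum_{i=0}^{e} x_n^i P_i(x_1,\dots,x_{n-1})$, where $e\le \min(q-1,d)$ is the largest index with $P_i\not\equiv 0$. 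Then $P_e$ is nonzero, reduced, and of total degree at most $d-e$. By induction, a uniformly random prefix $x'=(x_1,\dots,x_{n-1})$ has $P_e(x')\neq 0$ with probability at least $q^{-(d-e)/(q-1)}$.
\item For any such prefix, $x_n\mapsto P(x',x_n)$ is a nonzero univariate polynomial of degree $e$. It has at most $e$ roots, so a uniform $x_n$ avoids them with probability at least $1-e/q$.
\end{itemize}
Combining the last two points gives
\[
\Pr[P\neq 0]\ \ge\ q^{-(d-e)/(q-1)}\Bigl(1-\frac{e}{q}\Bigr).
\]

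The remaining step, which is the only mildly delicate point, is the inequality $1-\frac{e}{q}\ge q^{-e/(q-1)}$ for $0\le e\le q-1$. I would obtain it from convexity. The map $x\mapsto q^{-x/(q-1)}$ is convex on $[0,q-1]$, takes the value $1$ at $x=0$, and takes the value $1/q$ at $x=q-1$. Its chord between these endpoints is exactly $x\mapsto 1-x/q$. A convex function lies below its chord, so the inequality holds throughout the range. Substituting it yields $\Pr[P\neq0]\ge q^{-(d-e)/(q-1)}\,q^{-e/(q-1)}=q^{-d/(q-1)}$, which completes the induction.

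This argument never needs to split $d$ as $a(q-1)+b$. It also applies unchanged when $d\ge n(q-1)$, where the bound is simply weaker than the truth.

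\textbf{Rate.} Every function $\F^n\to\F$ of total degree at most $d$ is an $\F$-linear combination of monomials $x_1^{a_1}\cdots x_n^{a_n}$ with $\sum a_i\le d$. A stars-and-bars count (adding a slack variable $a_0=d-\sum a_i\ge 0$) shows there are at most $\binom{n+d}{d}$ such monomials. Hence $\RM[n,q,d]$ is an $\F$-subspace of dimension at most $\binom{n+d}{d}$, so $\log_q|\RM[n,q,d]|\le\binom{n+d}{d}$. Dividing by $q^n$ gives the stated rate bound.
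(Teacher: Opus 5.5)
Your proof is correct. The rate half is exactly the justification the paper itself gives: the code is spanned by the at most $\binom{n+d}{d}$ monomials of total degree at most $d$, so $\log_q|\RM[n,q,d]|\le\binom{n+d}{d}$.

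For the distance half, the paper offers no argument and treats the bound as folklore. The usual justification is to quote the classical exact minimum distance $(q-b)q^{-a-1}$ of the generalized Reed--Muller code, where $d=a(q-1)+b$ with $0\le b<q-1$ (Kasami--Lin--Peterson). One then observes that $(1-b/q)\,q^{-a}\ge q^{-a-b/(q-1)}$, which is precisely your chord inequality. Your Schwartz--Zippel-style induction peels off $x_n$ and pays a factor $1-e/q\ge q^{-e/(q-1)}$ for it. It is self-contained and reaches the lower bound directly, without computing the exact distance or splitting $d$.

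What your argument does not give is tightness. The phrase ``has distance $q^{-d/(q-1)}$'' in the statement is really only a lower bound when $(q-1)\nmid d$. This costs nothing here, because the paper only ever uses $\delta_0=q^{-d/(q-1)}$ as a lower bound on the distance between distinct codewords, in \cref{claim:sample_based_ub general} and in the proof of \cref{lm: hyperplane general}.
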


Finally, given field size $q$ and degree $d$, we define the testing dimension of $\RM[n,q,d]$ to be 
\[
t_{q,d} = \ceil{\frac{d+1}{q-q/p}} + 1.
\]
We will use $t_{q,d}$ to refer to the above quantity throughout the paper. The following result from \cite{kaufman2025improved} shows the following soundness for the local test which restricts to a random $t_{q,d}-1$ affine subspace.

\begin{theorem}[\protect{\cite[Theorem 1.3]{kaufman2025improved}}] \label{thm: flat test rm}
    There exists a constant $\KMconst$ such that if $f$ is $\eps$-far from $\RM[n,q,d]$, then
\[
\Pr_{U' \subset \F_q^n}[\deg(f|_{U'}) > d] \geq q^{-\KMconst}\cdot\min\{1, q^{t_{q,d}}\eps\},
\]
where $U'$ is a random \emph{affine} subspace of dimension $t_{q,d}-1$.
\end{theorem}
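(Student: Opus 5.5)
The plan is induction on the ambient dimension $n$, in the style of \cite{BKSSZ10,HaramatySS13}. We split according to whether $\eps$ is small or large compared to $q^{-t_{q,d}}$. Write $t=t_{q,d}$ and let $U'$ denote a uniformly random affine subspace of dimension $t-1$.

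\textbf{Small distance, $\eps \le c\,q^{-t}$ for a small constant $c$.} Decompose $f=g+h$, where $g\in\RM[n,q,d]$ is closest to $f$ and $h$ has fractional support exactly $\eps$.
\begin{itemize}
\item Each point of $\F^n$ lies in $U'$ with probability $q^{t-1-n}$.
\item Any two distinct points lie together in $U'$ with probability about $q^{2(t-1)-2n}$, since the flat is roughly pairwise independent.
\item By inclusion--exclusion, $U'$ meets $\mathrm{supp}(h)$ in exactly one point with probability at least $q^{t-1}\eps-O\bigl((q^{t-1}\eps)^2\bigr)=\Omega(q^{t-1}\eps)$.
\end{itemize}
On that event, $f|_{U'}=g|_{U'}+a\cdot\ind{x_0}$ with $a\neq 0$. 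The indicator of a single point of $\F^{t-1}$ has degree $(q-1)(t-1)$, and
\[
(q-1)(t-1)\;\ge\;\frac{(q-1)(d+1)}{q-q/p}\;\ge\; d+1 .
\]
Since $\deg(g|_{U'})\le d$, this gives $\deg(f|_{U'})>d$, so the test rejects. The rejection probability is therefore at least $\Omega(q^{t-1}\eps)\ge q^{-1-O(1)}\min\{1,q^{t}\eps\}$, which fits inside $q^{-\KMconst}$.

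\textbf{Large distance, $\eps\ge c\,q^{-t}$.} Here we must show that the rejection probability is at least $q^{-\KMconst}$.
\begin{itemize}
\item \emph{Base case.} When $n\le C t$ for a suitable constant $C$, the number of $(t-1)$-flats is $q^{O(t n)}$. This is too crude to use directly. Instead, in this range we aim to show that the $(t-1)$-flats that reject form a $q^{-O(1)}$ fraction, using the characterization that $\RM[n,q,d]$ is exactly the set of functions whose restriction to every $(t-1)$-flat has degree at most $d$.
\item \emph{Inductive step.} Write the test as first choosing a random affine hyperplane $W$ and then running the test inside $W$. This gives
\[
\Pr[\text{reject}]=\E_W\Pr[\text{reject on } f|_W].
\]
If all but a $q^{-\Omega(1)}$ fraction of hyperplanes have $f|_W$ at distance at least $c\,q^{-t}$ from degree $d$, the inductive hypothesis gives the bound directly. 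One has to track the $1-q^{-\Omega(1)}$ loss per dimension so that it does not accumulate; as in \cite{BKSSZ10}, we handle this by proving a slightly stronger hypothesis with slack.
\item Otherwise, many hyperplanes $W$ have $f|_W$ that is $c\,q^{-t}$-close to degree $d$. We then invoke a hyperplane agreement lemma of the form of \cref{lm: hyp intro}. We build the consistency graph on these hyperplanes, find a large near-clique of mutually agreeing local polynomials, and extend them to a global $F$. We then show $F$ is close to $f$ and that $F$ has degree at most $d$; for the latter, note that two hyperplane restrictions of degree $d$ which agree on their intersections interpolate to a degree-$d$ polynomial once $q^{\Omega(1)}$ of them are in general position. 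This yields $\eps<c\,q^{-t}$, placing us back in the small-distance case.
\end{itemize}

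\textbf{Main obstacle.} The hard step is proving the hyperplane agreement lemma with the number of required hyperplanes only $q^{O(1)}\cdot q^{t}$, rather than the tower-type dependence on $q$ that the density Hales--Jewett argument of \cite{HaramatySS13} gives. My first attempt would be the Raz--Safra-style near-transitivity argument for the consistency graph, using that two distinct degree-$d$ polynomials on a hyperplane differ on at least a $q^{-d/(q-1)}$ fraction (\cref{fact: rm distance}). If that loses too much, I would fall back on spectral or hypercontractive estimates on the affine Grassmann graph to control how the bad hyperplanes are distributed.
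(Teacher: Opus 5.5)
The paper does not prove this statement. It imports it from \cite{kaufman2025improved} and then feeds it, via \cref{thm: rm local testability}, into its own Hyperplane Agreement Lemma. Your small-distance case is fine: $U'$ contains exactly one error point with probability $\Omega(q^{t-1}\eps)$, and the check $(q-1)(t-1)\ge d+1$ is correct. The large-distance case, however, has a genuine gap at exactly the step you flag as the main obstacle.

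The Raz--Safra part does go through without circularity. \cref{lm: many edges}, \cref{lm: bound beta}, and the first item in the proof of \cref{lm: hyperplane general} use only the code distance and \cref{lm: sampling points v2}. They give a clique of $\Omega(M)$ pairwise-consistent hyperplanes and a global $F$ that is $3\eps$-close to $f$. What is missing is a reason why $F$ is close to $\RM[n,q,d]$.

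The paper gets this by running the $t$-space test on $F$ and invoking its soundness function $c(\cdot)$. That soundness function is the very theorem you are proving, so this route is circular. Your substitute, that pairwise-consistent degree-$d$ pieces glue once enough of them are in general position, does not apply. The clique is an arbitrary set of hyperplanes determined by $f$, and pairwise consistency does not imply gluing for non-generic configurations. For instance, take $d=1$ and let $f_1=x_2$ on $\{x_1=0\}$, $f_2=0$ on $\{x_2=0\}$, and $f_3=0$ on $\{x_1+x_2=0\}$. All pairwise intersections equal $\{x_1=x_2=0\}$, where all three functions vanish. Yet the only affine function vanishing on the last two hyperplanes is $0$, so no global affine function restricts to $f_1$.

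Extracting enough structured configurations from a dense set of close hyperplanes is exactly where \cite{HaramatySS13} needed density Hales--Jewett. Doing it with polynomial dependence on $q$ is the real content of \cite{kaufman2025improved}, which applies global hypercontractivity on the affine Grassmann graph to the set of rejecting flats and then locally corrects $f$. Naming hypercontractivity as a fallback is not an argument.

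Second, the base case is missing. The inductive loss is roughly $q^{O(1)}q^{t}/q^{n}$, so the induction can only start at some $n_0=t+O(1)$. There you need every $f$ on $\F^{n_0}$ that is $c\,q^{-t}$-far from $\RM$ to be rejected by a $q^{-O(1)}$ fraction of $(t-1)$-flats. That is a nontrivial instance of the theorem itself. The local characterization only guarantees one rejecting flat, i.e.\ probability $q^{-\Theta(t n_0)}$, and you give no way to amplify it.

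Two smaller points:
\begin{itemize}
\item The per-dimension loss must be $O(Kq^{-n})$ with $K$ independent of $n$. A fixed $q^{-\Omega(1)}$ loss per dimension cannot be absorbed by any slack.
\item Closeness $\eps$ on many hyperplanes only yields $4\eps$-closeness globally, so you need two distance thresholds, as the paper does with $16\eta/s$ versus $1/(2s)$.
\end{itemize}
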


As discussed above, we can also simulate the test above by restricting to a linear subspace of one higher dimension. Its soundness is an immediate consequence of \cref{thm: flat test rm}, and we give the short proof below.
\begin{theorem}\label{thm: rm local testability}
    For the same constant $\KMconst$ from \Cref{thm: flat test rm} it holds that if $f$ is $\eps$-far from $\RM[n,q,d]$, then
\[
    \Pr_{U \subset \F_q^n}[\deg(f|_{U}) > d] \geq q^{-\KMconst}\cdot\min\{1, q^{t_{q,d}}\eps\},
\]
where $U$ is a random \emph{linear} subspace of dimension $t_{q,d}$.
\end{theorem}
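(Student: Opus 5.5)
The plan is to couple the two tests: generate the linear subspace $U$ from a random affine subspace $U'$, and then observe that a low-degree failure on $U'$ forces one on $U$. Concretely, I would sample $U'$ as a uniformly random affine subspace of dimension $t_{q,d}-1$ in $\F_q^n$. Then I would let $U$ be a uniformly random linear subspace of dimension $t_{q,d}$ containing $U'$. When $U'$ does not pass through the origin, $U$ is just the linear span of $U'$ and is determined. When $0 \in U'$, it is a uniformly random $t_{q,d}$-dimensional linear subspace extending $U'$.

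The first step is to check that the resulting $U$ is uniformly distributed in $\mc{G}_{t_{q,d}}$. I would argue by symmetry. The distribution of $U'$ is invariant under every invertible linear map $A$ of $\F_q^n$, since $A$ maps affine subspaces of a fixed dimension bijectively to themselves. The conditional rule for choosing $U$ given $U'$ is equivariant under $A$. Hence the law of $U$ is invariant under $GL_n(\F_q)$. This group acts transitively on $\mc{G}_{t_{q,d}}$, so the law of $U$ is uniform.

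The second step is monotonicity. Since $U' \subseteq U$, the restriction $f|_{U'}$ is a restriction of $f|_U$ to an affine subspace. If $\deg(f|_U) \le d$, then composing with an affine parametrization of $U'$ does not raise total degree, so $\deg(f|_{U'}) \le d$. This holds after reducing to the representative with individual degrees at most $q-1$, because that reduction never raises total degree. Therefore the event $\deg(f|_{U'})>d$ is contained in the event $\deg(f|_U)>d$.

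Combining the two steps,
\[
\Pr_{U}[\deg(f|_U)>d] \ \ge\ \Pr_{U'}[\deg(f|_{U'})>d] \ \ge\ q^{-\KMconst}\min\{1,q^{t_{q,d}}\eps\},
\]
where the last inequality is \Cref{thm: flat test rm}. The only point that needs care is the uniformity claim in the first step, in particular the case $0\in U'$ where $U$ is chosen randomly. The equivariance argument handles this case uniformly, so I expect no real obstacle.
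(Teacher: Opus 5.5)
Your proof is correct and follows essentially the same route as the paper: sample a uniformly random $(t_{q,d}-1)$-flat $U'$, extend it to a random linear $t_{q,d}$-space $U \supseteq U'$ (which is uniform in $\mc{G}_{t_{q,d}}$), and invoke \Cref{thm: flat test rm}. Two of your steps go beyond what the paper writes. You justify the uniformity of $U$ via $GL_n(\F_q)$-equivariance, where the paper only asserts it. You also state the containment of events $\deg(f|_{U'})>d \Rightarrow \deg(f|_U)>d$ explicitly, whereas the paper's one-line chain writes an equality where only this inequality actually holds.
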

\begin{proof}
We can sample a random $t_{q,d}$-space by first choosing a $(t_{q,d}-1)$-flat $U'$, and then choosing a random linear subspace $U$ containing $U'$. Note that $U$ is a uniformly random $t_{q,d}$-space. It follows that,

   \[
    \Pr_{U \subset \F_q^n}[\deg(f|_{U}) > d] = \Expc{U' \subset \F_q^n}{\Pr_{U \supset U'}[\deg(f|_{U'}) > d]} = \Pr_{U' \subset \F_q^n}[\deg(f|_{U'}) > d] \geq q^{-\KMconst}\cdot\min\{1, q^{t_{q,d}}\eps\}.
   \]
\end{proof}

\paragraph{Lifted Affine-Invariant Codes.} Lifted affine-invariant codes were introduced by Guo, Kopparty, and Sudan \cite{GKS}. We formally define the codes below.

\begin{definition}[\protect{\cite[Definition 1.1]{GKS}}]
    Let $\mc{C} \subseteq \{\F^r \to \F \}$ be a linear affine-invariant family of functions. Then the lifted code $\mc{C}^{r \nearrow n}$ is the following set of functions:
    \[
    \{f: \F^n \to \F \; | \; f|_U \in \mc{C}_r, \forall U \text{ affine }, \dim(U) = r \}.
    \]
\end{definition}

When $\mc{C}$ in the above definition is clear, we will often use $\mc{C}_n$ to refer to the code $\mc{C}^{r \nearrow n}$. The code $\mc{C}$ is often referred to as the base code of $\mc{C}^{r \nearrow n}$ and $\mc{C}^{r \nearrow n}$ is referred to as the $n$-dimensional lift of $\mc{C}$. Lifted affine-invariant codes generalize Reed-Muller codes, indeed it can be seen that $\RM[n,q,d]$ is the $n$-dimensional lift of $\RM[t_{q,d}-1, q, d]$. In \cite{kaufman2025improved}, it was shown that a natural test based on restrictions to $r$-dimensional affine subspaces is a local test for $\mc{C}^{r \nearrow n}$ with optimal soundness. Their result also implies similar soundness for a test based on restricting to $(r+1)$-dimensional linear subspaces. We state this version of their result below as it is more convenient for us.

\begin{theorem}[\protect{\cite[Theorem 1.4]{kaufman2025improved}}] \label{thm: testing lifted km}
Given $f: \F^n \to \F$, the local test obtained by choosing a uniformly random linear subspace of dimension $r+1$ has the following guarantees:
\begin{itemize}
    \item Completeness: If $f \in \mc{C}_n$ then $\Pr_{U}[f|_U \in \mc{C}_{r+1}] =1$.
    \item Soundness: If $f$ is $\eps$-far from $\mc{C}_n$, then,
    \[
    \Pr_{U}[f|_U \notin \mc{C}_{r+1}] \geq q^{-O(1)}\min\left\{1, q^{r}\eps \right\}.
    \]
\end{itemize}
\end{theorem}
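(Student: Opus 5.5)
The plan is to deduce this linear-subspace version from the affine-flat version proved in \cite{kaufman2025improved}, in the same way that \cref{thm: rm local testability} was deduced from \cref{thm: flat test rm}. The one extra ingredient is a structural fact about lifted codes. For every linear (or affine) subspace $U$ of dimension $m \ge r$ we have
\[
\mc{C}_m \;=\; \set{f|_U : f \in \mc{C}_n} \;=\; \mc{C}^{r \nearrow m}.
\]
I would verify this first.

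For the inclusion $\subseteq$, take $f \in \mc{C}_n$. Every $r$-dimensional affine subspace of $U$ is also an $r$-dimensional affine subspace of $\F^n$. Hence every $r$-dimensional restriction of $f|_U$ lies in $\mc{C}_r$, so $f|_U \in \mc{C}^{r\nearrow m}$.

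For the inclusion $\supseteq$, take $g \in \mc{C}^{r\nearrow m}$, viewed as a function on $U \cong \F^m$. Let $P:\F^n\to\F^m$ be an affine map that is surjective and whose restriction to $U$ is the identity parametrization. Set $f = g\circ P$. Then for any $r$-flat $A \subseteq \F^n$, the function $f|_A$ equals $g|_{P(A)}$ composed with an affine (possibly non-invertible) map $\F^r\to\F^r$. Since $P(A)$ lies in some $r$-flat of $\F^m$, $g|_{P(A)}$ belongs to $\mc{C}_r$. Affine invariance of $\mc{C}_r$, which the paper requires even for non-invertible maps, then gives $f|_A\in\mc{C}_r$. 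So $f\in\mc{C}_n$ and $f|_U = g$.

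Completeness then follows immediately from the inclusion $\mc{C}_n|_U \subseteq \mc{C}_{r+1}$.

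For soundness, I would take the result of \cite{kaufman2025improved} in its original form: for a uniformly random $r$-dimensional affine flat $U'$,
\[
\Pr_{U'}[f|_{U'}\notin \mc{C}_r] \ge q^{-O(1)}\min\{1,q^r\eps\}.
\]
Then I sample $U$ by first drawing $U'$ and then drawing a uniformly random linear $(r+1)$-space $U \supseteq U'$. Such a $U$ is unique when $0\notin U'$. When $0\in U'$, I pick a random linear $(r+1)$-space containing the linear space $U'$.

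The distribution of $U$ is invariant under $GL_n(\F)$, because the distribution of $U'$ is invariant and the extension step commutes with linear maps. Since $GL_n(\F)$ acts transitively on $\mc{G}_{r+1}$, the subspace $U$ is uniform on $\mc{G}_{r+1}$.

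Finally, if $f|_{U'}\notin\mc{C}_r$, then $f|_U\notin\mc{C}_{r+1}$. Indeed, by the structural fact every element of $\mc{C}_{r+1}=\mc{C}^{r\nearrow (r+1)}$ restricts to $\mc{C}_r$ on the $r$-flat $U'\subseteq U$. Averaging over $U'$, exactly as in the proof of \cref{thm: rm local testability}, gives the claimed bound.

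The only step that needs care is the identification $\mc{C}_{r+1}=\mc{C}^{r\nearrow(r+1)}$. Specifically, the extension argument must use that affine invariance is assumed for non-invertible maps, since $P$ collapses flats. The rest is a direct coupling.
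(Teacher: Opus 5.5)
Your proposal is correct and follows the route the paper intends. The paper gives no separate proof; it only says the linear-subspace version follows from the affine-flat result of \cite{kaufman2025improved}, and the implicit argument is the same coupling (a random $r$-flat, then a random linear $(r+1)$-space containing it) used for \cref{thm: rm local testability}, which you carry out with more care about why $U$ is uniform. One small remark: soundness only needs the easy direction (every element of $\mc{C}_{r+1}$ is a restriction of a lifted codeword, so it lies in $\mc{C}_r$ on every $r$-flat), so the extension argument via $P$ and non-invertible affine invariance, which you flag as the delicate step, is not actually required.
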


\paragraph{Distinguishing all members of a function family.}

\begin{claim}\label{claim:sample_based_ub general}
Fix $f: \F_q^n \to \F_q$ and let $\mathcal{F} \subseteq \{ \F_q^n \to \F_q \}$ be a family of functions with size $|\mathcal{F} |=N$ and distance $\delta(\mc{F}) = \delta$. Then choosing a set $S \subseteq \F_q^n$ of size
$M$ uniformly at random with replacement, with probability at least,
\[
1-(1-\delta)^M \cdot \binom{N}{2}\ge 1-e^{-\delta M}\cdot N^2.
\]
every pair of distinct functions in $\mc{F}$ disagree on at least one point in $\mc{F}$. As a consequence, there is at most one $F \in \mc{F}$ such that $f(x) = F(x), \forall x \in S$. 
\end{claim}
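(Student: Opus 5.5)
The plan is a union bound over all pairs of distinct functions in $\mc{F}$. Fix an unordered pair $\{F, G\}$ of distinct members of $\mc{F}$. Since $\dist(F,G) \ge \delta(\mc{F}) = \delta$, a single uniformly random point $x \in \F_q^n$ satisfies $F(x) = G(x)$ with probability at most $1-\delta$. The $M$ points of $S$ are drawn independently (with replacement). Hence the probability that $F$ and $G$ agree on every point of $S$ is at most $(1-\delta)^M$.

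There are $\binom{N}{2}$ such pairs. By a union bound, the probability that \emph{some} pair of distinct functions agrees on all of $S$ is at most $(1-\delta)^M\binom{N}{2}$. So with probability at least $1-(1-\delta)^M\binom{N}{2}$, every pair of distinct functions disagrees on at least one point of $S$. (The statement says ``one point in $\mc{F}$''; this should read ``in $S$''.)

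For the displayed inequality, use $1-\delta \le e^{-\delta}$ to get $(1-\delta)^M \le e^{-\delta M}$, and $\binom{N}{2} \le N^2$. This gives $1-(1-\delta)^M\binom{N}{2} \ge 1-e^{-\delta M}N^2$.

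The consequence follows on this good event. Suppose $F, F' \in \mc{F}$ are distinct and both agree with $f$ on all of $S$. Then $F(x) = f(x) = F'(x)$ for every $x\in S$, contradicting that $F$ and $F'$ disagree somewhere on $S$. So at most one $F\in\mc{F}$ agrees with $f$ on $S$. Note that $f$ itself plays no role in the probability estimate, since the event depends only on $S$ and $\mc{F}$; this matters because later $f$ is adversarial but the sample $S$ is independent of it.

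I expect no real obstacle. The only care needed is to count the independent per-point events correctly under sampling with replacement, and to note that the good event is a property of $S$ alone.
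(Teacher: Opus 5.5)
Your proposal is correct and follows the paper's proof: you bound the probability that a fixed pair agrees on all $M$ independent samples by $(1-\delta)^M$, then union bound over the $\binom{N}{2}$ pairs. You also spell out the inequality $1-(1-\delta)^M\binom{N}{2}\ge 1-e^{-\delta M}N^2$ and the uniqueness consequence, both of which the paper leaves implicit, and you rightly read ``in $\mc{F}$'' as ``in $S$''.
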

\begin{proof}
    Fix a pair of functions $F_1, F_2 \in \mc{F}$. Then the probability that they agree on all points in $S$ is at most  $(1-\delta)^M$.
    As $|\mc{F}| = N$, we can union bound over all pairs of functions in $\mc{F}$ to get that with probability at most 
$(1-\delta)^M \cdot \binom{N}{2}$, there exist two distinct functions in $\mc{F}$ that agree on $S$. The result follows.
\end{proof}

We conclude that whenever $M>>\frac{\log N}{\delta}$ the probability of $S$ not satisfying \cref{claim:sample_based_ub general} is negligible.
For example, let $\mc{F}\subset{\F_2^n\to\F_2}$ is the class of degree at most $d$ functions over $\F_2$ then, $\delta=\delta_d=2^{-d}$ and $N=2^{\binom{n}{\le d}}$ to have a good set with probability at least $0.9$ it is enough to take $M\approx 2^d \binom{n}{\le d}$ samples.

\section{The Hyperplane Agreement Lemma}
In this section, we show, for any affine-invariant family of functions $\mc{C}_n$, if $f$ is close to some function in $\mc{C}_n$ on sufficiently many hyperplanes, then $f$ must in fact be close to $\mc{C}_n$. The quality of our bound on how many hyperplanes are needed depends on the distance of the code as well as the local testability of $\mc{C}_n$.

Before stating the main lemma of the section, let us describe some necessary notation regarding the family $\mc{C}_n$. Recall that $\mc{G}_j$ denotes the set of $j$-dimensional subspaces of $\F_q^n$. Fix $\mc{C}_n \subseteq \{\F^n \to \F \}$ to be some affine-invariant family of functions and for each dimension $k$, let $\mc{C}_k \subseteq \{\F^n \to \F \}$ be the subset of functions given by, $\mc{C}_k = \{f|_U \; | \; U \subseteq \F^n, U \in \mc{G}_k, f \in \mc{C}_n\}$. Suppose that there is a dimension $t$ with the following properties:
\begin{definition} \label{def: t test assump}
    We say that a code $\mc{C}_n$ is locally testable via the $t$-space test with soundness function $c(\cdot)$, if for any function $f: \F^n \to \F$ it holds that
    \[
    \Pr_{U \in \mc{G}_t}[f|_U \notin \mc{C}_t] \geq  c(\dist(f, \mc{C}_n)),
    \]
    for some function $c: \mathbb{R} \to \mathbb{R}$ that is continuous, non-decreasing, and satisfies $c(0) = 0$.
\end{definition}
We fix this $t$ for the remainder of the section and let $\delta_0$ be the smallest minimum distance of the codes $\mc{C}_t,\mc{C}_{t+1}, \ldots, \mc{C}_n$. That is,
\begin{equation}\label{eq: min distance over lifted codes}
  \delta_0 := \min_{k: t\leq k \leq n} \delta(\mc{C}_k).
\end{equation}
The main lemma of this section states that if $f$ is far from $\mc{C}_n$, then $f|_W$ must also be far from $\mc{C}_{n-1}$ on almost all hyperplanes of $W \subset \F^n$. Formally,

\begin{lemma}[Hyperplane Agreement] \label{lm: hyperplane general}
Let $\mathcal{C}_n$ be an 
affine invariant family of functions over $\F_q^n$ that is locally testable via the $t$-space test with soundness function $c(\cdot)$ and let $\delta_0$ as in~\eqref{eq: min distance over lifted codes}. Set $\eps < \delta_0/6$ and let $M\geq \max\big\{\frac{10^5q^4}{\eps},\frac{20q^t}{c(\eps)}\big\}$. If there are $M$ distinct hyperplanes $W_1,\dots,W_M\subset \F^n$, such that for each $i=1,\dots, M$, the restriction of $f$ to the hyperplane $W_i$, denoted by $f|_{W_i}$, is $\eps$-close to $\mc{C}_{n-1}$ on $W_i$, then $f$ is $4\eps$-close to $\mc{C}_n$. 

\end{lemma}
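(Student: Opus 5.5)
We follow the consistency-graph outline from the technical overview. For each $i\in[M]$ fix $f_i\in\mc{C}_{n-1}$ on $W_i$ with $\dist(f|_{W_i},f_i)\le\eps$. Note that $f_i$ is unique, since $2\eps<\delta_0$. Let $E_i\subseteq W_i$ be the set of at most $\eps q^{n-1}$ points where $f$ and $f_i$ disagree. Two distinct hyperplanes $W_i,W_j$ meet in a codimension-$2$ subspace $W_i\cap W_j$ of size $q^{n-2}$, and the restrictions $f_i|_{W_i\cap W_j}$ and $f_j|_{W_i\cap W_j}$ both lie in $\mc{C}_{n-2}$. Call $i,j$ \emph{consistent} if these two restrictions are equal. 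Otherwise they differ on at least a $\delta_0$ fraction of $W_i\cap W_j$. Every such disagreement point lies in $E_i\cup E_j$, so inconsistency forces $|(E_i\cup E_j)\cap W_i\cap W_j|\ge \delta_0 q^{n-2}$.

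The first step is to show that most pairs are consistent. For a fixed $i$, the intersections $W_i\cap W_j$ over $j\ne i$ are codimension-$1$ subspaces of $W_i$. Each point of $W_i\setminus\{0\}$ lies in only about a $1/q$ fraction of them, after accounting for at most $q$ hyperplanes sharing a given intersection. So $\sum_{j}|E_i\cap W_j|\lesssim \eps q^{n-2}\cdot M$, up to a factor $O(q)$. Averaging over $j$ and over $i$, the expected fraction of inconsistent pairs is $O(q\eps/\delta_0)$, and since $\eps<\delta_0/6$ this is a small constant. I want more than this: a large \emph{clique}. The plan is to prove near-transitivity. If $i\sim j$ and $i\sim k$ but $j\not\sim k$, then $f_j$ and $f_k$ disagree on $W_j\cap W_k$, while both agree with $f_i$ on $W_i\cap W_j\cap W_k$. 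Using distance again inside a codimension-$3$ space, this forces many errors, and such bad triples are rare. From this I would extract a vertex $i_0$ and a set $A$ of neighbours with $|A|\ge M/2$ that is a clique. One way is to take $A$ to be the neighbourhood of a typical vertex and remove the few vertices involved in many bad triples, as in Raz--Safra.

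Next, define the global function $F$. For $x\in\bigcup_{i\in A}W_i$ set $F(x)=f_i(x)$ for any $i\in A$ with $x\in W_i$; this is well defined because $A$ is a clique. Elsewhere set $F=f$. Since $M\ge 10^5q^4/\eps$ is large, the hyperplanes in $A$ cover all but roughly a $q^{n}/|A|$-sized set, which is less than $\eps q^n$. The points where $F\ne f$ lie in $\bigcup_{i\in A}E_i$. A point $x$ counted there is an error for $f_i$ whenever $x\in W_i$; it lies in about $|A|/q$ of the hyperplanes in $A$, so a double-counting argument (point $x$ counted for every $i\in A$ with $x\in W_i$ and $F(x)=f_i(x)\ne f(x)$) gives $\dist(F,f)\lesssim 2\eps$. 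Here one needs the counts to be regular, which holds up to $O(q^2/M)$ errors that are absorbed by the lower bound on $M$.

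Finally, show $\dist(F,\mc{C}_n)\le 2\eps$ using \cref{def: t test assump}. A random $t$-space $U$ with $t\le n-2$ lies inside some $W_i$, $i\in A$, with probability at least $1-O(q^t/|A|)$. This is a second-moment estimate on the number of $i\in A$ with $U\subseteq W_i$, each such event having probability about $q^{-t}$. On that event $F|_U=f_i|_U\in\mc{C}_t$. Hence $c(\dist(F,\mc{C}_n))\le 10q^t/M\le c(\eps)/2<c(\eps)$, and monotonicity gives $\dist(F,\mc{C}_n)<\eps$. The triangle inequality then gives $\dist(f,\mc{C}_n)\le 3\eps+\eps=4\eps$.

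I expect the main obstacle to be the clique extraction with only polynomial-in-$q$ losses, matching the $10^5q^4$ factor. The careful counting of how arbitrary (non-random) hyperplanes intersect the error sets also needs attention, since the $W_i$ are adversarial. I would handle this via pairwise-independence-style counting over hyperplane intersections rather than random sampling.
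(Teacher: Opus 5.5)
Your architecture matches the paper's: consistency graph, a clique of linear size, gluing the $f_i$ into $F$, then bounding $\dist(F,f)$ by hyperplane sampling and $\dist(F,\mc{C}_n)$ by a second-moment count of the indices $i$ with $U\subseteq W_i$. Your last two steps are essentially the paper's. The gaps are in the two graph steps.

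\textbf{Most pairs consistent.} This step fails as written. A fraction of inconsistent pairs of order $q\eps/\delta_0$ is $O(q)$ under the only hypothesis $\eps<\delta_0/6$, not a small constant. Even without the factor $q$, an averaging/Markov argument at threshold $\delta_0/2$ gives a bound of order $\eps/\delta_0$ per side. That is a constant like $2/3$, too weak to feed the clique step. Also, for adversarial hyperplanes your pointwise claim that each point lies in about a $1/q$ fraction of the $W_i\cap W_j$ is false. Only the mean and variance of that count are controlled, via pairwise independence.

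The missing idea is concentration with an \emph{additive} error instead of Markov. For any $M'$ distinct hyperplanes, $\nu_{\mc{W}'}(S)\le 2\mu(S)+8q/M'$. Apply this inside $W_i$ to the sub-family of intersections $W_i\cap W_j$ on which $E_i$ has relative density at least $3\eps$. This gives $3\eps\le 2\eps+8q/M'$, so there are at most $8q/\eps$ such intersections and only $O(q^3/\eps)$ bad indices $j$. That is an absolute count, hence a negligible fraction since $M\ge 10^5q^4/\eps$. For every other pair the two error sets cover at most a $6\eps<\delta_0$ fraction of $W_i\cap W_j$, which forces consistency. This is exactly where the hypothesis $\eps<\delta_0/6$ comes from.

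\textbf{Clique extraction.} You rely on bad triples being rare on average, and that cannot yield a clique. A graph whose few non-edges are spread out at random has proportionally few bad triples but no clique of linear size. Raz--Safra's Lemma 2 needs the worst-case quantity $\beta(G)$: \emph{every} non-adjacent pair $j,k$ must have few common neighbours. This does not come from counting errors. If $i$ is a common neighbour, then $f_j=f_k$ on $W_i\cap W_j\cap W_k$. That set is a hyperplane of $W=W_j\cap W_k$ avoiding $\{f_j\neq f_k\}$, which has density at least $\delta_0$ in $W$. The lower bound $\nu_{\mc{W}'}(S)\ge\frac12\left(\mu(S)-4q/M'\right)$ with $\nu_{\mc{W}'}(S)=0$ shows at most $4q/\delta_0$ distinct such hyperplanes exist. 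So $j,k$ have at most $4q^4/\delta_0\le M/100$ common neighbours. With $\beta(G)\le 1/100$ and at least $0.4M^2$ edges, deleting at most $3\sqrt{\beta(G)}M^2\le 0.3M^2$ edges leaves a transitive graph with a clique of size $0.2M$. That clique is what the rest of your argument needs.
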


In the lemma above, $M$ is set sufficiently large with respect to 
the soundness function, $c(\cdot)$, of the $t$-space test on $\mc{C}_n$, so that $\frac{20q^t}{M} \leq c(\eps)$. In other words, $M$ is sufficiently large so that if $\dist(f, \mc{C}_n) > \eps$, then the testing guarantee of \cref{def: t test assump} gives
    \[
    \Pr_{U \in \mc{G}_t}[f|_U \notin \mc{C}_t] \geq \frac{20q^t}{M}.
    \]

The remainder of the section is dedicated to showing \cref{lm: hyperplane general}. We follow the strategy of \cite{RazS97} in the plane versus point analysis. Suppose that there are many hyperplanes $W_i$ with local degree $d$ functions $f_i: W_i \to \F_q$ such that $f_i$ and $f$ are close. We will first show that many --- in fact at least a constant fraction --- of these $f_i$'s are actually consistent with each other. Using these consistent $f_i$'s we can then define a function $F$ on the whole space and argue that 1) $F$ is close to $f$ and 2) $F$ passes the $t_{q,d}$-space test with high probability. For both 1) and 2) we are crucially using the fact that $F$ is defined using many consistent local functions on hyperplanes, so for nearly every point $x$ in the domain, there are, in fact, many of these local degree $d$ functions defined on $x$. 

\subsection{The Sampling Lemma}
In this section, we show a sampling lemma regarding hyperplanes and points.  The proof is similar to that of \cite[Lemma 5.1]{MZpcp}, but we strengthen the analysis. In particular, \cref{lm: sampling points v2} below includes an additive ``error'' term of $\sim q/M$, which improves upon a similar factor with $\sqrt M$ in the denominator in \cite[Lemma 5.1]{MZpcp}.

We next describe the sampling process and the lemma. Let $\mc{W} = \{W_1,\ldots, W_M\}$ be a collection of hyperplanes in $\F_q^n$. 
Denote by $\mu$ the uniform measure over $x\in \F_q^n$, i.e., $\mu(x)=q^{-n}$, and by $\nu_{\mc{W}}$ the probability measure over $\F_q^n$ generated by the following sampling procedure:
\begin{enumerate}
    \item Choose $W_i \in \mc{W}$ uniformly at random.\label{itm: random hyperplane}
    \item Choose a point $x\in W_i$ uniformly at random.\label{itm: uniform point in hyperplane}
\end{enumerate}
We will omit the subscript $\mc{W}$ and use $\nu$ when it is clear from the context. Furthermore, we naturally extend the measure on point to sets by $\mu(S) = \sum_{x\in S} \mu(x)$ and similarly for $\nu$. Let us define for $x \in \F_q^n$ the set of hyperplanes containing it 
\begin{equation}
    \label{eq: number of containing hyperplanes}
    N(x) := |\{W_i \in \mc{W} \; | \;   W_i\ni x \}|.
\end{equation}
Since \cref{itm: random hyperplane}  samples $W_i\ni x$ with probability $\frac{N(x)}{M}$, and conditioned on this event, \cref{itm: uniform point in hyperplane} chooses $x$ with probability $\frac{1}{|W_i|}=q^{-(n-1)}$, we get that by definition 
\begin{equation}\label{eq: sampling measure formula}
    \nu(x)=\frac{N(x)}{M}\cdot q^{-(n-1)}.
\end{equation}

We turn to show the following lemma, which asserts that the measures $\mu$ and $\nu_{\mc{W}}$ are close whenever $\mc W$ contains sufficiently many hyperplanes.

\begin{lemma}[Sampling] \label{lm: sampling points v2}
For any collection $\mc{W} = \{W_1,\ldots, W_M\}$ of $M$ distinct hyperplanes in $\F_q^n$, and any set of points $S \subseteq \F_q^n$, the probability measures $\mu,\nu_{\mc{W}}$ satisfy 
  \[
 \frac{1}{2}\left(\mu(S) - \frac{4q}{M}\right) \leq \nu_{\mc{W}}(S) \leq 2\mu(S) + \frac{8q}{M}.
  \]
\end{lemma}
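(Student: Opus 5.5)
The plan is a second-moment (Chebyshev) argument on the counting function $N(x)$ from \eqref{eq: number of containing hyperplanes}. By \eqref{eq: sampling measure formula},
\[
\nu(S) = \frac{q}{M}\,\Expc{x\sim\mu}{N(x)\ind{x\in S}},
\]
so it suffices to show that $N(x)$ is close to its mean $M/q$ for all but a $\mu$-measure $O(q/M)$ set of points.

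First I compute the first two moments of $N$ under $\mu$. Writing $N(x)=\sum_i \ind{x\in W_i}$, each $W_i$ has $\mu(W_i)=1/q$, so $\E_\mu[N]=M/q$. For $i\ne j$ the hyperplanes are distinct. Hence $W_i\cap W_j$ is either empty (parallel affine hyperplanes) or of dimension $n-2$. In either case $\mu(W_i\cap W_j)\le q^{-2}$. Therefore
\[
\E_\mu[N^2]\le \frac{M}{q}+\frac{M(M-1)}{q^2},
\]
and so $\var(N)\le M/q$.

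Next I define the bad set $B=\set{x : |N(x)-M/q|> M/(2q)}$. Chebyshev gives
\[
\mu(B)\le \frac{M/q}{(M/(2q))^2}=\frac{4q}{M}.
\]
For the lower bound, I drop $B$: on $S\setminus B$ we have $N(x)\ge M/(2q)$. Hence
\[
\nu(S)\ge \frac qM\cdot\frac{M}{2q}\,\mu(S\setminus B)\ge \tfrac12\left(\mu(S)-\tfrac{4q}{M}\right).
\]

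For the upper bound, I split $\nu(S)\le\nu(S\setminus B)+\nu(B)$. On $S\setminus B$ we have $N\le 3M/(2q)$, so $\nu(S\setminus B)\le \tfrac32\mu(S)$.

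The main obstacle is bounding $\nu(B)$: points in $B$ can have large $N(x)$, so the small $\mu$-measure of $B$ does not immediately control $\nu(B)$. I handle this by writing $N\le M/q+|D|$ with $D=N-M/q$. On $B$ we have $|D|> M/(2q)$, so $|D|\ind{B}\le D^2/(M/(2q))$. This gives
\[
\E_\mu\bigl[|D|\ind{B}\bigr]\le \frac{\var(N)}{M/(2q)}\le 2.
\]
Thus
\[
\nu(B)\le \frac qM\left(\frac Mq\mu(B)+2\right)\le \frac{4q}{M}+\frac{2q}{M}=\frac{6q}{M}.
\]
Combining, $\nu(S)\le \tfrac32\mu(S)+\tfrac{6q}{M}\le 2\mu(S)+\tfrac{8q}{M}$, which is the claimed upper bound.
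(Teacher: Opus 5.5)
Your proof is correct. Its skeleton matches the paper's: the same first and second moments of $N(x)$, Chebyshev at threshold $M/(2q)$, and an identical lower-bound argument that discards the bad set.

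The upper bound is where you take a different route. The paper partitions points by the dyadic level sets $2^iM/q \le N(x) < 2^{i+1}M/q$. It applies its Chebyshev tail bound to each shell with $c = 2^i-1$ and sums $\sum_{i\ge 1} 2^{i+1}/(2^i-1)^2 \le 8$. You instead use a single cutoff and bound the whole tail at once via $|D|\ind{B} \le D^2/(M/(2q))$, which gives $\E_\mu[|D|\ind{B}] \le 2q\var(N)/M \le 2$.

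Your version is shorter, avoids the infinite series, and yields the slightly sharper bound $\tfrac32\mu(S) + 6q/M$. The paper's layer-cake argument has the minor advantage of using only the tail-probability statement of its Chebyshev claim as a black box, rather than the variance directly.

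Your moment computation is also marginally more general. You observe that distinct hyperplanes may be parallel, hence disjoint, so the covariance is nonpositive rather than zero. This makes your argument valid for affine hyperplanes too. The paper's variance computation instead relies on any two distinct linear hyperplanes meeting in an $(n-2)$-dimensional subspace.
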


The key to proving the sampling lemma is analyzing the concentration of $N(x)$.

\begin{claim} \label{clm: chebyshev sampling}
    For any $c>0$ it holds that
    \[
    \Pr_{x\in \F_q^n}\left[ \left|N(x) - \frac{M}{q} \right| \geq \frac{c \cdot M}{q} \right] \leq \frac{q}{c^2  M},
    \]

\end{claim}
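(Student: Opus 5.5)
We apply Chebyshev's inequality to $N(x)$ with $x$ uniform in $\F_q^n$. The plan is to compute its mean and bound its variance by the mean.

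Write $N(x) = \sum_{i=1}^M \ind{x \in W_i}$. Each hyperplane $W_i$ has $q^{n-1}$ points, so $\Pr_x[x\in W_i] = 1/q$. Linearity of expectation then gives $\E_x[N(x)] = M/q$.

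For the second moment we expand
\[
\E_x[N(x)^2] = \sum_{i,j} \Pr_x[x \in W_i \cap W_j].
\]
The diagonal terms $i=j$ contribute $M/q$. For $i \neq j$, the hyperplanes $W_i, W_j$ are distinct. If they are linear subspaces (as in the paper's usage, hyperplanes through the origin), their intersection is a subspace of dimension exactly $n-2$, so $\Pr_x[x\in W_i\cap W_j] = q^{-2}$. If affine hyperplanes were allowed, two distinct ones are either parallel, with empty intersection, or meet in an $(n-2)$-flat. In either case $\Pr_x[x \in W_i\cap W_j] \le q^{-2}$. So the off-diagonal terms contribute at most $M(M-1)/q^2$, and
\[
\var(N(x)) = \E[N(x)^2] - \frac{M^2}{q^2} \le \frac{M}{q} + \frac{M(M-1)}{q^2} - \frac{M^2}{q^2} \le \frac{M}{q}.
\]
The intersection-size fact is the only substantive point: it gives pairwise "independence" of the events $x \in W_i$, or negative correlation in the affine case. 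It uses only that the $W_i$ are distinct, and it is why the variance scales like $M/q$ and not $M^2$.

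Finally, Chebyshev gives
\[
\Pr_x\left[\left|N(x) - \frac{M}{q}\right| \ge \frac{cM}{q}\right] \le \frac{\var(N(x))}{(cM/q)^2} \le \frac{M/q}{c^2M^2/q^2} = \frac{q}{c^2 M},
\]
which is the claimed bound. No step is a real obstacle. The only thing to be careful about is to handle the intersection probability as an inequality $\le q^{-2}$, so the argument covers both the linear and the affine case.
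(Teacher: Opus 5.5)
Your proposal is correct and matches the paper's proof: both compute $\E[N(x)] = M/q$, bound the variance by $M/q$ using that distinct hyperplanes meet in a set of density $q^{-2}$ (the paper phrases this as the off-diagonal covariances vanishing), and then apply Chebyshev. Your handling of the intersection probability as an inequality, which also covers affine hyperplanes via nonpositive covariance, is a harmless extra generality; the paper only needs linear hyperplanes.
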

\begin{proof}
We analyze the random variable $N(x)$ where $x$ is uniformly chosen from $\F_q^n$, showing that  
\[
\E[N(x)] = \frac{M}{q} \quad \text{and} \quad \var[N(x)] \leq \frac{M}{q}.
\]
The lemma then follows by applying Chebyshev's inequality. 

For the expectation, we write $N(x)$ as a sum of indicators, and use linearity of expectation 
    \[
        \E_{x}[N(x)] 
        = \sum_{i=1}^M \E_{x}[\ind{x\in W_i}]
        = \sum_{i = 1}^M \Pr_{x}[x\in W_i] 
        = \frac{M}{q}, 
    \]
    where the last equality follows from the cardinality of a hyperplane in $\F_q^n$.
    
    For the variance, writing $N(x)$ as a sum, we get
    \begin{align*}
        \var[N(x)]
        &=\sum_{i,j\in[M]}cov(\ind{x\in W_i},\ind{x\in W_j})\\
        &=\sum_{i\in[M]}\var[\ind{x\in W_i}]+\sum_{i\neq j\in[M]}cov(\ind{x\in W_i},\ind{x\in W_j}) .
    \end{align*}
    The first summation is at most $M/q$, as for each $i\in[M]$ we have $\var[\ind{x\in W_i}] = 1/q - 1/q^2\le 1/q$.
    We next show the second summation equals $0$, using the premise that all hyperplanes are distinct. Indeed, for any $i\neq j$ the intersection of $W_i \neq W_j$ is a subspace of dimension $n-2$ and cardinality $M/q^2$. Thus, for any $i\neq j$ we get
    \begin{align*}
        cov(\ind{x\in W_i},\ind{x\in W_j})
        &=\E[\ind{x\in W_i}\cdot \ind{x\in W_j}] - \E[\ind{x\in W}] \cdot \E[\ind{x\in W'}]\\
        &=\Pr_x[x\in W_i \cap W_j] - \Pr_x[x\in W_i] \cdot \Pr_x[x\in W_j]\\
        &= 1/q^2 - (1/q)^2 = 0 .
    \end{align*}    
    Overall, we get that $\var[N(x)] \leq M/q$, as required.
\end{proof}

We are now ready to prove the sampling lemma.
\begin{proof}[Proof of \cref{lm: sampling points v2}]
 For each integer $i$, let 
\[
    m_i :=  \left|\left\{x \in \F_q^n \; | \; \frac{2^i\cdot M}{q} \leq N(x) < \frac{2^{i+1}\cdot M}{q}\right\} \right| .
\]
By \cref{clm: chebyshev sampling}, we get that
\[
\frac{m_i}{q^n} \leq \frac{q}{(2^i-1)^2 \cdot M}.
\]
To complete the picture, let $m_0 = \card{\set{x\in \F_q^n \; | \; N(x) \leq M/q}}$. 
Towards the upper bound, we expand $\nu_{\mc{W}}(S)$ and perform a dyadic partitioning with $m_0 \leq \card{S}$:
\begin{align*}
    \nu_{\mc{W}}(S) = \sum_{x \in S} \nu_{\mc{W}}(x) 
    = \sum_{x\in S} \frac{N(x)}{M\cdot q^{n-1}} 
    \leq  \frac{1}{M\cdot q^{n-1}} \left(\frac{2\cdot M \cdot |S|}{q} + \sum_{i = 1}^{\infty} \frac{m_i \cdot 2^{i+1} \cdot M}{q} \right).
\end{align*} 
Once we multiply, the first term in the parenthesis contributes $2\mu(S)$, whereas the contribution of
the second summand can be upper bounded by
\[
    \frac{1}{q^n}\cdot \sum_{i = 1}^{\infty} m_i \cdot 2^{i+1}
    \leq \frac{q}{M}\cdot \sum_{i=1}^{\infty} \frac{2^{i+1}}{(2^i-1)^2} 
    \leq \frac{8q}{M} , 
\]
where the last inequality is since for all $i\geq 1$, it holds that $\frac{2^{i+1}}{(2^i-1)^2}\leq \frac{8}{2^i}$, and the infinite sum of the latter is at most $8$.

For the lower bound, recall $\nu_{\mc{W}}(x)=\frac{N(x)}{M\cdot q^{n-1}}$ from \eqref{eq: sampling measure formula}. We expand
\[
    \nu_{\mc{W}}(S) 
    = \sum_{x\in S}\nu_{\mc{W}}(x)
    \geq  \sum_{\set{x\in S : N(x) \geq M/(2q)}}\nu_{\mc{W}}(x) \geq |\{x\in S \; | \; N(x) \geq  M/(2q) \}| \cdot \frac{1}{2q^{n}} .
    \]
We next use \cref{clm: chebyshev sampling} with the parameters above, which gives
\[
\Pr_{x}\left[N(x) < \frac{M}{2q} \right] \leq \frac{4q}{ M}.
\]
It follows that 
\[
|\{x\in S\; | \; N_{\mc{W}}(x) \geq M/(2q) \}| \geq q^n\left((\mu(S) - \frac{4q}{M}\right).
\]
Putting everything together, we get the desired lower bound:
\[
\nu_{\mc{W}}(S) \geq \frac{1}{2}\left(\mu(S) - \frac{4q}{M}\right).
\]
\end{proof}

\subsection{Proof of \texorpdfstring{\cref{lm: hyperplane general}}{the Hyperplane Agreement Lemma}}

Suppose $\mc{W} = \{W_1, \ldots, W_M\}$ is the set of distinct hyperplanes on which $f$ is $\varepsilon$-close $\mc{C}_{n-1}$. For each $W_i$, let $f_i \in \mc{C}_{n-1}$ be this function, so 
\[
\dist(f_i, f|_{W_i}) \leq \varepsilon,
\]
for each $1 \leq i \leq M$. Note that if $\eps \geq 1/4$  then the result is trivial, so for the remainder of the proof we suppose $\eps < \frac{1}{4}$.

We now construct the following graph $G = (V, E)$, typically referred to as the \emph{consistency graph}. The vertex set is $V = \{1, \ldots, M\}$ while the set of edges $E$ consists of all pairs $(i,j)$ such that the functions $f_i$ and $f_j$ are consistent: 
\[
f_i|_{W_i \cap W_j} = f_j|_{W_i \cap W_j}.
\]
If $W_i \cap W_j$ is empty, then we also consider $f_i$ and $f_j$ consistent and have $(i,j) \in E$. We will first show that $G$ contains a large clique. Call a graph \emph{transitive} if it is an edge disjoint union of cliques. Define the following ``measure'' of non-transitivity:
\[
\beta(G) = \max_{(i,j) \notin E} \Pr_{k \in V}[(i,k), (j,k) \in E].
\]
Clearly, a graph $G$ is transitive if and only if $\beta(G) = 0$. The following lemma from \cite{RazS97} gives an approximate version of this, quantifying how close is $G$ to be transitive in terms of $\beta(G)$.
\begin{lemma}\cite[Lemma 2]{RazS97} \label{lm: transitive subgraph}
A graph $G$ can be made transitive by removing at most $3\sqrt{\beta(G)}|V|^2$ edges.
\end{lemma}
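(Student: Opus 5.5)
The plan is to find an explicit family of disjoint vertex sets $C_1,C_2,\dots$, each a clique in $G$, and keep exactly the edges inside them. The whole argument will rest on two counting consequences of the definition of $\beta=\beta(G)$, with $n=|V|$.

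\emph{Fact (a):} if $(i,j)\notin E$ then $|N(i)\cap N(j)|\le \beta n$.

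\emph{Fact (b):} fix $v$ and an outside vertex $x\notin N(v)\cup\{v\}$. Every edge $(u,x)$ with $u\in N(v)$ makes $u$ a common neighbour of the non-adjacent pair $v,x$. So there are at most $\beta n$ such $u$, and the number of edges leaving $N(v)\cup\{v\}$ is at most $\beta n^2$.

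\textbf{Step 1: discard low-degree vertices.} Delete every edge at a vertex of degree below $\sqrt\beta\, n$. This costs at most $\sqrt\beta\, n^2$ edges. Afterwards every edge joins two vertices of the high-degree set $H$ (degrees measured in the original $G$).

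\textbf{Step 2: group $H$ by neighbourhood proximity.} Call $u,w\in H$ \emph{close} if $|N(u)\,\triangle\,N(w)|\le \sqrt\beta\, n$. I claim that close vertices are adjacent whenever $\beta$ is small; for $\beta\ge 1/9$ the lemma is trivial, since $3\sqrt\beta\, n^2\ge n^2\ge |E|$ lets us delete every edge. Indeed, if $u,w$ were non-adjacent, fact (a) gives
\[
|N(u)\triangle N(w)| \ge \deg u+\deg w-2\beta n \ge 2\sqrt\beta\, n-2\beta n,
\]
which exceeds $\sqrt\beta\, n$ when $\beta<1/4$.

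Closeness is not transitive, so I will cluster greedily. Repeatedly pick a remaining $v\in H$ and let $C_v$ be the set of remaining vertices $u$ with $|N(u)\triangle N(v)|\le \tfrac12\sqrt\beta\, n$. Any two members of $C_v$ are then close by the triangle inequality for symmetric difference, hence adjacent, so $C_v$ is a clique. Remove $C_v$ and iterate. The final graph keeps exactly the edges inside the $C_v$'s, which is a disjoint union of cliques, i.e.\ transitive.

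\textbf{Step 3: count the deleted edges inside $H$ (main obstacle).} An edge $(u,w)$ is deleted exactly when $u$ and $w$ land in different clusters. For an edge $(u,w)$, the symmetric difference $N(u)\triangle N(w)$ consists of vertices $x$ adjacent to exactly one endpoint, so each such $x$ gives a \emph{wedge}: a path through one endpoint with non-adjacent ends. By fact (a), the total number of wedges with non-adjacent endpoints is at most $\beta n\cdot n^2$. Edges whose endpoints differ in more than $\tfrac14\sqrt\beta\, n$ neighbours each consume that many wedges, so there are at most $\beta n^3/(\tfrac14\sqrt\beta\, n)=4\sqrt\beta\, n^2$ of them. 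That is slightly too many against the target, so this step needs sharper bookkeeping.

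What I would try first is to charge each edge $(u,w)$ to the specific non-edges that witness $N(u)\triangle N(w)$, and to use fact (b) per cluster centre to bound the edges crossing out of $C_v\subseteq N(v)\cup\{v\}$. Near-closeness to $v$ is what makes this work: $u\in C_v$ forces $N(u)$ to nearly equal $N(v)$.

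The remaining danger is edges between a vertex $u$ just inside the threshold for $C_v$ and a vertex $w$ just outside it. I expect to handle these by choosing the threshold in $[\tfrac14,\tfrac12]\sqrt\beta\, n$ at random, so that an edge with small symmetric difference is cut only with probability proportional to that difference. Then the expected cut mass is $O(1)\cdot(\text{total wedges})/(\sqrt\beta\, n)=O(\sqrt\beta)\,n^2$. Tuning the constants should land the bound from Steps 1 and 3 together at $3\sqrt\beta\,|V|^2$. This constant-tuning is the step I expect to be the delicate one.
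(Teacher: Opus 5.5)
Your proposal has a genuine gap: Step 3 is the whole content of the lemma, and it is not carried out. The paper gives no proof of its own here (it imports the lemma from Raz--Safra), so your argument has to stand alone. Steps 1 and 2 are sound:
\begin{itemize}
\item facts (a) and (b) are correct;
\item deleting edges at low-degree vertices costs at most $\sqrt{\beta}n^2$;
\item the case $\beta\ge 1/9$ is trivial;
\item for $\beta<1/4$ each greedy ball $C_v$ is a clique.
\end{itemize}
Your wedge count only controls edges with $|N(u)\triangle N(w)|>\tfrac14\sqrt{\beta}n$. Even for those it gives $4\sqrt{\beta}n^2$, so with Step 1 you are already at $5\sqrt{\beta}n^2$. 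It says nothing about edges with small symmetric difference that straddle the boundary of some $C_v$. With a fixed radius and an arbitrary greedy order, your argument provides no bound on those at all.

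The random-radius repair does not close this, and the problem is not just constant tuning. For a single ball, the chance of separating $u$ from $w$ is at most $|N(u)\triangle N(w)|/(\tfrac14\sqrt{\beta}n)$. In the greedy process, however, an edge is exposed to every centre whose ball can reach it, and you must sum over all of them. With independent radii, the first ball to touch $\{u,w\}$ may do so near the top of its range, where the conditional probability of a cut is close to $1$. Your centres are only known to be pairwise more than $\tfrac14\sqrt{\beta}n$ apart, so many of them can lie within $\tfrac12\sqrt{\beta}n$ of one vertex, and nothing in the proposal bounds how many. This is why random ball carving in general metrics only guarantees cut probability $O(\log n)\cdot d(u,w)/R$. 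So the estimate ``expected cut mass $=O(1)\cdot(\text{total wedges})/(\sqrt{\beta}n)$'' is unsupported.

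What is missing is an argument that charges the edges leaving each extracted clique directly to that clique, rather than a per-edge cut probability fed into a global wedge count. Note also that fact (b), the natural tool for bounding edges that leave a neighbourhood-based cluster, is stated but never used.
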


In our next two lemmas we show that $G$ has many edges and then bound $\beta(G)$. Along with \cref{lm: transitive subgraph}, these steps will establish that $G$ contains a large clique.
The large clique corresponds to many local degree $d$ functions $f_i$ that are consistent with one another, and using these functions we will extrapolate a degree $d$ function that is close to $f$.

\begin{lemma} \label{lm: many edges}
The graph $G$ has at least $0.4M^2$ edges.
\end{lemma}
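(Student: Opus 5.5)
We need to show that at least $0.4M^2$ ordered (or unordered, up to the constant) pairs $(i,j)$ have $f_i$ and $f_j$ agreeing on $W_i\cap W_j$. The plan is to show that for most pairs, both $f_i$ and $f_j$ are close to $f$ on the $(n-2)$-dimensional intersection. Closeness there is enough because of the minimum distance $\delta_0$.

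\textbf{Step 1: reduction to restrictions.} For $i\neq j$ the intersection $W_i\cap W_j$ is either empty or an affine subspace of dimension $n-2$. Empty intersections count as edges automatically. Otherwise, $f_i|_{W_i\cap W_j}$ and $f_j|_{W_i\cap W_j}$ both lie in the restricted code $\mc{C}_{n-2}$, by affine invariance. If they differ, they differ on at least a $\delta_0$ fraction of the intersection.

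Here we must assume $n-2\ge t$, so that $\delta_0$ applies. This should hold in the regime of interest, since $M\le q^n$ forces $n$ large relative to $t$ via $M\ge 20q^t/c(\eps)$. Alternatively, we can argue that distances only grow under restriction for these codes.

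So it suffices to show that for most pairs,
\[
\dist(f_i, f) + \dist(f_j, f) < \delta_0
\]
on $W_i\cap W_j$. By the triangle inequality, this forces $f_i$ and $f_j$ to agree on the intersection.

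\textbf{Step 2: averaging over intersections.} For each $i$, let $B_i=\{x\in W_i: f_i(x)\neq f(x)\}$, so $|B_i|\le \eps q^{n-1}$. Fix $i$ and choose $j$ uniformly. The hyperplanes $W_j$ with $j\neq i$ cut $W_i$ into $(n-2)$-dimensional pieces.

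We want the average over $j$ of the fraction of $W_i\cap W_j$ lying in $B_i$ to be about $\eps$. This is a sampling statement inside $W_i$ for the family of intersections $\{W_i\cap W_j\}$. Some $W_j$ may be parallel to $W_i$ (empty intersection), and distinct $W_j$ may give the same intersection: at most $q$ hyperplanes contain a given $(n-2)$-flat, so multiplicities are at most $q$. I would redo the variance computation of \cref{clm: chebyshev sampling} within $W_i$, or simply bound by counting: each $x\in B_i$ lies in at most $N(x)$ of the intersections.

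The cleaner route is to double count the quantity
\[
\sum_{i,j} \frac{|B_i\cap W_j|}{q^{n-2}}
\]
over ordered pairs with nonempty intersection. It equals
\[
\sum_i \sum_{x\in B_i} \frac{N(x)-1}{q^{n-2}}.
\]
We bound this using $\nu$-type estimates: by the upper bound in \cref{lm: sampling points v2} applied to the sets $B_i$, or via the measure on pairs (hyperplane, point). The expected total should be at most about
\[
M\cdot\Big(2\eps+\frac{8q}{M}\Big)\cdot q\cdot\frac{M}{q}\cdot q^{-1}\cdot q,
\]
which gives an average fraction of $O(\eps)+O(q^2/M)$ per pair.

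Rather than estimating each $B_i$ separately, I would bound the average of $N(x)$ over $x\in B_i$. This is the main obstacle: $B_i$ is adversarial, so the $N(x)$ could be concentrated on it. I would handle it by applying the dyadic bound in the proof of \cref{lm: sampling points v2}, restricted to the hyperplane $W_i$, using the distinct hyperplanes $W_j\cap W_i$ of $W_i$ with multiplicity at most $q$. This yields the average fraction
\[
\E_j\Big[\frac{|B_i\cap W_j|}{|W_i\cap W_j|}\Big]\le 2\eps + O\Big(\frac{q^2}{M}\Big).
\]

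\textbf{Step 3: Markov and counting.} By Markov's inequality, the fraction of pairs $(i,j)$ for which either $B_i$ or $B_j$ occupies at least $\delta_0/2$ of $W_i\cap W_j$ is at most
\[
\frac{2\big(2\eps+O(q^2/M)\big)}{\delta_0/2}.
\]
This is at most $0.6$, using $\eps<\delta_0/6$ and $M\ge 10^5q^4/\eps$; the $q^4$ slack absorbs the $q^2/M$ term relative to $\eps\le\delta_0$.

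If the constant is tight, I would instead use the finer bound that $\eps$-closeness gives fraction about $\eps$ rather than $2\eps$, by using the exact average when the $W_j$ are spread. The remaining pairs are edges by Step 1, and the diagonal contributes a negligible $M$. This gives at least $0.4M^2$ edges.
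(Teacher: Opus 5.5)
Steps 1 and 2 are in the spirit of the paper's argument, but Step 3 fails. Markov's inequality on the average density $2\eps+O(q^2/M)$ gives a bad-pair fraction of $\frac{2\cdot 2\eps}{\delta_0/2}=\frac{8\eps}{\delta_0}$. Under the only available hypothesis $\eps<\delta_0/6$ this can be as large as $4/3$, so it is vacuous rather than at most $0.6$.

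Your fallback mean $\approx\eps$ does not rescue it either. It gives $4\eps/\delta_0<2/3$ in your union form, and at best $2\eps/\delta_0<1/3$ if you threshold the sum of the two densities at $\delta_0$. That is roughly $\tfrac23 M^2$ adjacent ordered pairs, i.e.\ about $M^2/3$ edges. The lemma, however, is used later as an unordered count ($0.4M^2-0.3M^2=0.1M^2\le\sum_I|\mc{K}_I|^2/2$), which needs about $80\%$ of ordered pairs to be adjacent. No first-moment bound can certify that under $\eps<\delta_0/6$. Separately, the claim that the mean is about $\eps$ ``when the $W_j$ are spread'' is unjustified, since the $W_j$ are arbitrary.

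The missing idea is to use the sampling lemma to bound the \emph{number} of bad intersections, not their average density. Fix $i$, and let $\mc{W}'$ be the set of distinct $(n-2)$-spaces $W_i\cap W_j$ on which $S_i$ (your $B_i$) has relative density at least $3\eps$. By definition $\nu_{\mc{W}'}(S_i)\ge 3\eps$. On the other hand, \cref{lm: sampling points v2} applied inside $W_i$ gives $\nu_{\mc{W}'}(S_i)\le 2\eps+8q/|\mc{W}'|$.

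Because the threshold $3\eps$ exceeds $2\eps$, the factor-$2$ loss is harmless, and the additive term forces $|\mc{W}'|\le 8q/\eps$. This is an absolute bound, independent of $M$. At most $q^2$ (in fact $q+1$) hyperplanes contain a given $(n-2)$-space, so at most $8q^3/\eps$ indices $j$ are bad for $i$. A union bound over the two directions then makes all but a $16q^3/(\eps M)\le 10^{-3}$ fraction of pairs good. On a good pair, $f_i$ and $f_j$ differ on at most a $6\eps<\delta_0$ fraction of $W_i\cap W_j$, so they agree there. This is exactly where $\eps<\delta_0/6$ and $M\ge 10^5q^4/\eps$ enter, and it makes nearly all pairs adjacent, well above $0.4M^2$ edges.
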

\begin{proof}
    For each $W_i$, let $S_i \subseteq W_i$ be the set of points on which $f_i$ and $f$ differ. We have that
    \[
    \frac{|S_i|}{|W_i|} \leq \varepsilon,
    \]
    by assumption.
    
    Now fix a $W_i$. We will apply \cref{lm: sampling points v2} inside of $W_i$. For each $j$, let $W'_j = W_i \cap W_j$. Let $\mc{W}' = \left\{W'_j \; | \; \frac{W'_j \cap S_i}{W'_j} \geq 3 \varepsilon \right\}$, let $\mc{I} = \{j \in [M] \;| \; W'_j \in \mc{W}'\}$, and say $M' = |\mc{W'}|$. Now consider the size of $S_i$ inside of $W_i$ under the measure $\nu_{\mc{W'}}$. We have,
    \[
    \nu_{\mc{W'}}(S_i) \geq 3 \varepsilon.
    \]
     By \cref{lm: sampling points v2}
    \[
     \nu_{\mc{W}'}(S_i) \leq 2\eps + 5q/M'.
    \]
   Combining both bounds, it follows that,
    \[
    M' \leq \frac{8q}{\eps}.
    \]
    Finally, we can bound $|\mc{I}| \leq 8q^3/\eps$ by noting that for any $W' \in \mc{W}'$, there can be at most $q^2$ hyperplanes $W$ such that $W \cap W_i = W'$. 
    
     Thus, for a fixed $i$,
    \[
    \Pr_{W_j}\left[\frac{|W_i \cap W_j \cap S_i|}{|W_i \cap W_j|} \geq 3 \varepsilon\right] \leq \Pr_{j \in [M]}[j \in \mc{I}] \leq \frac{8q^3}{\eps M}.
    \]
    Now choose $W_i, W_j$ uniformly at random. By a union bound, it follows that, with probability at least $1 - \frac{16q^3}{\eps M}$, we have that both
    \[
    \frac{|W_j \cap W_i \cap S_i|}{|W_j \cap W_i|} \leq 3 \varepsilon \quad \text{and} \quad   \frac{|W_j \cap W_i \cap S_j|}{|W_j \cap W_i|} \leq 3 \varepsilon.
    \]
    For such $i, j$, we get that 
    \[
    \Pr_{x \in W_i \cap W_j}[f_i(x) = f_j(x)] \geq 1 - \Pr_{x \in W_i \cap W_j}[x \in S_i \cup S_j] \geq 1 - 6 \varepsilon > 1 - \delta_0,
    \]
     so $f_{i}|_{W_i \cap W_j} = f_j|_{W_i \cap W_j}$ by \eqref{eq: min distance over lifted codes}. Thus, choosing $i,j$ randomly, we get that they are adjacent in $G$ with probability at least $1 - \frac{16q^3}{\eps M} \geq 0.9$.
\end{proof}

\begin{lemma} \label{lm: bound beta}
    Suppose $W_i, W_j$ are distinct  hyperplanes such that $f_{i}|_{W_i \cap W_j} \neq f_{j}|_{W_i \cap W_j}$. Then,

     \[
    \Pr_{W_{k} \in \mc{W}}\left[f_{i}|_{W_i \cap W_k} \equiv f_{k}|_{W_i \cap W_k}\bigwedge f_{j}|_{ W_j \cap W_k}\equiv f_{k}|_{ W_j \cap W_k}\right] \leq \frac{1}{100}.
    \]
\end{lemma}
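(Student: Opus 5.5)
The idea is to exploit the codimension-2 space $L := W_i \cap W_j$, on which $f_i$ and $f_j$ disagree. Since $f_i|_L$ and $f_j|_L$ are distinct members of $\mc{C}_{n-2}$ (affine invariance gives that restrictions stay in the family), they differ on at least a $\delta_0$ fraction of $L$. Let $D \subseteq L$ be this disagreement set, so $|D| \geq \delta_0 |L|$.

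Now suppose $W_k$ is a ``bad'' hyperplane, meaning $k$ is adjacent to both $i$ and $j$. Then for every $x \in W_k \cap L$ we have $f_i(x) = f_k(x) = f_j(x)$, so $W_k \cap L \subseteq L \setminus D$. We split such $W_k$ into two cases.
\begin{itemize}
\item \emph{Case $W_k \supseteq L$.} There are only $q+1$ hyperplanes containing $L$, including $W_i$ and $W_j$. These contribute at most $(q+1)/M$ to the probability.
\item \emph{Case $W_k \not\supseteq L$.} Here $W_k \cap L$ is a hyperplane of $L$ (or empty, if one allows affine hyperplanes; all $W$ here are linear, so it is a hyperplane of $L$). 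It must avoid $D$ entirely, which is a set of density at least $\delta_0$ in $L$.
\end{itemize}

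To count the second case, let $\mc{W}'$ be the collection of distinct hyperplanes of $L$ arising as $W_k \cap L$ for bad $k$, with $M' = |\mc{W}'|$. Each hyperplane $H$ of $L$ is contained in at most $q^2$ (more precisely $O(q^2)$) hyperplanes of $\F^n$, so the number of bad $k$ in this case is at most $q^2 M'$. Apply \cref{lm: sampling points v2} inside $L$ (dimension $n-2$) with the collection $\mc{W}'$ and $S = D$. Since every hyperplane in $\mc{W}'$ misses $D$, we get $\nu_{\mc{W}'}(D) = 0$. The lower bound of the lemma then gives $0 \geq \frac12(\delta_0 - 4q/M')$, so $M' \leq 4q/\delta_0$. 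Hence the number of bad $k$ is at most $4q^3/\delta_0 + q + 1$.

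Finally, dividing by $M$ bounds the probability by $(4q^3/\delta_0 + q+1)/M$. Using $M \geq 10^5 q^4/\eps$ and $\eps < \delta_0/6$, we have $M \geq 6\cdot 10^5 q^4/\delta_0$, so this is far below $1/100$.

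The main obstacle is keeping the counting in the second case correct. One must check that distinct $W_k$ can give the same trace on $L$ (handled by the $q^2$ multiplicity factor). One must also note that the sampling lemma requires distinct hyperplanes, which is why we pass to the set $\mc{W}'$ of traces. Beyond that, it remains to confirm that the distance bound $\delta(\mc{C}_{n-2}) \geq \delta_0$ is available. If $n-2 < t$ this could fail; in that case I would instead use the distance of $\mc{C}_{n-1}$ restricted appropriately, or note that the lemma is only applied for $n-2 \geq t$.
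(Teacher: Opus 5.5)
Your proof is correct and follows the paper's argument. You pass to $L=W_i\cap W_j$ and observe that the trace $W_k\cap L$ of any bad $W_k$ avoids the disagreement set, which has density at least $\delta_0$. The sampling lemma inside $L$ then caps the number of distinct traces at $4q/\delta_0$, and you multiply by how many hyperplanes share a given trace; your factor $q^2$ is in fact exact for $W_k\not\supseteq L$, sharper than the paper's $q^3$. Your separate $(q+1)/M$ term for $W_k\supseteq L$ is harmless but unnecessary, since badness of such a $k$ would force $f_i|_L=f_j|_L$. Your caveat that $\delta_0$ controls distance on $L$ only when $n-2\ge t$ is an implicit assumption the paper also relies on.
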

\begin{proof}
    Let $W = W_i \cap W_j$. For each $k$ let $W'_k = W_k \cap W$, let $f'_i = f_i|_W$, and let $f'_j = f_j|_W$. Also let 
    \[
    \mc{W}' = \{W \cap W_k \; | \; W_k \in \mc{W} \setminus \{W_i, W_j\}, f'_i|_{W'_k} = f'_j|_{W'_k} \},
    \]
    and let $\mc{I} = \{k \in [M] \; | \; W \cap W_k \in \mc{W}'\}$. Note that the probability of interest is at most,
    \[
    \Pr_{W_{k} \in \mc{W}}[f_{i}|_{W_i \cap W_j \cap W_k} = f_{j}|_{W_i \cap W_j \cap W_k}] = \frac{|\mc{W}'|}{|\mc{W}|} = \frac{|\mc{I}|}{M} \leq \frac{q^3 |\mc{W}'|}{M}, 
    \]
and therefore we may focus on bounding $M' := |\mc{W}'|$. Above we are using the fact that for each $W' \in \mc{W'}$, there can be at most $q^3$ values $k$ such that $W' = W \cap W_k$. This follows by noting that the dual space of $W'$ has size $q^3$ and the one-dimensional dual space of $W_k$ must be inside this space.
    
To bound $M'$, we will consider $W$ as our ambient space and apply \cref{lm: sampling points v2} inside of $W$, where the set of sampling hyperplanes is $\mc{W}'$.  Let $S \subseteq W$ be
    \[
    S = \{x \in W \; | \; f'_i(x) \neq f'_j(x) \}.
    \]
The key observation is that sampling $W' \in \mc{W}'$ and then a point $x \in W'$, the probability that $x \in S$ is actually $0$. On the other hand, $S$ is rather large since $f'_i, f'_j$ are both from a code with non-trivial distance, so together with \cref{lm: sampling points v2}, these two observations give an upper bound on the size of $S$.

Indeed, by the definition of $\mc{W}'$, we have that $\nu_{\mc{W}'}(S) = 0$, while on the other hand, $|S|/|W| \geq \delta_0$ by the assumption that $f'_i \neq f'_j$ combined with the definition of $\delta_0$ from \eqref{eq: min distance over lifted codes}. Now  by \cref{lm: sampling points v2}, we have
    \[
    0 \geq \frac{1}{2}\left(\delta_0- \frac{4q}{M'} \right).
    \]
    Thus,
    \[
    M' \leq \frac{4q}{\delta_0}.
    \]
Since $\eps\leq \delta_0/6$ and $M\geq 10^5 q^4/\eps\geq 10^5 q^4/\delta_0$, we conclude that $q^3 M'/M\leq 1/100$ and the lemma follows.
\end{proof}

\cref{lm: bound beta} shows that $\beta(G) \leq 1/100$, and applying \cref{lm: transitive subgraph}, $G$ can be made transitive by removing at most $0.3M^2$ edges. Combined with \cref{lm: many edges}, the remaining transitive subgraph of $G$ has at least $0.1M^2$ edges.
Let $\mc{K}_1, \ldots, \mc{K}_J$ be the edge disjoint union cliques of this transitive subgraph and say $\mc{K}_1$ is the largest one. It follows that,
\[
0.1M^2 \leq \sum_{I=1}^J |\mc{K}_i|^2/2 \leq |\mc{K}_1| \cdot M/2,
\]
and $G$ contains a clique of size $M_1 := \card{\mc{K}_1} = 0.2M$. From now on we let $\mc{K}$ denote this clique and without loss of generality let us suppose the members are $\mc{K}= \{1, \ldots, M_1\}$. We define the following function $F$ using the functions $f_1, \ldots, f_{M_1}$: set $F(x) = f_j(x)$ if there is some $j \in \mc{K}$ such that $x \in W_j$, and $F(x) = 0$ otherwise.

We are now ready to prove the main lemma of this section,
\begin{proof}[Proof of \cref{lm: hyperplane general}]
 We will show that 
\begin{enumerate}
    \item \label{itm: global F close to f} $F$ is close to $f$.
    \item\label{itm:global F close to low degree} $F$ is close to a degree $d$ function.
\end{enumerate} 
Using the triangle inequality, this will establish that $f$ is close to a degree $d$ function.

\paragraph{\cref{itm: global F close to f}: Proof that $F$ is $3\eps$-close to $f$.}
Let $S = \{x \in \F^n \; | \; f(x) \neq F(x) \}$. Since $f$ is $\eps$-close to $F$ on all hyperplanes in $\mc{H}:=\{W_i\}_{i \in \mc{K}}$, we have
\[
\nu_{\mc{H}}(S) \leq \eps.
\]
By \cref{lm: sampling points v2}, it follows that
\[
\eps \geq \frac{1}{2}\left(\mu(S) -  \frac{4q}{0.6M}\right),
\]
and by our assumption on $M$
\[
\mu(S) \leq 3\eps.
\]

\paragraph{\cref{itm:global F close to low degree}: Proof that $F$ is $\eps$-close to $\mc{C}_n$.}
We will show that $F$ passes the $t$-space test with high probability. Recall that this test operates by choosing a random subspace $L$ of dimension $t$ and accepting if and only if $F|_L \in \mc{C}_t$. By \cref{def: t test assump} we have,
\begin{equation}\label{eq: rejection lb}
    \Pr_{L \in \mc{G}_t}[F|_L \notin \mc{C}_t] \geq c(\dist(F,\mc{C}_n)).
\end{equation}
Next, we upper bound the rejection probability to bound the distance of $F$ from the family of functions $\mc{C}_n$. 
Let $\mc{L} = \{L \in \mc{G}_t \; | \; \nexists i \in \mc{K}, W_i \supseteq L \}$. 

We have
\[
\Pr_{L \in \mc{G}_t}[F|_L \notin \mc{C}_t] \leq \Pr_{L \in \mc{G}_t}[\nexists i \in \mc{K}, W_i \supseteq L].
\]
We will follow similar calculations to those of \cref{clm: chebyshev sampling}. 
Here, for any $L\in\mc{G}_t$, we let 
\[N(L)=|\{i \in \mc{K} \; | \;  W_i\supseteq L \}|.\]
We can now express the upper bound using $N(\cdot)$. 
\[\Pr_{L \in \mc{G}_t}[\nexists i \in \mc{K}, W_i \supseteq L]=\Pr_{L \in \mc{G}_t}[N(L)=0].\]
To further bound the probability, we next bound the first and second moments of the random variable $N(L)$ for uniform random $L\in \mc{G}_t$. 
Let 
\[p_1 := \Pr_{L \in \mc{G}_t}[L \subseteq W].
\]
 For the expectation, note that by the linearity of expectation 
    \[
    \E_{L \in \mc{G}_t}[N(L)] = \sum_{i = 1}^M \Pr_{L \in \mc{G}_t}[L \subseteq W_i] = p_1M.
    \]
    For the variance analysis, similarly to $p_1$ we define the probability of a uniform subspace to be contained in a fixed $n-2$ dimensional subspace, which we will use as an intersection of two distinct hyperplanes $W,W'$.
    \[p_2 := \Pr_{L \in \mc{G}_t}[L \subseteq W \cap W'].
    \] 
    We next bound the $\E_{L \in \mc{G}_t}\left[N(L)^2\right]$:
\begin{align*}
   \E_{L \in \mc{G}_t}\left[N(L)^2 \right] &= \E_{L \in \mc{G}_t} \left[\left(\sum_{i=1}^M \ind{L \subseteq W_i} \right)^2 \right] 
   \\
   & \leq M \cdot \Pr_{L \in \mc{G}_t}[L \subseteq W_i] + M^2 \cdot \Pr_{L \in \mc{G}_t}[L \subseteq W_i \cap W_{i'}]\\
   &= p_1M + p_2M^2\\
   &\leq p_1M+p_1^2M^2.
\end{align*} 
The last inequality follows from 
\[p_2=\frac{\qbin{n-2}{t}}{\qbin{n}{t}}\leq \left(\frac{\qbin{n-1}{t}}{\qbin{n}{t}}\right)^2=p_1^2.\]

Now,
\begin{align*}
    \Pr_{L \in \mc{G}_t}[N(L)=0]\le \frac{\var(N(L))}{\E[N(L)^2]}=1-\frac{\E[N(L)]^2}{\E[N(L)^2]}\leq 1-\frac{p_1^2M^2}{p_1M+p_1^2M^2}\leq \frac{1}{1+p_1M}\le \frac{1}{p_1M}.
\end{align*}
Note that, 
\[p_1=\frac{\qbin{n-1}{t}}{\qbin{n}{t}}=\frac{\prod_{i=0}^{t-1}(q^{n-1}-q^i)}{\prod_{i=0}^{t-1}(q^{n}-q^i)}=q^{-t}\frac{\prod_{i=1}^{t}(q^{n}-q^i)}{\prod_{i=0}^{t-1}(q^{n}-q^i)}=q^{-t}\frac{q^n-q^t}{q^n-1}\geq \frac {1}{2q^{t}}.\]
By our assumption on the size of $M$,
\[
 \frac{1}{p_1 \cdot M}\le \frac{2q^{t}}{M} \le c(\eps).
\]
Combining with \eqref{eq: rejection lb}, and recall that $c(\cdot)$ is non-decreasing, we conclude $\dist(f, \mc{C}_n) \leq \eps$.

\end{proof}

\section{Sample-Based Testers for All Affine Invariant Families}

In this section, we describe our sample-based tester that works for any fixed 
family of functions $\mc{C}_n \subseteq \{\F^n \to \F \}$. The tester samples a certain number of points and makes a decision based on this view. Since we wish the tester to have a one-sided error (always accept inputs from the family $\mc{C}_n$), the algorithm does the only thing possible. It rejects only when there is no codeword that can explain the sampled view. 

\paragraph{The $\mathcal{T}$ for $\mc{C}_n$:}
\begin{enumerate}
    \item Sample $s$ points from $\F^n$ uniformly with replacement. Let $S$ denote the set of points sampled.
    \item Accept if there is a function $g \in \mc{C}_n$ satisfying $g|_S = f|_S$. Otherwise, Reject.
\end{enumerate}

\begin{lemma}\label{lm: sample-based soundness with epsilon parameter}
    Let $f$ be $\eps$-far from $\mc{C}_n$, if $s\ge \frac{\ln (2|\mc{C}_n|)}{\eps}$ then $\mathcal{T}$ rejects with probability at least  $1/4$.
\end{lemma}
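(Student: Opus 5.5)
This is a union bound over the codewords of $\mc{C}_n$. The tester $\mathcal{T}$ accepts exactly when some $g \in \mc{C}_n$ satisfies $g|_S = f|_S$. So it suffices to show that, with probability at least $1/4$ over the sample, no codeword agrees with $f$ on all of $S$. Note that no distance property of $\mc{C}_n$ is needed here, unlike in \cref{claim:sample_based_ub general}: we compare $f$ against each codeword directly, rather than codewords against each other.

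First, fix a single $g \in \mc{C}_n$. Since $f$ is $\eps$-far from $\mc{C}_n$, we have $\dist(f,g) \geq \eps$, so a uniformly random point $x \in \F^n$ satisfies $f(x) = g(x)$ with probability at most $1-\eps$. The $s$ samples are drawn independently with replacement, so
\[
\Pr_S\left[g|_S = f|_S\right] \leq (1-\eps)^s \leq e^{-\eps s}.
\]

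Next, take a union bound over all $g \in \mc{C}_n$:
\[
\Pr_S\left[\exists g \in \mc{C}_n : g|_S = f|_S\right] \leq |\mc{C}_n| \, e^{-\eps s}.
\]
The hypothesis $s \geq \ln(2|\mc{C}_n|)/\eps$ gives $e^{-\eps s} \leq 1/(2|\mc{C}_n|)$. Hence the acceptance probability is at most $1/2$, and the rejection probability is at least $1/2 \geq 1/4$.

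There is no real obstacle here. The only points to check are the independence of the samples (guaranteed by sampling with replacement) and the inequality $(1-\eps)^s \leq e^{-\eps s}$. The stated constant $1/4$ is weaker than what the argument gives, which leaves slack for any rounding of $s$.
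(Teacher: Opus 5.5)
Your proposal is correct and matches the paper's proof: bound each codeword's probability of agreeing with $f$ on all $s$ samples by $(1-\eps)^s \le e^{-\eps s}$, then union bound over $\mc{C}_n$ to get acceptance probability at most $1/2$. As you note, this yields rejection probability $1/2$, so the stated $1/4$ is loose.
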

\begin{proof}
    For any $g\in\mc{C}_n$, we have $\dist(f,g)\ge \eps$ and therefore the probability $g$ is consistent with $f$ on $s_n$ uniform random samples is at most  $(1-\eps)^{s_n}$. Union bounding over all members of $\mc{C}_n$, the probability that there exists  $g\in \mc{C}_n$ such that $f|_S\equiv g|_S$, which is the case of acceptance, is at most
\[
(1-\eps)^{s}|\mc{C}_n| \leq \exp(-\eps s)|\mc{C}_n|\le 1/2.
\]
The last inequality is by the bound on $s$.
\end{proof}
Though the result of \cref{lm: sample-based soundness with epsilon parameter} is easy and quite general, it suits well when the tester receives the proximity parameter $\eps$ as an input and can decide how many samples to make based on the proximity parameter. 

Since we are going to apply the sample-based tester on functions with varying distances, and yet we need to squeeze all the probability of rejecting non-codewords. We will use an oblivious (to the proximity parameter) algorithm; the only difference is the setting of the number of samples it uses. We show that such a tester is optimal in the sense discussed in \cref{par:optimal testing}. 

Our approach to the analysis now is based on the \emph{learn-and-test} paradigm. The tester operates by choosing twice the amount of points needed to interpolate a unique function from $\mc{C}_n$; crucially, this amount is independent of the proximity parameter $\eps$. The first half of the points are used to reduce to one candidate function $g \in \mc{C}_n$, and the second half of the points are used to check if the tested function is indeed $g$. 

\begin{lemma}\label{lm: sample-based lifted codes soundness}
    Let $f$ be $\eps$-far from $\mc{C}_n$, if $s\ge \frac{10\ln |\mc{C}_n|}{\delta}$, where $\delta=\delta(\mc{C}_n)$ is the minimal distance of the code $\mc{C}_n$, then $\mathcal{T}$ rejects with probability at least $\min\{1/4,s \eps/8\}$.
\end{lemma}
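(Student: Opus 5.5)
The plan is a learn-and-test argument. Split the $s$ samples, which are drawn independently with replacement, into two halves. The first half $S_1$ has $\floor{s/2}$ samples and the second half $S_2$ has $\ceil{s/2}$ samples. The first half will, with high probability, pin down at most one candidate codeword. The second half then checks $f$ against that single candidate. The tester accepts only if some $g\in\mc{C}_n$ agrees with $f$ on all of $S=S_1\cup S_2$. In particular such a $g$ must agree with $f$ on $S_1$, and this is what lets us bound the acceptance probability.

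\textbf{Degenerate case.} If $|\mc{C}_n|=1$, say $\mc{C}_n=\{g\}$, the tester accepts only if $f$ agrees with $g$ on all of $S$. This happens with probability at most $(1-\eps)^s\le e^{-\eps s}$. The elementary estimate below then gives rejection probability at least $\min\{1/2,\eps s/2\}$, which is more than needed. So assume $N:=|\mc{C}_n|\ge 2$.

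\textbf{Step 1 (learning).} Apply \cref{claim:sample_based_ub general} to $S_1$ with $\mc{F}=\mc{C}_n$. We have $|S_1|\ge s/2-1/2$ and $\delta s/2\ge 5\ln N$. Since $\delta\le 1$, the loss from rounding is at most a factor $e^{1/2}$. Hence the probability that two distinct codewords agree on $S_1$ is at most roughly
\[
N^2 e^{-\delta |S_1|}\le e^{1/2}N^{2-5}\le \tfrac{e^{1/2}}{8}<\tfrac14 .
\]
If needed, one can sharpen this slightly to get below $1/2$, which is all that is required. Call $S_1$ \emph{good} if no two distinct codewords agree on it. Then $\Pr[S_1 \text{ good}]\ge 1/2$.

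\textbf{Step 2 (testing).} Condition on any fixed good $S_1$. There is then at most one $g_{S_1}\in\mc{C}_n$ with $g_{S_1}|_{S_1}=f|_{S_1}$. If there is none, the tester rejects outright. Otherwise $g_{S_1}$ is a function of $S_1$ alone. Because the samples are independent, $S_2$ is still uniform and independent of $S_1$. The tester accepts only if $f$ agrees with $g_{S_1}$ on all of $S_2$. Since $\dist(f,g_{S_1})\ge\eps$, this has conditional probability at most
\[
(1-\eps)^{|S_2|}\le e^{-\eps s/2}.
\]

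\textbf{Step 3 (combining).} Use the elementary bound $1-e^{-x}\ge \min\{1/2,\,x/2\}$, which follows from $1-e^{-x}\ge(1-1/e)x$ for $x\le1$ and $1-e^{-x}\ge 1-1/e$ for $x\ge 1$. With $x=\eps s/2$ this gives
\[
\Pr[\text{reject}]\ \ge\ \Pr[S_1\text{ good}]\cdot\min\{1/2,\ \eps s/4\}\ \ge\ \min\{1/4,\ \eps s/8\}.
\]

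\textbf{Main obstacle.} The delicate point is the conditioning in Step 2. The candidate $g$ depends on the sample, so we cannot union bound over all of $\mc{C}_n$; doing so would cost a factor $N$ and destroy the linear-in-$s\eps$ bound. The split into two independent halves is exactly what lets us pay for only a single candidate.
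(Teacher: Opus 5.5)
Your proof is correct and follows the paper's own learn-and-test argument: split the samples, apply \cref{claim:sample_based_ub general} to the first part to get at most one candidate codeword with probability at least $1/2$, then use independence of the second part and the bound $1-(1-\eps)^{s/2}\ge\min\{1/2,s\eps/4\}$. Your explicit handling of rounding, the degenerate case $|\mc{C}_n|=1$, and your self-contained proof of the elementary inequality (which the paper cites from \cite{ben2024property}) add care but do not change the route.
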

\begin{proof}

Let $S_1$ be the first $\frac{5\ln|\mc{C}_n|}{\delta}$ points sampled in $S$, and let $S_2$ be the remaining points. Let $E$ be the event that there is at most one $g \in \mc{C}_n$ satisfying $g|_{S_1} = f|_{S_1}$. Then the probability that the tester rejects is at least 
\[
\Pr[E] \cdot \Pr_{S}[g|_{S_2} \neq f|_{S_2} \; | \; E],
\]
where in the probability $g$ refers to the unique function in $\mc{C}_n$ agreeing with $f$ on $S_1$. By \cref{claim:sample_based_ub general}, 
\[
\Pr_{S}[\overline{E}] \leq  \exp(-\delta |S_1|)|\mc{C}_n|^2 \leq \frac{1}{2}.
\]
Furthermore, by the assumption that $f$ is $\eps$-far from $\mc{C}_n$, we have,
\[
\Pr_{S}[g|_{S_2} \neq f|_{S_2} \; | \; E] \geq 1 - (1-\eps)^{s/2} \geq \min\bigg\{\frac{1}{2},\frac{s \eps}{4}\bigg\}.
\]
The last inequality is proved in the proof of \cite[Lemma 3.1]{ben2024property}. The result follows. 
\end{proof}

We will use \cref{lm: sample-based lifted codes soundness} in order to build our semi-sample-based tester. Here we note a quick application of it, improving on the sample complexity given by \cref{lm: sample-based soundness with epsilon parameter} whenever $\eps$ is much smaller than $\delta_0$.  
\begin{theorem}\label{thm: oblivious sample-based tester}
    There is a sample-based tester for $\mc{C}_n$ with a one-sided error that uses at most $O(s+1/\eps)$ samples. Where $s=O(\ln|\mc{C}_n|/\delta)$ and $\delta=\delta(\mc{C}_n)$ is the minimal distance between two distinct codewords in $\mc{C}_n$.
\end{theorem}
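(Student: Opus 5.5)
The plan is to amplify the oblivious tester $\mathcal{T}$ of \cref{lm: sample-based lifted codes soundness} by independent repetition, with the number of repetitions tuned to $\eps$. Fix $s := \ceil{10\ln|\mc{C}_n|/\delta}$, which meets the hypothesis of \cref{lm: sample-based lifted codes soundness}. The tester is allowed to depend on $\eps$ (this is how we read ``$O(s+1/\eps)$ samples''). It runs $r := \ceil{\max\{1, 32/(s\eps)\}}$ independent copies of $\mathcal{T}$, each on $s$ fresh uniform samples. It rejects if and only if some copy finds no $g\in\mc{C}_n$ consistent with $f$ on its own sample set.

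Equivalently, and more simply, it rejects iff no $g\in\mc{C}_n$ agrees with $f$ on the union of all samples. This rejects at least whenever some copy rejects, so the soundness analysis below still applies.

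\textbf{Completeness.} It is immediate: if $f\in\mc{C}_n$, then $g=f$ is consistent with every sample set. Hence the tester accepts with probability $1$, so the error is one-sided.

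\textbf{Sample count.} The total number of samples is $rs \le s + 32/\eps + s = O(s+1/\eps)$.

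\textbf{Soundness.} Suppose $f$ is $\eps$-far. By \cref{lm: sample-based lifted codes soundness}, each copy rejects with probability at least $\rho:=\min\{1/4, s\eps/8\}$, independently across copies. We split into two cases.
\begin{itemize}
\item If $s\eps\ge 2$, then $\rho = 1/4$. A constant number of repetitions (say $4$, absorbed by replacing $1$ by $4$ in the definition of $r$) already gives rejection probability at least $1-(3/4)^4 > 2/3$.
\item Otherwise $\rho = s\eps/8$ and $r\ge 32/(s\eps)$. Then the probability that every copy accepts is at most $(1-\rho)^r \le e^{-\rho r}\le e^{-4} < 1/3$.
\end{itemize}
In both cases the rejection probability exceeds $2/3$.

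The only point needing care is to confirm that the repetition count is $O(1+1/(s\eps))$, so the total stays $O(s+1/\eps)$ rather than $O(s/\eps)$. This is exactly what the linear-in-$s\eps$ soundness of \cref{lm: sample-based lifted codes soundness} buys, and it is the improvement over \cref{lm: sample-based soundness with epsilon parameter}, which would need $\ln|\mc{C}_n|/\eps$ samples. No other obstacles are expected.
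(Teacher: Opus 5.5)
Your proposal is correct and is essentially the paper's proof. You repeat $\mathcal{T}$ (with $s=\Theta(\ln|\mc{C}_n|/\delta)$) independently $O(1+1/(s\eps))$ times, use the $\min\{1/4, s\eps/8\}$ rejection bound of \cref{lm: sample-based lifted codes soundness} together with independence to push acceptance below $1/3$, and count $O\bigl(s(1+1/(s\eps))\bigr)=O(s+1/\eps)$ samples; your explicit constants and case split just make the paper's ``enough times'' concrete. Your union-of-samples variant is not literally equivalent to the any-copy-rejects tester, but it only rejects more often while keeping completeness, so the argument is unaffected.
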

\begin{proof}
    We claim that repeating $\mc{T}$ (with $s=O(\ln|\mc{C}_n|/\delta)$) enough times and rejecting if any of the instances rejects is a valid tester. Indeed, if $f$ is a codeword, all samples will be consistent with $f$, and we always accept. If $f$ is $\eps$-far, then by \cref{lm: sample-based lifted codes soundness} each instance accepts with probability at most $1-\min\{1/4,s \eps/8\}$. Since the instances are independent, by repeating $O\left(\frac{1}{s\eps}+1\right)$ times, the probability that they all accept is at most $1/3$.
    For the sample complexity, we have any instance use $s$ samples, which in total become $O\left(s\left(\frac{1}{s\eps}+1\right)\right)=O(s+1/\eps)$.
\end{proof}

 \section{Semi-Sample-Based Testers}\label{sec: semi-sample-based tester}

The semi-sample-based tester for $\mc{C}_n$ on input function $f: \F_q^n \to \F_{q}$ works by first choosing a random $k$-dimensional subspace, $A$, and then running the sample-based tester on the smaller, restricted function $f|_A$.

\paragraph{The tester $\mathcal{T}_k(Q, \mc{C}_n)$:}
\begin{enumerate}
    \item Choose a $k$-dimensional subspace $A\subset \F^n$ uniformly at random.
    \item Sample $Q$ points from $A$ uniformly with replacement. Let $S$ denote the set of points sampled.
    \item Accept if there is a function $g \in \mc{C}_k$ satisfying $g|_S = f|_S$. Otherwise, Reject.
\end{enumerate}   

We will show that with large enough $k$, choosing $Q$ sufficiently large with respect to the rate and distance of $\mc{C}_k$, the tester $\mc{T}_k(Q)$ is indeed a valid tester with perfect completeness and non-trivial soundness (or even optimal soundness for certain families). 

\begin{remark}
We remark that while our focus is on query complexity rather than time complexity, the tester $\mc{T}_k(Q, \mc{C}_n)$ does have good running time when $\mc{C}_n$ is a linear code. In this case, checking if there exists $g \in \mc{C}_k$ such that $g|_S = f|_S$ can be done by solving a system of $|S|$ linear equations with $\log_q|\mc{C}_k| \leq |S|$ variables. In particular, the time complexity is $O(|S|^3)$.
\end{remark}

In this section, we show our main theorem regarding semi-sample-based testers for affine invariant families of functions. At a high level, our tester for \emph{every} affine invariant family is simply the tester $\mc{T}_k(Q)$ except one must choose the quantity $k$ as well as the number of queries $Q$ depending on the specific code. For families of functions such as Reed-Muller codes or lifted affine-invariant codes, we are able to give explicit values of $k$ and $Q$, and show that the semi-sample-based tester achieves optimal soundness. For general affine invariant families however, the known soundness of current subspace based testers is not strong enough for us to give explicit values of $k$ and $Q$ for a valid semi-sample-based tester. We state our main theorems for testing Reed-Muller codes, lifted affine-invariant codes, and finally general affine invariant families below.

We first state our main result for Reed-Muller codes. The proof is deferred to \cref{sec: rm test proof}.
\begin{theorem} \label{th: semi-sample rm}
    Let $f: \F^n \to \F$, let
    $k \geq 8d + 3\KMconst + 150$
    be a dimension parameter,  let $\delta_0 := q^{-d/(q-1)}$ and set
    \[
  s_k := \ceil{\frac{100\ln |\RM[k,q,d]|}{\delta_{0}}}+ 1 
    \]
    Then $\mc{T}^f_k(s_k, \RM[n,q,d])$ is a valid tester for the degree $d$ Reed-Muller code over $\F^n$. Specifically, the tester satisfies the following:
    \begin{itemize}
        \item Completeness: If $f \in \RM[n,q,d]$, then the tester accepts with probability $1$.
        \item Soundness: If $f$ is $\eps$-far from $\RM[n,q,d]$ then the tester rejects with probability at least $\min\{s_k \eps/8, 1/128\}$
    \end{itemize}
   Furthermore, the tester has query complexity $s_k$, which is bounded by
   \begin{equation}\label{eq:asymptotic_bound_for query_complexity_SSB}
       s_k = O\left(q^{t_{q,d}} \cdot \left(\frac{e(d+k)}{k}\right)^k \right)\ .
   \end{equation}
\end{theorem}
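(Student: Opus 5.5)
Completeness is immediate: restricting a degree-$d$ polynomial to a $k$-dimensional subspace $A$ gives a degree-$d$ polynomial in $k$ variables, so $f|_A\in\RM[k,q,d]$ explains every sample. The query bound \eqref{eq:asymptotic_bound_for query_complexity_SSB} follows from \cref{fact: rm distance}: $\ln|\RM[k,q,d]|\le \ln q\binom{k+d}{d}\le \ln q\,(e(d+k)/k)^k$, and $1/\delta_0=q^{d/(q-1)}\le q^{t_{q,d}}$ up to a constant, since $t_{q,d}\ge (d+1)/(q-1)$.

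For soundness I will induct on the ambient dimension $n$. The key structural fact is that $\mc{T}_k(s_k,\RM[n,q,d])$ on $f$ has the same distribution as choosing a uniformly random hyperplane $W$ and running $\mc{T}_k(s_k,\RM[n-1,q,d])$ on $f|_W$, because a random $k$-space inside a random hyperplane is a uniform $k$-space. The base case $n=k$ is \cref{lm: sample-based lifted codes soundness} applied to $\mc{C}_k=\RM[k,q,d]$ with $\delta=\delta_0$; since $s_k\ge 10\ln|\RM[k,q,d]|/\delta_0$, it gives rejection probability $\min\{1/4,s_k\eps/8\}$. (Instead of $n=k$ one could start at $n\le n_0:=k+O(1)$, where the claim still holds by running the sample-based analysis on $A$ via \cref{lm: sampling points v2}, but I expect starting at $n=k$ to suffice if the inductive step is arranged carefully.)

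For the inductive step with $n>k$, set $\eps_0:=1/(100 s_k)$ and split on $\eps$.

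\emph{Case $\eps\le 4\eps_0$ (small distance).} Write $f=g+h$ with $g\in\RM[n,q,d]$ closest and $\mu(\mathrm{supp}\,h)=\eps$. I will show that with probability $\gtrsim s_k\eps$ exactly one sample lands in $\mathrm{supp}\,h$ and the other samples determine $g|_A$ uniquely. Determination uses \cref{claim:sample_based_ub general} on $\RM[k,q,d]$ with the remaining $s_k-1$ samples, and the choice of the constant $100$ gives failure probability at most $1/2$. When both events hold, any degree-$d$ function consistent with the samples must equal $g|_A$ and hence disagrees with $f$ at the bad point, so the tester rejects. Two points need care. First, the samples are uniform in a random $A$, so each sample is marginally uniform in $\F^n$. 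Second, pairwise collision probabilities among the $s_k$ points in $A$ are about $\eps^2$, up to $q^{-k}$ corrections, so by inclusion–exclusion the probability of exactly one hit is at least $s_k\eps - \binom{s_k}{2}O(\eps^2)\ge s_k\eps/2$. Combining these, the rejection probability is at least $s_k\eps/8$.

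\emph{Case $\eps>4\eps_0$ (large distance).} Let $M:=\max\{10^5q^4/\eps_0,\,20q^{t_{q,d}}/c(\eps_0)\}$, where $c(x)=q^{-\KMconst}\min\{1,q^{t_{q,d}}x\}$ comes from \cref{thm: rm local testability}. The condition $\eps_0<\delta_0/6$ holds. By the contrapositive of \cref{lm: hyperplane general}, fewer than $M$ hyperplanes $W$ have $f|_W$ $\eps_0$-close to $\RM[n-1,q,d]$. On every other hyperplane, the inductive hypothesis gives rejection probability at least $\min\{s_k\eps_0/8,1/128\}=1/800$. This falls short of $1/128$, so the threshold constants must be rebalanced: use $\eps_0=1/(16 s_k)$ and verify that $6\eps_0<\delta_0$ still holds because $s_k\gg 1/\delta_0$. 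More cleanly, I will apply the hypothesis with the monotone bound $\min\{s_k\eps'/8,1/128\}$, where $\eps'\ge\eps_0$ is chosen so that $s_k\eps'/8\ge 1/128$.

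The main obstacle is the loss from the bad hyperplanes. A uniformly random hyperplane is bad with probability at most $M/q^{n}$, which is at most $M/q^{k+1}$. We have $M=q^{O(1)}s_k\cdot q^{\KMconst}$ and $s_k\approx q^{d/(q-1)}\binom{k+d}{d}$. The requirement $k\ge 8d+3\KMconst+150$ is exactly what makes $M\le q^{k}/\mathrm{poly}$ hold uniformly in $q$. Because this loss compounds across the induction levels, I will use a telescoping bound: level $m$ loses at most $M q^{-m}$, and $\sum_{m>k}Mq^{-m}\le 2Mq^{-k-1}$ is a small constant. I will therefore prove the induction in the strengthened form ``rejects with probability at least $\min\{s_k\eps/8,\ 1/128+\gamma_n\}$'' with $\gamma_n$ decreasing, which lets the constant $1/128$ survive all levels.
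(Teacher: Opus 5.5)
Your route is the paper's: induction on $n$ through a random hyperplane, the hyperplane agreement lemma to cap the number of hyperplanes on which $f|_W$ is close, a one-hit/unique-decoding argument for small distance, and a strengthened hypothesis with a term $\gamma/q^n$ that shrinks in $n$. The gap is in the balancing that makes the induction close. You use the threshold $\eps_0=1/(16s_k)$ and the hypothesis $\min\{s_k\eps/8,\,1/128+\gamma_n\}$. A good hyperplane whose distance is just above $1/(16s_k)$ then gets only about $s_k\eps_0/8=1/128$ from the hypothesis, because the minimum cuts off $\gamma_{n-1}$. After subtracting the bad-hyperplane mass $K/N$ you end up below $1/128$, not above $1/128+\gamma_n$.

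The fix is to put the good-hyperplane threshold a constant factor above the point where $s_k\eps/8=1/128$. The paper uses $16\eta/s_k=1/(8s_k)$ with $\eta=1/128$. Good hyperplanes then get $\min\{1/64,\ \eta+\gamma/q^{n-1}\}=\eta+\gamma/q^{n-1}$, which needs $\gamma\le q^k/128$. There are $N=(q^n-1)/(q-1)\approx q^{n-1}$ hyperplanes, so the bad fraction is about $M/q^{n-1}$, not $M/q^n$. Having fewer than $\gamma(q^n-1)/q^n$ bad ones costs at most $\gamma(q-1)/q^n$, which is exactly the drop from $\gamma/q^{n-1}$ to $\gamma/q^n$; taking $\gamma=2M$ suffices. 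The price is that the hyperplane lemma at $\eps=1/(8s_k)$ certifies few bad hyperplanes only when $\eps_f>1/(2s_k)$. Your small-distance argument must therefore cover all $\eps_f\le 1/(2s_k)$, as the paper's Case~2 does. It does hold there: $s_k\eps_f\le 1/2$ still leaves the exactly-one-hit probability at least about $s_k\eps_f(1-s_k\eps_f)\ge s_k\eps_f/2$.

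Two further points need more than you give them.

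First, $\gamma\le q^k/128$ (i.e.\ $M\ll q^k$) does not follow by inspection from $k\ge 8d+3\KMconst+150$. It is the explicit estimate $s_k/q^k\le q^{-c}$ for $k\ge 8d+3c+24$, which the paper proves by tracking $\binom{k+d}{d}q^{-k}$ as $k$ grows, separately for $q=2$ and $q>2$. You assert it but do not carry it out.

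Second, in the small-distance case, ``exactly one sample lies in $\mathrm{supp}\,h$'' and ``the other $s_k-1$ samples determine $g|_A$'' are dependent events, so you cannot just multiply their probabilities. You have two options:
\begin{itemize}
\item Use the paper's decoupling: fix the sample pattern in a generic $U$ and push it forward by a random full-rank $T$. Unique decodability then depends only on the pattern, and hitting depends only on $T$.
\item Use that $x_i$ is independent of $\{x_j\}_{j\ne i}$ given $A$, and union-bound over the hit index. On this route a decoding-failure probability of $1/2$ is not enough once $s_k\eps_f$ nears $1/2$. You need the actual bound $e^{-\delta_0(s_k-1)}|\RM[k,q,d]|^2$, which is tiny.
\end{itemize}
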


For lifted affine invariant codes, $\mc{C}_n = \mc{C}^{t \nearrow n}$, we give semi-sample-based testers provided there is a dimension $k$ where the size of the restricted code, $|\mc{C}_k|$, is at most $O\left(q^k \delta_0/\log(q) \right)$. The proof is laid out in \Cref{sec: lifted test proof}. 
\begin{theorem} \label{th: semi-sample lifted}
    Let $\mc{C}_n = \mc{C}^{t\nearrow n}$ be the $n$-dimensional lift of the affine invariant code $\mc{C}_t \subseteq \{\F^t \to \F\}$ and let $\delta_0$ be the smallest minimum distance of the codes $\mc{C}_t, \cdots, \mc{C}_n$. Suppose $k$ is a dimension parameter satisfying
    \[
    \frac{\log_q |\mc{C}_k|}{q^k} \leq \frac{\delta_0}{10^6 \log(q)},
    \]
    and set
    \[
    Q_k := \ceil{\frac{100\ln |\mc{C}_k|}{\delta_{0}}}+ 1.
    \]
    Given an input function $f: \F^n \to \F$, the tester $\mc{T}^f_{k}(Q_k, \mc{C}_n)$ is a valid tester for the lifted affine-invariant code, $\mc{C}_n$. Specifically, the tester satisfies the following:
    \begin{itemize}
        \item Completeness: If $f \in \mc{C}_n$, then the tester accepts with probability $1$.
        \item Soundness: If $f$ is $\eps$-far from $\mc{C}_n$, then the tester rejects with probability at least $$\min\{Q_k \eps/8, 1/128 \}.$$
    \end{itemize}
\end{theorem}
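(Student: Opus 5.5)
Completeness is immediate: if $f\in\mc{C}_n$ then $f|_A\in\mc{C}_k$ for every $k$-dimensional $A$, so $g=f|_A$ explains any sample. Soundness goes by induction on the ambient dimension $n$, following the outline of \cref{sec:intro-techniques}. Write $\rho_n(\eps)$ for the minimal rejection probability of $\mc{T}_k(Q_k,\mc{C}_n)$ on inputs that are $\eps$-far from $\mc{C}_n$. I want to show $\rho_n(\eps)\ge\min\{Q_k\eps/8,1/128\}$ for all $n\ge k$.

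The base case is $n=k$ and, more generally, any $n$ up to some $n_0=k+O(1)$. Here the tester is the sample-based tester $\mathcal{T}$ on $\mc{C}_k$, up to the random choice of $A$. When $n=k$ it is exactly $\mathcal{T}$, and \cref{lm: sample-based lifted codes soundness} applies because $Q_k\ge 10\ln|\mc{C}_k|/\delta(\mc{C}_k)$ (using $\delta_0\le\delta(\mc{C}_k)$). That gives $\min\{1/4,Q_k\eps/8\}$. For the inductive step with $n>k$, sample $A$ by first choosing a uniformly random hyperplane $W$ and then $A\subseteq W$. This gives
\[
\Pr[\text{reject } f]=\E_W\,\Pr[\text{reject } f|_W],
\]
where $f|_W$ is tested against $\mc{C}_{n-1}$ (for a lift, $(\mc{C}_n)_{|W}=\mc{C}_{n-1}$).

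Split into two cases on $\eps$.

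\emph{Small distance}, $\eps\le 1/(10Q_k)$. Write $f=g+h$ with $g\in\mc{C}_n$ closest and $h$ supported on an $\eps$-fraction of points. Each sampled point is marginally uniform in $\F^n$, so with probability roughly $Q_k\eps(1-O(Q_k\eps))\ge Q_k\eps/2$ exactly one sample hits $\mathrm{supp}(h)$. Condition on the first $Q_k/2$ samples (all outside the support, say) determining $g|_A$ uniquely, via \cref{claim:sample_based_ub general}, which holds with probability $\ge1/2$ by the choice of $Q_k$. Then the single corrupted sample contradicts $g|_A$, and the test rejects. Pairwise correlations inside a random $A$ are negligible, since two distinct random points of $A$ are close to pairwise independent in $\F^n$. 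This should give $\ge Q_k\eps/8$.

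\emph{Large distance}, $\eps>1/(10Q_k)$. Suppose at least $M$ hyperplanes $W$ have $f|_W$ that is $\eps':=1/(40Q_k)$-close to $\mc{C}_{n-1}$. Apply \cref{lm: hyperplane general} with the $(t+1)$-space tester of \cref{thm: testing lifted km} as soundness function $c(x)=q^{-O(1)}\min\{1,q^t x\}$. This needs $\eps'<\delta_0/6$ (true since $Q_k\gg1/\delta_0$) and
\[
M\ge\max\{10^5q^4/\eps',\;20q^{t+1}/c(\eps')\}=q^{O(1)}Q_k.
\]
The lemma then gives $\dist(f,\mc{C}_n)\le4\eps'<1/(10Q_k)$, a contradiction. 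Hence all but $M=q^{O(1)}Q_k$ hyperplanes have $f|_W$ that is $\eps'$-far, and by induction each of those rejects with probability $\ge\min\{Q_k\eps'/8,1/128\}$. Here is the main obstacle: $Q_k\eps'/8=1/320<1/128$, so naive induction loses a constant every level.

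I would fix this in the standard BKSSZ way. Track a threshold instead of the raw bound: for $f$ with $\dist\ge 1/(10Q_k)$, induction gives rejection $\ge\min\{\dots\}$ on far hyperplanes. A hyperplane restriction of a $\ge\frac1{10Q_k}$-far function that is itself $<\frac1{10Q_k}$-far but $\ge\eps'$-far is handled by the small-distance bound with a slightly worse constant. This suggests carrying the claim as $\min\{Q_k\eps/8,\,\beta_n\}$ with $\beta_n=\beta_{n-1}(1-M/(\#\text{hyperplanes}))$, so that
\[
\beta_n\ge\tfrac1{128}\prod_{m>n_0}(1-q^{O(1)}Q_k q^{-(m-1)}).
\]
The product is $\ge1/2$ once $q^{n_0}\ge q^{O(1)}Q_k$. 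This is exactly where the hypothesis $\log_q|\mc{C}_k|/q^k\le\delta_0/(10^6\log q)$ is used: it implies $Q_k\le q^k/10^4$, so $n_0=k$ plus a constant suffices. In turn this justifies starting with base-case constant $1/4$, which leaves room to degrade down to $1/128$. The delicate bookkeeping is choosing which constant ($1/4$ vs.\ $1/128$) is preserved across levels, and making sure the small-distance argument gives $\ge1/128$ uniformly near the threshold $\eps\approx1/(10Q_k)$.
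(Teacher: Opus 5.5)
\textbf{Verdict: genuine gap.} Your architecture is the paper's: small/large split, induction on $n$ through a random hyperplane, \cref{lm: hyperplane general} with the tester of \cref{thm: testing lifted km}, and base case $n=k$ from \cref{lm: sample-based lifted codes soundness}. You also locate the obstacle correctly. But your proposed fix does not go through with your thresholds.

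\textbf{The thresholds.} The recursion $\beta_n=\beta_{n-1}(1-M/N_n)$ with $\beta\approx 1/128$ needs every non-exceptional hyperplane to be rejected with probability at least $\beta_{n-1}$. Your contradiction step only certifies $\dist(f|_W,\mc{C}_{n-1})\ge \eps'=1/(40Q_k)$. For a restriction with distance in $[1/(40Q_k),1/(16Q_k))$, both the induction hypothesis and the small-distance bound give only $Q_k\,\dist(f|_W,\mc{C}_{n-1})/8$, which can be as small as $1/320<\beta_{n-1}$. So as written the argument closes with a constant near $1/640$, not $1/128$.

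The root cause is that your small-distance range $\eps\le 1/(10Q_k)$ is too narrow to absorb the factor $4$ lost in \cref{lm: hyperplane general}. The paper's repair has three parts:
\begin{itemize}
\item \cref{lm: small distance general} proves the bound $Q_k\eps_f/8$ on the whole range $\eps_f\le 1/(2Q_k)$. It uses $1/10$ of the samples for unique decoding and $9/10$ for detection, plus a random invertible linear map $T$. Under $T$ the decoding event depends only on positions inside $U$, so it is independent of which samples land in $\Delta$.
\item The induction is stated only for $\eps_f\ge 1/(8Q_k)$, with conclusion $\eta+\gamma/q^n$, where $\eta=1/128$ and $\gamma=q^{O(1)}Q_k$. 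A hyperplane is bad if $\dist(f|_W,\mc{C}_{n-1})<1/(8Q_k)$.
\item There is no contradiction step. With few bad hyperplanes, induction gives $\eta+\gamma/q^{n-1}-K/N\ge\eta+\gamma/q^n$. With many, \cref{lm: hyperplane general} puts $f$ itself within distance $1/(2Q_k)$. The small-distance bound then applies to $f$ directly and gives $Q_k\eps_f/8\ge 1/64=2\eta$.
\end{itemize}
This factor-$2$ room, with the thresholds aligned, is what prevents the per-level loss. Your multiplicative bookkeeping would also work once the thresholds are aligned this way.

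\textbf{The base case.} Your base case for $k<n\le n_0$ is unjustified. For $n>k$ the tester is not the sample-based tester for $\mc{C}_n$, and $f|_A$ may be much closer to $\mc{C}_k$ than $f$ is to $\mc{C}_n$. So \cref{lm: sample-based lifted codes soundness} says nothing there.

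The paper's only base case is $n=k$, where it needs $\gamma/q^k\le\eta$, i.e.\ $q^{O(1)}Q_k\le q^k/128$. In your language this is $\sum_{m>k}M/N_m\le 2M/q^k\le 1/2$. You are right that $Q_k\le q^k/10^4$ alone does not give this. The paper's appeal to \cref{lm: q_k small}, which only yields $Q_k/q^k\le 1/500$, glosses over the same $q^{O(1)}$ factor. The remedy, however, is to require $k$ large enough to absorb that factor, not to start the induction above $k$.
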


Finally, we also show that there is a semi-sample tester for general affine invariant families, provided that there is an $r$-flat tester with some known $r$ that has strong enough soundness. The proof is outlined in \Cref{sec: general test proof}. 

\begin{theorem}\label{thm:semi-sample-based soundness general}
Let $\mc{C}_n$ be an affine invariant function that is locally testable by the $r$-space test with soundness function $c(\cdot)$ and let $f: \F^n \to \F$ be an input function. Let $\delta_0$ be the smallest minimum distance of the codes $\mc{C}_r, \ldots , \mc{C}_n$, and let 
\[
s_k: = \ceil{\frac{100\ln |\mc{C}_k|}{\delta_0}} + 1.
\]
Suppose that for sufficiently large enough dimension $k$ we have
\begin{equation} \label{eq: c assumption gen aff}
c\left( \frac{1}{8s_k} \right) \geq \Omega(q^{r-k}) \quad \text{and} \quad \frac{s_k}{q^k} \leq q^{-\Theta(1)}.
\end{equation}
Then, fixing such a $k$, the tester $\mc{T}^f_{k}$ for $\mc{C}_n$ has the following properties
    \begin{itemize}
        \item Completeness: If $f \in \mc{C}_n$, then the tester accepts with probability $1$.
        \item Soundness: If $f$ is $\eps$-far from $\mc{C}_n$ then 
        \[
        \Pr\big[\mc{T}^f_{k} \big] \geq \min\bigg\{\frac 1{128}, \frac{s_k \eps}{8} \bigg\}
        \]
    \end{itemize}
   Furthermore, the query complexity of the tester is $s_{k}$.
\end{theorem}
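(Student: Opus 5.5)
Completeness is immediate: if $f\in\mc{C}_n$ then $f|_A\in\mc{C}_k$ for every $k$-space $A$, so the consistency check always succeeds. For soundness, the plan is induction on the ambient dimension $n$, following the outline of the technical overview. For a fixed $k$, the base case is $n=k$. There the tester is exactly the sample-based tester $\mathcal{T}$ with $s=s_k\ge 10\ln|\mc{C}_k|/\delta_0$ samples, and \cref{lm: sample-based lifted codes soundness} gives rejection probability at least $\min\{1/4, s_k\eps/8\}$. In fact the base case can be any $n$ with $n-k=O(1)$ once the constants are absorbed, but I would simply start the induction at $n=k$.

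For the inductive step, take $n>k$ and suppose the bound holds in dimension $n-1$. Write $\eps=\dist(f,\mc{C}_n)$. In the small distance case $\eps\le 1/(8s_k)$ I would argue directly, in the style of \cite{BKSSZ10}. Let $g\in\mc{C}_n$ be closest and $h=f-g$, with support $T$ of density $\eps$. First, the random $k$-space $A$ is a good sampler for $T$: by a second moment computation as in \cref{clm: chebyshev sampling}, $|A\cap T|/q^k$ is close to $\eps$ with good probability. Second, with probability about $s_k\eps(1-s_k\eps)\ge s_k\eps/2$, exactly one sampled point lies in $T$, while the remaining $s_k-1\ge 10\ln|\mc{C}_k|/\delta_0\cdot(\text{const})$ points determine $g|_A$ uniquely within $\mc{C}_k$ by \cref{claim:sample_based_ub general}. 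In that event no codeword of $\mc{C}_k$ can agree with $f$ on the sample, since $g|_A$ disagrees at the bad point, so the tester rejects. This gives rejection probability roughly $s_k\eps/4$, which is at least $s_k\eps/8$ after accounting for the failure probability of unique decoding.

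In the large distance case $\eps>1/(8s_k)$, I would sample $A$ by first choosing a uniform hyperplane $W$ and then a uniform $k$-space inside $W$. This gives $\Pr[\text{reject}]=\E_W\Pr[\mathcal{T}_k \text{ rejects } f|_W]$. Apply \cref{lm: hyperplane general} with closeness parameter $\eps'=1/(32s_k)$. Then $\eps'<\delta_0/6$ because $s_k\ge 100/\delta_0$, and $4\eps'<\eps$. The lemma says that if $f|_W$ were $\eps'$-close to $\mc{C}_{n-1}$ for at least $M=\max\{10^5q^4/\eps', 20q^r/c(\eps')\}$ hyperplanes, then $f$ would be $4\eps'$-close to $\mc{C}_n$, a contradiction. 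Hence at most $M$ hyperplanes are good. By \eqref{eq: c assumption gen aff}, we have $c(\eps')\ge \Omega(q^{r-k})$ (taking $c$ at $1/(8s_k)$ up to constant adjustment, or by restating the assumption at $\eps'$), so $M=O(q^k+q^4 s_k)=q^{k}\cdot O(1)$ using $s_k/q^k\le q^{-\Theta(1)}$. On every other hyperplane, $f|_W$ is $\eps'$-far, so induction gives rejection at least $\min\{1/128, s_k/256\}=1/128$. The concern is that the fraction of good hyperplanes, $M/q^{n}$, must not erode the constant $1/128$ across the induction.

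The main obstacle is this loss compounding over levels. I would resolve it by strengthening the inductive hypothesis for far functions to $1/128+\gamma_n$ with $\gamma_n$ decreasing, with the losses $\sum_{n>k} M/q^{n-1}=O(M/q^{k})$ summable geometrically. This forces $M\le q^{k-\Theta(1)}$, which is exactly what the two assumptions in \eqref{eq: c assumption gen aff}, with suitable constants, are there to provide. Concretely, I would start the induction at base dimension $k$ with the stronger bound $1/4$ from \cref{lm: sample-based lifted codes soundness}. The total loss $\sum_{n>k}M/q^{n-1}\le 2M/q^{k}$ is then made at most $1/8$ by the hidden constants, keeping the far-case rejection above $1/128$ at every level. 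Combining the two cases yields $\min\{1/128, s_k\eps/8\}$.
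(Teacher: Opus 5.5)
Your large-distance step does not close with the thresholds you chose. You apply \cref{lm: hyperplane general} with closeness parameter $\eps'=1/(32s_k)$. Outside the fewer than $M$ exceptional hyperplanes you therefore only know $\dist(f|_W,\mc{C}_{n-1})\ge 1/(32s_k)$. For such $f|_W$ the bound in dimension $n-1$ is $\min\{1/128, s_k\eps'/8\}=1/256$, not $\min\{1/128,s_k/256\}=1/128$; the factor $1/s_k$ in $\eps'$ was dropped. Your strengthened hypothesis $1/128+\gamma_n$ is only for functions that are $1/(8s_k)$-far. It does not cover hyperplanes whose distance lies in $[1/(32s_k),1/(8s_k)]$. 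Those hyperplanes get only your direct bound $s_k\eps/8$, which can be as small as $1/256$ and carries no additive slack. So at best you get about $1/256-O(M/q^k)$, not $1/128$.

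There is a worse problem. The assumption on $c$ is stated only at $1/(8s_k)$. Since $c$ is merely non-decreasing, nothing lower-bounds $c(1/(32s_k))$. Hence $M\ge 20q^r/c(\eps')$ is not controlled by the hypotheses, and the loss $M/q^{n-1}$ need not be small; restating the assumption at $\eps'$ changes the theorem. Also, your base case $n=k$ gives $\min\{1/4,s_k\eps/8\}$. This equals $1/4$ only for $\eps\ge 2/s_k$, not at the distances of order $1/s_k$ that matter.

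The missing idea is to make the direct regime and the inductive regime overlap by exactly the factor $4$ lost in \cref{lm: hyperplane general}.

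First, prove the direct bound $s_k\eps_f/8$ for all $\eps_f\le 1/(2s_k)$; your one-bad-point argument only needs $s_k\eps_f\le 1/2$.

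Second, prove by induction on $n$ that $\eps_f\ge 1/(8s_k)$ implies rejection probability at least $1/128+\gamma/q^n$. The exceptional set should be the $K$ hyperplanes on which $f|_W$ is $1/(8s_k)$-close, out of $N$ hyperplanes in total.
\begin{itemize}
\item If $K\le\gamma(1-q^{-n})$, then $K/N\le\gamma(q-1)/q^n$, and this loss is absorbed exactly by $\gamma/q^{n-1}-\gamma/q^n$.
\item Otherwise, apply \cref{lm: hyperplane general} with $\eps=1/(8s_k)$, which is precisely where the assumption on $c$ is available. It gives $\eps_f\le 1/(2s_k)$. The direct bound then yields $s_k\eps_f/8\ge 1/64$, twice the target, which leaves room for $\gamma/q^n$ once the two assumptions make $\gamma/q^k\le 1/128$.
\end{itemize}

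Separately, your small-distance case has two issues. The claimed concentration of $|A\cap T|/q^k$ around $\eps$ fails when $\eps\ll q^{-k}$, though it is not needed. More importantly, subtracting the unique-decoding failure probability fails once $s_k\eps$ is smaller than it, since that probability does not depend on $\eps$. You must decouple decoding from hitting $T$. For instance, fix the sample positions in a reference $k$-space, which alone determines decodability in $\mc{C}_k$. Then apply a uniformly random full-rank linear map, which alone determines which samples land in $T$.
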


Currently we are not aware of any affine invariant families other than lifted codes and Reed-Muller codes that satisfy \eqref{eq: c assumption gen aff}. In particular, the soundness results for testing affine invariant families in \cite{bfhhl} is inexplicit. However, we are able to give semi-sample-based testers that have non-trivial soundness without assuming \eqref{eq: c assumption gen aff}. The soundness there is non-trivial but still non-explicit. We defer this result to \cref{appendix: gen aff}. 

\subsection{Reed-Muller Codes: Proof of \texorpdfstring{\cref{th: semi-sample rm}}{Semi-sample-based Tester Theorem}} \label{sec: rm test proof}
As is usual in property testing, the completeness is clear.
To bound the query complexity, $s_k$, we recall that $t_{q,d} = \lceil \frac{d}{q-q/p} \rceil + 1$, and the bounds from \cref{fact: rm distance}, which in particular imply $\ln |\RM[k,q,d]|\le \binom{d+k}{k}\cdot \ln q $, and $1/\delta_0 \le q^{t_{q,d} - 1} \leq q^{t_{q,d}} / \ln q$. Overall, this bounds the query complexity by
\begin{equation} \label{eq: s bound}
s_k = \ceil{\frac{100\ln |\RM[k,q,d]|}{\delta_{0}}}+ 1 \leq 100q^{t_{q,d}} \cdot \binom{d+k}{k}. 
\end{equation}
\cref{eq:asymptotic_bound_for query_complexity_SSB} follows from standard bounds on binomial coefficients.

From here on we focus on the proof of soundness. Fix any dimension parameter $k$ such that
\begin{equation}
    \label{eq:RM:choice of k}
    k \geq 8d + 3\KMconst + 150
\end{equation}
as in \cref{th: semi-sample rm}.
To simplify notation, we also drop the subscript from $s_k$, writing $s = s_k$ instead, and write $\mc{T}^f_k$ in place of $\mc{T}^f_k(s_k, \RM[n,q,d])$. Also let $\eps_f := \dist(f, \RM[n,q,d])$ be the actual distance from $f$ to $\RM[n,q,d]$ and we assume that $f$ is $\eps$-far from $\RM[n,q,d]$, meaning $\eps_f \geq \eps$. Finally, we remark that by \cref{fact: rm distance}, $\delta_0$ is the minimum distance of $\RM[t,q,d]$ for all $t_{q,d} \leq t \leq n$.

Similar to the soundness analysis of the standard subspace test for Reed-Muller codes in \cite{BKSSZ10}, we split our soundness analysis into two cases based on if the magnitude of $\eps_f$ is small or large. The analyses of these two cases together will show the desired soundness for \cref{th: semi-sample rm}.

Before going into the proof, we state and prove two key claims. The first claim shows that $s_k$ is sufficiently smaller than $q^k$.

\begin{claim}\label{fact: sk small}
    Fix any constant $c > 0$. For any $k \geq 8d + 3c + 24$, it holds that
    \[
        \frac{s_k}{q^k} \leq q^{-c}.
    \]
    
\end{claim}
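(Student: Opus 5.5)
We start from the bound \eqref{eq: s bound}, namely $s_k \le 100\, q^{t_{q,d}} \binom{d+k}{k}$. It therefore suffices to show
\[
100\, q^{t_{q,d}} \binom{d+k}{d} \le q^{k-c}.
\]
We bound each factor on the left by an explicit power of $q$ and then compare exponents.

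For the first factor, $t_{q,d} = \ceil{\frac{d+1}{q-q/p}}+1$. Since $q - q/p \ge q/2 \ge 1$, we get $t_{q,d} \le d+2$. Also $100 \le 2^7 \le q^7$. So the first factor is at most $q^{d+9}$.

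For the binomial coefficient we use $\binom{d+k}{d} \le 2^{d+k}$. This turns out to be too lossy, because it forces $k(1-\log_q 2) \ge \dots$, and that fails when $q=2$. Instead we use
\[
\binom{d+k}{d} \le \Big(\frac{e(d+k)}{d}\Big)^d.
\]
Write $k = \lambda d$; the hypothesis gives $\lambda \ge 8$ whenever $d\ge1$ (the case $d=0$ is trivial, since then $\binom{k}{0}=1$). The bound becomes $\big(e(1+\lambda)\big)^d$. We want this to be at most $2^{\alpha k} \le q^{\alpha k}$ for some $\alpha<1$, for instance $\alpha = 1/2$. That amounts to checking
\[
e(1+\lambda) \le 2^{\lambda/2} \quad \text{for all } \lambda \ge 8.
\]
At $\lambda=8$ the left side is $9e \approx 24.5$ and the right side is $16$, so this fails. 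We therefore try $\alpha = 3/4$: at $\lambda = 8$ the right side is $2^6 = 64 \ge 24.5$. Moreover the ratio $2^{3\lambda/4}/(1+\lambda)$ is increasing for $\lambda \ge 8$, so the inequality holds for all $\lambda\ge 8$. This gives $\binom{d+k}{d} \le q^{3k/4}$.

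Combining the two bounds, we need $d + 9 + \tfrac34 k + c \le k$, i.e.\ $k \ge 4d + 4c + 36$. This does not quite follow from $k \ge 8d+3c+24$ when $c$ is large, because the coefficient of $c$ is $4 > 3$. \textbf{This constant bookkeeping is the main obstacle.} The fix is to handle the slack adaptively rather than fixing $\alpha$ in advance. Write $k = 8d + m$ with $m \ge 3c+24$. Then
\[
\binom{d+k}{d} \le \Big(e\big(9 + \tfrac{m}{d}\big)\Big)^d,
\]
and its $\log_2$ grows only logarithmically in $m$. So we can show $\log_2 \binom{d+k}{d} \le 5d + m/3$, using $d\log_2\big(e(9+m/d)\big) \le 5d + m/3$, which is a one-variable check in $x = m/d$: $\log_2\big(e(9+x)\big) \le 5 + x/3$ holds at $x=0$ since $\log_2 24.5 < 5$, and the right side grows faster thereafter. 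Then, using $q\ge 2$,
\[
\log_q \binom{d+k}{d} \le 5d + m/3,
\]
and the required inequality becomes $d + 9 + 5d + m/3 + c \le 8d + m$. This holds since $\tfrac{2m}{3} \ge 2c + 16 \ge c + 9$. If needed, we would tighten the estimate of the $q^{t_{q,d}}$ factor in the same way (e.g.\ $t_{q,d} \le d/2 + 3$ in most cases), but the slack above already suffices.
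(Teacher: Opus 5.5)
Your proof is correct, but it takes a different route from the paper.

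\textbf{The paper's route.} It never estimates $\binom{d+k}{d}$ directly at the target $k$. It sets $B_k=100q^{t_{q,d}}\binom{d+k}{d}$ and tracks the one-step ratio
\[
\frac{B_{k+1}/q^{k+1}}{B_k/q^k}=\frac1q\Bigl(1+\frac{d}{k+1}\Bigr).
\]
It then splits into $q=2$ and $q>2$. In each case it bounds $B_{k_0}/q^{k_0}$ at a base point ($k_0=2d$, resp.\ $k_0=d$) and uses a per-step decay of $2^{-1/3}$, resp.\ $q^{-1/3}$. The hypothesis $k\ge 8d+3c+24$ is exactly $2d+3(2d+c+8)$ from the $q=2$ case.

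\textbf{Your route.} You substitute $k=8d+m$ into $\binom{d+k}{d}\le\bigl(e(d+k)/d\bigr)^d$. This reduces the binomial bound to the one-variable inequality $\log_2\bigl(e(9+x)\bigr)\le 5+x/3$. You then pass from $\log_2$ to $\log_q$ using $q\ge 2$, so a single argument covers all fields.

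\textbf{What each buys.} Your argument needs no case split, and it leaves slack: the final requirement $9+c\le 2d+2m/3$ holds with room to spare. It uses only $t_{q,d}\le d+2$, so it does not depend on the exact value of $t_{q,d}$. That matters here because the paper's $q=2$ computation takes $t_{q,d}=d+1$, whereas the displayed definition $\ceil{\frac{d+1}{q-q/p}}+1$ gives $d+2$. That discrepancy costs a factor of $2$ in the paper's base-point estimate, and your bound absorbs either convention. The paper's argument, in turn, makes the monotone decay of $s_k/q^k$ in $k$ explicit and explains where the constants in the hypothesis come from. It also yields the better threshold $7d+3c+15$ when $q>2$.

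For a write-up, you can drop the abandoned attempts with $\alpha=1/2$ and $\alpha=3/4$ and the closing remark about tightening $t_{q,d}$. Your last paragraph, together with the trivial case $d=0$, is already a complete proof.
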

\begin{proof}
    Define $B_k := 100q^{t_{q,d}} \cdot \binom{d+k}{d}$, the upper bound on $s_k$ from \eqref{eq: s bound}.
    We focus on the expression $B_k/q^k$ instead, bound it for a proper dimension $k_0$, and show it decreases multiplicatively by a significant factor as $k$ increases until for certain $k(c)$ the whole quotient is small enough, immediately implying the lemma.
    One ``step" changes the quotient by a factor of
    \begin{equation}
        \label{eq: B_k iterative bound}      \frac{\frac{B_{k+1}}{q^{k+1}}}{\frac{B_k}{q^k}}
        = \frac{1}{q}\cdot \frac{B_{k+1}}{B_k}
        = \frac{1}{q} \cdot \left(1 + \frac{d}{k+1}\right) \ ,  
    \end{equation}
    where the second equality is due to quotient of ``consecutive" binomial coefficients.

    From here on, We split into two cases. 
    \paragraph{The case $q=2$.}
    Here, it holds that $t_{q,d} = d+1$. We start from dimension $k_0 = 2d$ and plug in $q$ and $t_{q,d}$ to get:
    \[
        \frac{B_{2d}}{q^{2d}}
        \leq 100 q^{t_{q,d} - 2d} \cdot\binom{3d}{d}
        \leq 100 q^{1 - d}  \cdot 2^{3d}
        \le 2^{2d+8}\  .
    \]
    
    By \eqref{eq: B_k iterative bound}, for any choice of $k \geq 2d$ the quotient diminishes by a factor of at most 
    \[
        \frac{\frac{B_{k+1}}{q^{k+1}}}{\frac{B_k}{q^k}}
        = \frac{1}{q} \cdot \left(1 + \frac{d}{k+1}\right) < 2^{-1/3} \ .
    \]
    Applying the above for $z \geq 3(2d + c+ 8)$ steps, we get
    \[
        \frac{B_{2d+z}}{q^{2d+z}}
        \leq \frac{B_{2d}}{q^{2d}} \cdot  \left(2^{-1/3}\right)^z
        \le 2^{2d+8-z/3} 
        \leq 2^{-c} \ .
    \]
    That is, for any dimension $k = 2d+z \geq 8d + 3c + 24$ the assertion holds for $q=2$.
    
     \paragraph{The case $q > 2$.}
     Here we have $t_{q,d} \leq d$. Plugging this and $q\geq 3$, we directly calculate for $k_0 = d$:
    
    We next bound the quotient for $k=d$ (for $q > 2$ we have $t_{q,d} \leq d$):
    \[
        \frac{B_d}{q^d}
        \leq 100 q^{t_{q,d} - d} \cdot\binom{2d}{d}
        \leq 100 \cdot 4^d 
        \leq q^{2d+5} \ .
    \]
    By \eqref{eq: B_k iterative bound}, for any $k \geq d$, the quotient diminishes by a factor of
    \[
        \frac{\frac{B_{k+1}}{q^{k+1}}}{\frac{B_k}{q^k}}
        = \frac{1}{q} \cdot \left(1 + \frac{d}{k+1}\right) < \frac{2}{q} <q^{-1/3} \ .
    \]
    Applying the above iteratively for $z \geq 3(2d+c+5)$ steps, we get
    \[
        \frac{B_{d+z}}{q^{d+z}}
        \leq \frac{B_{d}}{q^{d}} \cdot  q^{-z/3}
        \leq q^{2d + 5 - z/3}
        \le q^{-c} ,
    \]
    concluding that for any dimension $k = d+z \geq 7d + 3c + 15$ the assertion holds for $q > 2$.
   
\end{proof}

The second claim is a specialization of \cref{lm: hyperplane general} to degree $d$ Reed-Muller codes with the parameters that we will use. 

\begin{claim} \label{lm: hyperplane deg d}
Suppose there are 
$M = 10^6 q^{\max\set{4, \KMconst}}\cdot s$ distinct
hyperplanes $W_1, \ldots, W_M \subset \F^n$ such that for each $i = 1, \ldots, M$, $f|_{W_i}$ is $\frac{1}{8s}$-close to a degree $d$ function on $W_i$. Then $f$ is $\frac{1}{2s}$-close to a degree $d$ function.
\end{claim}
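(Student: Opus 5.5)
We will derive the claim from \cref{lm: hyperplane general}, applied with $\mc{C}_n = \RM[n,q,d]$, $t = t_{q,d}$ and $\eps = \frac{1}{8s}$. Its conclusion gives that $f$ is $4\eps = \frac{1}{2s}$-close to $\RM[n,q,d]$, which is exactly what we want. So the work is to check the hypotheses of \cref{lm: hyperplane general} with these parameters.

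\textbf{Step 1 (setup).} First, $\RM[n,q,d]$ is affine invariant, and $\mc{C}_{n-1}$ is exactly the set of degree at most $d$ functions on a hyperplane. So the assumption on the $W_i$'s is precisely the hypothesis of \cref{lm: hyperplane general}. By \cref{thm: rm local testability}, the code is locally testable via the $t_{q,d}$-space test. The soundness function is $c(\eps) = q^{-\KMconst}\min\{1, q^{t_{q,d}}\eps\}$, which is continuous, non-decreasing, and satisfies $c(0)=0$. By \cref{fact: rm distance}, $\delta_0 = q^{-d/(q-1)}$ is the minimum distance of every $\RM[k,q,d]$ for $t_{q,d}\le k\le n$.

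\textbf{Step 2 ($\eps < \delta_0/6$).} We need $\frac{1}{8s} < \delta_0/6$. Since $s \ge 100\ln|\RM[k,q,d]|/\delta_0 \ge 100\ln q/\delta_0$, we get $\frac{1}{8s} \le \frac{\delta_0}{800\ln q} < \delta_0/6$ for $q\ge 2$.

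\textbf{Step 3 (size of $M$).} We need $M \ge \max\{10^5q^4/\eps,\ 20q^{t}/c(\eps)\}$.

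The first term equals $8\cdot 10^5 q^4 s$. This is at most $10^6 q^{\max\{4,\KMconst\}} s$.

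For the second term, there are two cases. If $q^{t_{q,d}}\eps \ge 1$, then $c(\eps)=q^{-\KMconst}$ and the requirement becomes $20q^{t_{q,d}+\KMconst}$. This is at most $20 q^{\KMconst}\cdot 8s$, since $q^{t_{q,d}}\le 8s$ in this case. Otherwise $c(\eps) = q^{t_{q,d}-\KMconst}/(8s)$, and the requirement becomes $160 q^{\KMconst} s$. In both cases the bound is at most $10^6 q^{\max\{4,\KMconst\}}s$.

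\textbf{Main obstacle.} The only delicate point is the case split in the soundness function: the $\min\{1,\cdot\}$ must not cost more than a factor $q^{\KMconst}$ beyond $s$. As Step 3 shows, the $q^{t_{q,d}}$ factors cancel in both regimes, so this works out.

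With all hypotheses verified, \cref{lm: hyperplane general} gives that $f$ is $\frac{1}{2s}$-close to a degree $d$ function.
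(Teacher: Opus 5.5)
You follow the same route as the paper: apply \cref{lm: hyperplane general} to $\RM[n,q,d]$ with $t=t_{q,d}$ and $\eps=\frac{1}{8s}$, then check $\eps<\delta_0/6$ and the two lower bounds on $M$. Steps 1 and 2 are correct, as is the first term of Step 3. Your second case of the soundness function, which gives the requirement $160q^{\KMconst}s\le M$, is also correct and matches the paper's computation.

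The gap is in your first case. With $\eps=\frac{1}{8s}$, the condition $q^{t_{q,d}}\eps\ge 1$ means $q^{t_{q,d}}\ge 8s$, not $q^{t_{q,d}}\le 8s$. So in this regime the requirement $20q^{t_{q,d}+\KMconst}$ is \emph{at least} $160q^{\KMconst}s$. Nothing you have written bounds it by $M=10^6q^{\max\{4,\KMconst\}}s$; you would need $q^{t_{q,d}}\le 5\cdot 10^4\,q^{\max\{4,\KMconst\}-\KMconst}s$. Your remark that ``the $q^{t_{q,d}}$ factors cancel in both regimes'' is therefore wrong. Once the minimum saturates at $1$, $c(\eps)=q^{-\KMconst}$ contains no $q^{t_{q,d}}$ to cancel against.

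What is missing is an argument that the saturated regime does not occur, i.e.\ that $8s\ge q^{t_{q,d}}$. This is exactly the step the paper's proof relies on. It asserts $8s\ge q^{t_{q,d}}$, so that $c(\frac{1}{8s})=q^{t_{q,d}-\KMconst}/(8s)$ and only your second case is needed.

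This inequality is a genuine condition on the parameters, not a formality. It has to be derived from $s\ge 100\ln|\RM[k,q,d]|/\delta_0$ and the size of $k$, and it does not hold for every $q$. For instance, for $d=1$ and $q\ge 3$ one has $t_{q,1}=2$ while $s=\Theta(k\ln q)$, so $8s<q^2$ once $q$ is large compared to $k$. In that regime the claim with this particular $M$ does not follow from \cref{lm: hyperplane general}.

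One fix is to replace $s$ by $\max\{s,q^{t_{q,d}}\}$ in $M$, and correspondingly in $\gamma$ in \cref{lm: large distance rm}. This is harmless downstream: since $k\ge 8d+3\KMconst+150$ and $t_{q,d}\le d+2$, the quantity $10^7q^{\max\{4,\KMconst\}+t_{q,d}}/q^k$ stays far below $\eta$.
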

\begin{proof}
We apply \cref{lm: hyperplane general} with the $\eps$ therein set to $\frac{1}{8s}$ (which indeed satisfies $\frac{1}{8s} \leq \frac{\delta_0}{6}$). The conclusion suffices for our proof.
It therefore remains to verify that $M$ is large enough to satisfy the premise of \cref{lm: hyperplane general}.
In particular, we check that 
\begin{equation}
    \label{eq:RM:verify large M}
M \overset{?}{\geq} \max\left\{\frac{10^5 q^4}{1/(8s)}, \frac{20q^{t_{q,d}}}{c\left(\frac{1}{8s}\right)} \right\},
\end{equation}
where $c( \cdot )$ is the soundness function of \cref{def: t test assump} for Reed-Muller codes. 

Clearly, we have $M \geq \frac{10^5 q^4}{1/(8s)}$, which leaves the need to verify $M$ is larger than the second term. By \Cref{thm: rm local testability}, the $t_{q,d}$-flat test has soundness function $c(\eps) = q^{-\KMconst} \cdot \min\set{1, q^{t_{q,d}}\eps}$ for Reed-Muller codes, and in particular $c\left(\frac{1}{8s}\right) = q^{-\KMconst} \cdot \frac{q^{t_{q,d}}}{8s}$, since $8s \geq q^{t_{q,d}}$. This implies 
\[
    \frac{20q^{t_{q,d}}}{c\left(\frac{1}{8s}\right)}
    = 20q^{t_{q,d}} \cdot q^{\KMconst}\cdot\frac{8s}{q^{t_{q,d}}} = 160q^{\KMconst} \cdot s \leq M \ ,
\]
which shows our choice of $M$ indeed satisfies \eqref{eq:RM:verify large M}, concluding the proof.
\end{proof}

\subsubsection{Small distance.}

\begin{lemma}\label{lm: small distance general}
    If $\eps_f\leq\frac{1}{2s}$ then 
    $$\Pr\big[\mathcal{T}_k^f \text{ rejects}\big]\ge \frac{s\eps_f}{8}\geq \frac{s\eps}{8}.$$
\end{lemma}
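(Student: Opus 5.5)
Write $f = g + h$, where $g \in \RM[n,q,d]$ is a closest codeword and $T := \mathrm{supp}(h)$ has $\mu(T) = \eps_f \le \frac{1}{2s}$. The plan is to show that with probability at least $s\eps_f/8$, the sample $S$ hits $T$ exactly once, and the remaining points of $S$ pin down a unique degree-$d$ function on $A$, namely $g|_A$. In that event, any degree-$d$ function agreeing with $f$ on $S$ would equal $g|_A$ and hence disagree with $f$ at the single point of $S \cap T$. So the tester rejects.

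\textbf{Step 1 (uniformity).} Sampling a uniformly random $k$-dimensional subspace $A$ and then a uniform point of $A$ yields a point of $\F^n$ that is uniform up to the special role of $0$. Each fixed sample lands in $T$ with probability essentially $\eps_f$, and any two distinct samples are nearly pairwise independent. I will either handle $0$ separately or absorb the small error using $q^{-k}$ terms, which are tiny by \cref{fact: sk small}.

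\textbf{Step 2 (exactly one hit).} Let $X$ be the number of samples (out of $s$) landing in $T$. Then $\E[X] \approx s\eps_f$ and $\E[X(X-1)] \lesssim s^2\eps_f^2$, so by inclusion–exclusion
\[
\Pr[X = 1] \ge \E[X] - \E[X(X-1)] \ge s\eps_f\,(1 - s\eps_f) \ge s\eps_f/2,
\]
using $s\eps_f \le 1/2$. By symmetry, condition on the event that sample $1$ lies in $T$ and samples $2,\dots,s$ avoid $T$; this costs only a factor of $s$ relative to the single-index bound, which is exactly what gives the $s\eps_f$ term.

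\textbf{Step 3 (unique decoding on $A$).} Conditioned on $A$, the other $s-1 \ge 100\ln|\RM[k,q,d]|/\delta_0$ samples are uniform in $A$. By \cref{claim:sample_based_ub general} applied to $\RM[k,q,d]$ with distance $\delta_0$, with probability at least $1 - e^{-\delta_0(s-1)}|\RM[k,q,d]|^2 \ge 0.99$ no two distinct degree-$d$ functions on $A$ agree on these points. Since the event from Step 2 only conditions these points to avoid $T$, which has tiny density (at most $\frac{1}{2s}$, so it does not lower the probability by more than a constant factor), the conditioned probability of unique decoding is still at least $1/2$. I would formalize this with the bound $\Pr[\mathrm{bad} \mid \mathrm{avoid}] \le \Pr[\mathrm{bad}]/\Pr[\mathrm{avoid}]$, where $\Pr[\mathrm{avoid}] \ge 1 - s\eps_f \ge 1/2$. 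On this event together with Step 2's event, the unique decoding is $g|_A$ (since $f = g$ on those points), and the tester rejects.

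Multiplying these bounds gives roughly $\frac{s\eps_f}{2}\cdot\frac{1}{2}\cdot$ (near-uniformity losses), which is at least $s\eps_f/8$. The main obstacle I expect is Step 1. The marginal of a sampled point is not exactly uniform, and the two points in the pair may be the same vector, since sampling is with replacement. I would handle this by working conditionally on $A$ with $\mu_A(T \cap A)$ as a random variable, using $\E_A[\mu_A(T\cap A)] \approx \eps_f$ and a second-moment bound on $\mu_A(T\cap A)$ from pairwise near-independence of points in a random subspace. Alternatively, I would apply the Step 2 inequality pointwise in $A$ and then average, using the convexity of $x(1-sx)$ on the relevant range, with Markov's inequality to discard subspaces $A$ where $\mu_A(T \cap A) > \frac{1}{s}$.
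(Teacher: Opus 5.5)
Your overall strategy is sound and differs from the paper's route. However, two of the justifications you sketch do not hold as written and need replacing.

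\textbf{Comparison.} The paper splits $S$ into a decoding part $S_1$ (a tenth of the samples) and a detection part $S_2$. It realizes the tester as a fixed configuration $S_1',S_2'$ inside a fixed $U$, pushed forward by a random invertible linear map $T$. Unique decoding ($E_1$) then depends only on $S_1'$ and factors out exactly. For each fixed configuration one only needs $T(S_1')$ to miss $\Delta$ (union bound) and $T(S_2')$ to hit it at least once (inclusion--exclusion), so several hits in $S_2$ are harmless. You instead require exactly one hit among all $s$ samples and decode from the other $s-1$, decoupling by conditioning on $A$. This uses every non-hit point for decoding. The price is that you must control the second moment of the hit count and the correlation between decoding and the hit pattern. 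Both proofs ultimately rest on near pairwise independence of two sample points.

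\textbf{Step 3 conditions on the wrong event.} Your conditioning event is $\{x_1\in T\}\cap\{x_2,\dots,x_s\notin T\}$. Conditioning on $x_1\in T$ size-biases $A$ by $\alpha(A):=|T\cap A|/|A|$, so the unconditional bound $\Pr[\text{avoid}]\ge 1-s\eps_f$ is not the quantity you need. Dividing by the joint probability, which is of order $\eps_f$, would be useless. The clean fix avoids ratios entirely. Given $A$, the samples are i.i.d.\ uniform in $A$, so the bad-decoding event $B_{-1}$ for $x_2,\dots,x_s$ is independent of $x_1$. By \cref{claim:sample_based_ub general}, $B_{-1}$ has probability at most $0.01$ for \emph{every} $A$. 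Hence $\Pr[x_1\in T\wedge B_{-1}]=\E_A[\alpha(A)\Pr[B_{-1}\mid A]]\le 0.01\,\E[\alpha]$. Summing the disjoint events over the hit index gives
\[
\Pr[\text{reject}]\ \ge\ \Pr[X=1]-0.01\,s\,\E[\alpha]\ \ge\ 0.99\,s\,\E[\alpha]-s^2\,\E[\alpha^2].
\]

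\textbf{Your alternative for Step 1 fails; use the second moment.} The function $x(1-sx)$ is concave, not convex, so Jensen bounds the average from \emph{above}. Truncating $\{\alpha>1/s\}$ via Markov can discard all of $\E[\alpha]$. Nothing in that argument rules out, e.g., $\alpha\in\{0,10/s\}$, where exactly one hit has probability about $10e^{-10}$ on the heavy subspaces. So you need $\E[\alpha^2]=\Pr[x_1,x_2\in T]$ directly:
\begin{itemize}
\item Conditioned on being linearly independent, $(x_1,x_2)$ is a uniform independent pair in $\F^n$ by $\mathrm{GL}_n(\F)$-symmetry. This case contributes at most $\eps_f^2(1+O(q^{1-n}))$.
\item The dependent case covers more than $x_1=x_2$: it includes a zero vector and scalar multiples. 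It contributes at most $\Pr[x_1\in T,\,x_2\in\mathrm{span}(x_1)]+\Pr[x_1=0,\,x_2\in T]\le 2q^{1-k}\E[\alpha]$.
\end{itemize}
You also need $\E[\alpha]\ge(1-q^{-k})\eps_f$; check the case $0\in T$ separately. Combining this with $s\eps_f\le 1/2$ and $sq^{1-k}\le 0.01$ (from \cref{fact: sk small}), the display yields $\Pr[\text{reject}]\ge 0.4\,s\eps_f\ge s\eps_f/8$. Note that the dependent-pair error must be absorbed into the \emph{linear} term. It cannot go into $s^2\eps_f^2$ as your Step 2 suggests, because $\eps_f$ may be far below $q^{-k}$.
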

\begin{proof}
Let $g^* \in \RM[n,q,d]$ be such that $\dist(f,g)=\eps_f$. Let $S_1$ be the first $1/10$ of the points chosen in $S$ and $S_2$ be the remaining $9/10$ of the points chosen in $S$, so $|S_2| = 9s/10$.

Let $\Delta = \{x \in \F^n \; | \; f(x) \neq g^*(x) \}$. The tester proceeds by choosing a linear subspace $U \subset \F^n$ of dimension $k$ and then sampling $S_1, S_2 \subseteq U$.

For our analysis, it will be convenient to think of $S$ as being chosen as follows.

\begin{enumerate}
    \item Fix $U \subseteq \F^n$ to be a generic $k$-dimensional subspace of dimension $U$ and choose $S'_1 \subseteq U$ of size $|S_1|$.
    \item Choose $S'_2 \subseteq U$ of size $|S_2|$.
    \item Choose a random, full-rank linear transformation $T: \F^n \to \F^n$.
    \item Set $S_i = \{T(x) \; | \; x \in S_i\}$ for $i = 1,2$.
    \item Set $S = S_1 \cup S_2$.
\end{enumerate}
In the above view of $\mc{T}^f_k$, the $k$-dimensional subspace fixed by the tester is $T \circ U$, and the points queried are $S \subseteq T \circ U$. To analyze the rejection probability, we define the following events:
\begin{itemize}
    \item $E_1:$ the event that every pair of degree $d$ functions on $U$ (these correspond to $\RM[k,q,d]$) disagree on at least one point in $S'_1$. 
    \item $E_2:$ the event that $f$ and $g^*$ agree on $S_1$, i.e.\ $f|_{S_1} \equiv g^*|_{S_1}$.
    \item $E_3:$ the event that $S_2 \cap \Delta \neq \emptyset$.
\end{itemize}

One can check that if all three events above occur, then the tester rejects. Indeed, if $E_1$ and $E_2$ both occur, then $g^*|_{T \circ U} \in \RM[k,q,d]$ is the unique degree $d$ function over $T \circ U$ that agrees with $f$ on $S_1$. The tester will then reject if $g^*$ and $f$ disagree on any point on $S_2$ -- in other words, if $E_3$ occurs. Thus, 
\[
\Pr[\mc{T}^f_k \text{ rejects}] \geq \Pr_{S_1, S_2}[E_1 \land E_2 \land E_3] =  \Pr_{S'_1}[E_1] \cdot \Pr_{S'_1, S'_2, T}[E_2 \land E_3 \; | \; E_1]. 
\]
We now bound each term appearing above. First note that $E_1$ does not depend on the identity of the subspace $U$. By \cref{claim:sample_based_ub general}, we have 
\[
\Pr_{S'_1}[E_1] \geq 1- e^{-\delta_0 |S_1|} |\RM[k,q,d]|^2 \geq \frac{1}{2}.
\]
For the second term, we have
\[
 \Pr_{S'_1, S'_2, T}[E_2 \land E_3 \; | \; E_1] \geq  \Pr_{S'_1, S'_2, T}[ E_3 \; | \; E_1] -  \Pr_{S'_1, S'_2, T}[\overline{E_2} \; | \; E_1].
\]
We will bound each term on the right-hand side. Let us suppose $S'_1$ and $S'_2$ have already been sampled, and now bound the probabilities over random $T$.

For the first term, we use the principle of inclusion-exclusion to get a lower bound:
\begin{align*}
 \Pr_{T}[ E_3 \; | \; E_1] &\geq \sum_{x \in S'_2 \setminus\{0\}} \Pr_{T}[T(x) \in \Delta] - \sum_{x, y \in S'_2\setminus\{0\}, x\neq y,}\Pr_{T}[T(x), T(y) \in \Delta] \\
 &\geq \frac{9s/10-1}{2}\eps_f - 2 \frac{(9s/10-1)^2}{4}\eps_f^2 \\
 &\geq \frac{3s\eps_f}{8}\ .
\end{align*}
In the first transition we are using the fact that $T(x)$ is uniformly random over random $T$ and the fact that $T(x), T(y)$ are nearly independent for $x \neq y$. In the last transition we are using the fact that $s\eps_f/4 \leq 1/8$ by assumption. 

To upper bound the second term (again as a probability over $T$, fixing $S'_1, S'_2$), we use a union bound:
\[
\Pr_{S'_1, S'_2, T}[\overline{E_2} \; | \; E_1] \leq \sum_{x \in S'_1}\Pr_{T}[T(x) \in \Delta] \leq \frac{s\eps_f}{10}.
\]

Putting everything together, we get
\[
\Pr[\mc{T}^f_k \text{ rejects}] \geq \Pr_{S'_1}[E_1] \cdot \Pr_{S'_1, S'_2, T}[E_2 \land E_3 \; | \; E_1] \geq \frac{1}{2} \cdot \frac{s\eps_f}{4} \geq \frac{s\eps_f}{8}\ .
\]
\end{proof}

\subsubsection{Large distance.}
We have already shown that the tester has the desired soundness when $\eps_f \leq  \frac{1}{2s}$. To complete the proof of \cref{th: semi-sample rm}, it remains to show the desired soundness when $\eps_f > \frac{1}{2s}$. In particular, the following lemma suffices.
\begin{lemma} \label{lm: large distance rm}
    Set $\eta := \frac{1}{128}$ and $\gamma := 10^7 q^{\max\set{4, \KMconst}}\cdot s$.
    The following holds for every $n \geq k$. Let $f\colon\F_q^n\to\F_q$ be such that $\dist(f, \RM[n, q, d]) = \eps_f\geq 16\eta/s$. Then 
    $$\Pr\big[\mathcal{T}_k^f \text{ rejects}\big]\geq \eta+\frac{\gamma}{q^n}.$$
\end{lemma}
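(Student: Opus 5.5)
We prove \cref{lm: large distance rm} by induction on $n$, following the outline of \cite{BKSSZ10} described in the technical overview.

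\textbf{Base case.} The base case is $n$ small, say $k \le n \le n_0$ with $n_0 := k$. When $n=k$ the subspace $A$ is all of $\F^n$, so $\mc{T}^f_k$ is exactly the sample-based tester $\mathcal{T}$ with $s$ samples. Since $s \ge 10\ln|\RM[k,q,d]|/\delta_0$, \cref{lm: sample-based lifted codes soundness} gives rejection probability at least $\min\{1/4, s\eps_f/8\} \ge \min\{1/4, 2\eta\} = 2\eta$. It remains to check that $\gamma/q^k \le \eta$. This follows from \cref{fact: sk small} with the constant $c$ chosen as $\max\{4,\KMconst\}+C$, where $q^C\ge 10^7\cdot 128$: this gives $s/q^k \le q^{-c}$, hence $\gamma/q^k\le 1/128$. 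Here we use $k\ge 8d+3\KMconst+150$, which is where the specific threshold on $k$ comes from. Any subtlety about $\max\{4,\KMconst\}$ versus $\KMconst$ is absorbed into the constant $150$.

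\textbf{Inductive step ($n>k$).} Sample the tester's $k$-dimensional subspace $A$ by first choosing a uniformly random hyperplane $W$ and then a uniform $A\subseteq W$. This is a valid coupling, so
\[
\Pr[\mc{T}^f_k \text{ rejects}] = \E_W \Pr[\mc{T}^{f|_W}_k \text{ rejects}].
\]
There are $\frac{q^n-1}{q-1}$ hyperplanes. Let $\mathcal{B}$ be the set of hyperplanes $W$ on which $f|_W$ is $\frac{1}{8s}$-close to degree $d$; note $\frac{1}{8s}= 16\eta/s$. We split into two cases.

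\emph{Case 1: $|\mathcal{B}| \ge M := 10^6 q^{\max\{4,\KMconst\}} s$.} Then \cref{lm: hyperplane deg d} shows that $f$ is $\frac{1}{2s}$-close to degree $d$, so $\eps_f\le \frac{1}{2s}$. Now \cref{lm: small distance general} gives rejection probability at least $s\eps_f/8 \ge 2\eta$, and $2\eta \ge \eta+\gamma/q^n$ since $\gamma/q^n\le\gamma/q^k\le\eta$.

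\emph{Case 2: $|\mathcal{B}| < M$.} For every $W\notin\mathcal{B}$, the restriction $f|_W$ has distance at least $\frac{1}{8s}=16\eta/s$ from $\RM[n-1,q,d]$, so the inductive hypothesis in dimension $n-1$ applies and gives rejection probability at least $\eta+\gamma/q^{n-1}$. Since rejection probability is nonnegative on $\mathcal{B}$, we get
\[
\Pr[\text{rej}] \ge \Bigl(1-\frac{M(q-1)}{q^n-1}\Bigr)\Bigl(\eta + \frac{\gamma}{q^{n-1}}\Bigr) \ge \eta + \frac{\gamma}{q^{n-1}} - \frac{2M}{q^{n-1}}\,(\eta+1).
\]
Because $\gamma = 10M$ and $\eta+1\le 2$, we have $\gamma/q^{n-1} - 4M/q^{n-1} \ge 6M/q^{n-1}$. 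The right-hand side above is therefore at least $\eta + 6M/q^{n-1} \ge \eta+\gamma/q^n$, provided $q\ge 2$ and $6q\ge 10$. This closes the induction.

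\textbf{Main obstacle.} The delicate points are the constants. First, the additive slack $\gamma/q^n$ must shrink by a factor of $q$ per dimension while absorbing the loss $M/q^{n-1}$ from bad hyperplanes, which forces $\gamma \gtrsim qM/(q-1)$. Second, at the base case we need $\gamma/q^k \le \eta$, which forces $k$ large via \cref{fact: sk small}. I also expect to double-check that the hyperplane-then-subspace coupling is exactly uniform, and that the closeness thresholds line up ($\frac{1}{8s}$ for hyperplanes versus $\frac{1}{2s}$ in \cref{lm: small distance general}).
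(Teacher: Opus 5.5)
Your proposal is correct and follows essentially the same route as the paper. The base case $n=k$ goes through \cref{lm: sample-based lifted codes soundness} and \cref{fact: sk small}, and the inductive step uses the hyperplane decomposition \eqref{eq: decomp rej} with a case split on how many hyperplanes have $f|_W$ $\frac{1}{8s}$-close to degree $d$: \cref{lm: hyperplane deg d} plus \cref{lm: small distance general} when there are many, and the induction hypothesis when there are few. The only difference is cosmetic: you split at $M=\gamma/10$ and absorb the loss with slightly looser arithmetic, whereas the paper splits at $K=\gamma(q^n-1)/q^n$ so that $\eta+\gamma/q^{n-1}-K/N\ge\eta+\gamma/q^n$ holds exactly.
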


We will analyze the large distance case by induction of $n$. Consider the following tester $\mathcal{T}'_k$: the tester first picks a (linear) hyperplane $W$ and then
simulates $\mathcal{T}_k^{f|_{W}}$. That is, the tester chooses a random $k$-dimensional subspace $U \subseteq W$ and then performs the sample-based tester inside of $U$. We note that  $U$ is still a uniformly random $k$-dimensional subspace in $\F^n$, so it follows that
\begin{equation} \label{eq: decomp rej}
\Pr[\mathcal{T}_k^f  \text{ rejects}]= \E_{W}\left[\Pr[\mathcal{T}_k^{f|_W}\text{ rejects}]\right].
\end{equation}

Therefore, in order to conclude that the rejection probability is high, we will invoke \cref{lm: hyperplane deg d} to say that for nearly all of the hyperplanes $W$, the distance of $f|_W$ to $\mc{C}_{n-1}$ is still within a constant factor of $\eps_f$. As $f|_W$ and $\mc{C}_{n-1}$ are one dimension down, we can conclude that $\mathcal{T}_k^{f|_W}$ rejects with good probability. 
Since \cref{lm: hyperplane deg d} allows us to conclude $\dist(f|_W, \mc{C}_{n-1})$ can only increase by more than a constant factor for a number of $W$ that is independent of the dimension $n$, the above argument is sufficient for the inductive step.

\begin{proof}[Proof of \cref{lm: large distance rm}]
    The proof is by induction on the ambient dimension, $n$. In the base case, $n=k$, and $\mc{T}^f_k$ is simply the sample-based tester of \Cref{lm: sample-based lifted codes soundness}.
    Therefore, $\mc{T}^f_k$ rejects with probability at least $\min\{1/4,s\eps_f/8\} \ge 2\eta$, using the premise and the definition of $\eta$.
    Next, apply \Cref{fact: sk small} with $c = \max\set{4,\KMconst} + 31$, noting that by \eqref{eq:RM:choice of k}, we have large enough $k$, yielding 
    \[
        \frac{\gamma}{q^k} 
        = \frac{10^7 q^{\max\set{4, \KMconst}}\cdot s_k}{q^k}
        \leq 10^7 q^{\max\set{4,\KMconst}} q^{-c} < \frac{1}{128} = \eta ,
    \]
    which implies rejection probability of at least $2\eta \geq \eta + \gamma/q^n$, concluding the base case.

   For the induction step, suppose that the lemma holds for testing all functions over $\F_q^{n'}$  that are sufficiently far from $\RM[n', q, d]$, for all $k \leq n' < n$.
   We will show that the lemma also holds for $f: \F_q^n \to \F_q$ with $\eps_f \geq 16\eta / s$.
    
    Let $\mathcal{H}$ denote the set of all hyperplanes in $\F^n$ and let $N=|\mathcal{H}|=(q^n-1)/(q-1)$. Define 
    \[
    \mathcal{H}^*=\{W \in \mathcal{H}\ :\ \dist(f|_W,\mc{C}_{n-1})<16\eta/s\} \ ,
    \]
    and let $K=|\mathcal{H}^*|$. Using \eqref{eq: decomp rej} and the induction hypothesis, we can decompose the rejection probability as follows: 
    \begin{align*}
        \Pr[\mathcal{T}_k^f  \text{ rejects}] &= \E_{W \in \mc{H}}\left[\Pr[\mathcal{T}_k^{f|_W}\text{ rejects}]\right] \\
        &\geq \E_{W \in \mc{H} \setminus \mc{H}^*}\left[\Pr[\mathcal{T}_k^{f|_W}\text{ rejects}]\right] - \Pr_{W \in \mc{H}}[W \in \mc{H}^*] \\
        &\geq \eta + \frac{\gamma}{q^{n-1}} - \frac{K}{N},
    \end{align*}
    where we are applying the induction hypothesis on $f|_W$ in the third transition.

    \vspace{0.2cm}

    \noindent \textbf{Case 1: $K \leq \gamma \cdot \frac{q^n - 1}{q^n}$.} 
    
    \vspace{0.1cm}
    
    In this case, $\frac{K}{N} \leq \gamma \cdot \frac{q-1}{q^n}$. Then,
    \[
    \Pr[\mathcal{T}_k^f  \text{ rejects}] \geq \eta + \frac{\gamma}{q^{n-1}} - \frac{K}{N} \geq \eta +\frac{\gamma}{q^n},
    \]
    and we are done.
    
    \vspace{0.2cm}
    
    \noindent \textbf{Case 2: $K> \gamma \cdot \frac{q^n - 1}{q^n}$.} 
    
    \vspace{0.1cm}

   In this case $K > \gamma/2 > 10^6 q^{\max\set{4,\KMconst}} \cdot s$, which means there are at least this many hyperplanes $W\in\mathcal{H}$ on which $f|_W$ is $\frac{1}{8s}$-close to degree $d$ (by the definition of $\mathcal{H}^*$ and $\eta = 1/128$).
   This allows us to invoke \Cref{lm: hyperplane deg d} and deduce that $\eps_f \leq 1/(2s)$. 
    Finally, applying \cref{lm: small distance general} shows that 
    \[
        \Pr[\mathcal{T}_k^f  \text{ rejects}]
        \geq \frac{s\eps_f}{8}
        \geq \eta +\frac{\gamma}{q^n},
    \]
    where the second inequality uses the same argument as in the induction base case.
\end{proof}

\subsection{Lifted Affine Invariant Codes: Proof of \texorpdfstring{\cref{th: semi-sample lifted}}{Semi-sample-based Tester Theorem }} \label{sec: lifted test proof}
The proof of \cref{th: semi-sample lifted} is identical to that of \cref{th: semi-sample rm} for Reed-Muller codes, except we apply \cref{lm: hyperplane general} with the appropriate parameters for lifted affine-invariant codes instead. We also require a specialization of \cref{lm: hyperplane general} for $\mc{C}_n$, which takes the place of \cref{lm: hyperplane deg d} as well as a bound on $Q_k/q^k$ similar to \cref{fact: sk small}. We state these two lemmas below. Recall that in this subsection, $\delta_0$ is the smallest minimum distance of the codes $\mc{C}_t, \ldots, \mc{C}_n$.

\begin{lemma} \label{lm: q_k small}
    \[
    \frac{Q_k}{q^k} \leq \frac{1}{500}.
    \]
\end{lemma}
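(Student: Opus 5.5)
The bound follows directly from the hypothesis on $k$ in \cref{th: semi-sample lifted}, namely
\[
\frac{\log_q |\mc{C}_k|}{q^k} \leq \frac{\delta_0}{10^6 \log(q)} .
\]
I will rewrite $Q_k$ in terms of $\log_q|\mc{C}_k|$, bound the resulting main term with this hypothesis, and then handle the ceiling and the additive $+1$ separately.

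\textbf{Main term.} Since $\ln|\mc{C}_k| = \ln(q)\cdot \log_q|\mc{C}_k|$, we have
\[
Q_k \leq \frac{100\ln(q)\log_q|\mc{C}_k|}{\delta_0} + 2 .
\]
Dividing by $q^k$ and applying the hypothesis gives
\[
\frac{100\ln(q)}{\delta_0}\cdot \frac{\log_q|\mc{C}_k|}{q^k} \leq \frac{100\ln q}{10^6\log q} .
\]
If $\log$ means the natural logarithm, this is $10^{-4}$. If it means $\log_2$, it is at most $10^{-4}\ln 2$. Either way the main term is far below $1/1000$.

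\textbf{Additive term.} It remains to show $2/q^k \leq 1/1000$, which holds once $q^k \geq 2000$. I expect this to be the only genuine obstacle, because the lemma is stated without an explicit lower bound on $k$. I would extract such a bound from the hypothesis itself. If $\mc{C}_k$ contains at least two codewords, then $\log_q|\mc{C}_k| \geq \log_q 2$. Combined with $\delta_0 \leq 1$, the hypothesis forces
\[
q^k \geq \frac{10^6 \log(q)\,\log_q 2}{\delta_0} \geq 10^6\cdot c ,
\]
where $c$ is a constant that depends only on the base of $\log$: $\log(q)\log_q 2$ equals $\ln 2$ for the natural logarithm and $1$ for $\log_2$. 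Hence $q^k \geq 10^5$ in either case, so $2/q^k \leq 2\cdot 10^{-5}$.

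If the code is trivial ($|\mc{C}_k| \leq 1$), then $\ln|\mc{C}_k| = 0$ and $Q_k = 1$. In that case the claim only needs $q^k \geq 500$, and I would assume $k$ is large enough for this, as in \cref{fact: sk small}. Otherwise the trivial case can be excluded as degenerate, since the tester there makes a single query.

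\textbf{Conclusion.} Summing the two contributions gives $Q_k/q^k \leq 10^{-3} + 2\cdot 10^{-5} < 1/500$.
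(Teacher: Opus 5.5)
Your proof is correct and takes the same route as the paper, which says only that the bound is immediate from the rate hypothesis $\frac{\log_q|\mc{C}_k|}{q^k}\le\frac{\delta_0}{10^6\log(q)}$. Your treatment of the ceiling and the $+1$ fills in details the paper's one-line argument skips: you use the lower bound on $q^k$ that the hypothesis forces when $|\mc{C}_k|\ge 2$, and you note that the one-codeword case needs $q^k\ge 500$ separately, which the paper's later choice $k\ge 20t+100$ supplies.
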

\begin{proof}
    This follows immediately from the assumption 
    \[
    \frac{\log_q |\mc{C}_k|}{q^k} \leq \frac{\delta_0}{10^6 \log(q)}
    \]
    in \cref{th: semi-sample lifted}.
\end{proof}
\begin{lemma} \label{lm: hyperplane affine lifted}
Suppose there are $M =q^{\Theta(1)} \cdot Q_k$ hyperplanes $W_1, \ldots, W_M \subset \F^n$ such that for each $i = 1, \ldots, M$, $f|_{W_i}$ is $\frac{1}{8Q_k}$-close to a function in $\mc{C}_{n-1}$. Then $f$ is $\frac{1}{2Q_k}$-close to $\mc{C}_n$.
\end{lemma}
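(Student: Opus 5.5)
This is a direct specialization of \cref{lm: hyperplane general}, parallel to the proof of \cref{lm: hyperplane deg d}. We invoke \cref{lm: hyperplane general} with $\eps := \frac{1}{8Q_k}$. Its conclusion then gives that $f$ is $4\eps = \frac{1}{2Q_k}$-close to $\mc{C}_n$, exactly as claimed. The local-testability hypothesis comes from \cref{thm: testing lifted km}. The lifted code $\mc{C}_n = \mc{C}^{t\nearrow n}$ is locally testable via the $(t+1)$-space test with soundness function
\[
c(\eps) = q^{-c_0}\min\{1, q^{t}\eps\}
\]
for an absolute constant $c_0$. This $c$ is continuous, non-decreasing, and satisfies $c(0)=0$, so \cref{def: t test assump} holds with testing dimension $t+1$. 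The relevant minimum distance is that of $\mc{C}_{t+1},\dots,\mc{C}_n$, which is at least the $\delta_0$ of this subsection.

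Next we check the two numerical hypotheses of \cref{lm: hyperplane general}. First, $\eps < \delta_0/6$ holds because $Q_k \ge 100\ln|\mc{C}_k|/\delta_0 + 1$. Hence $\frac{1}{8Q_k} \le \frac{\delta_0}{800\ln|\mc{C}_k|} < \delta_0/6$, provided $|\mc{C}_k|\ge 2$; the trivial code case can be handled separately. Second, we need
\[
M \ge \max\Big\{10^5 q^4\cdot 8Q_k,\ \frac{20 q^{t+1}}{c(1/(8Q_k))}\Big\}.
\]
The first term is $O(q^4 Q_k)$. For the second, $c(1/(8Q_k)) \ge q^{-c_0}\min\{1, q^t/(8Q_k)\}$. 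When $8Q_k \ge q^t$ this gives a second term of at most $160\, q^{c_0+1} Q_k$. Otherwise it is at most $20q^{t+1+c_0} \le 160\, q^{c_0+1}Q_k$ as well. So choosing
\[
M = 10^6 q^{\max\{4, c_0+1\}} Q_k = q^{\Theta(1)}Q_k
\]
satisfies both conditions, and the lemma follows.

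The main obstacle is the bookkeeping of dimensions and distances. \cref{thm: testing lifted km} tests with $(r+1)$-dimensional linear subspaces, so the $t$ of \cref{lm: hyperplane general} is the base dimension plus one. We must make sure $\delta_0$ there is taken over $\mc{C}_{t+1},\ldots,\mc{C}_n$, which is dominated by the $\delta_0$ used here. We also need the hyperplanes to be distinct; this is implicit in the application in Case 2 of the induction, where they are distinct elements of $\mc{H}^*$.
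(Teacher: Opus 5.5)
\textbf{The overall plan is right, but one step is false.} Your approach is the paper's: specialize \cref{lm: hyperplane general} with $\eps=\frac{1}{8Q_k}$, get $\eps<\delta_0/6$ from the definition of $Q_k$, and check the two lower bounds on $M$ using the soundness from \cref{thm: testing lifted km}. Your bookkeeping is more careful than the paper's: you use testing dimension $t+1$ rather than $t$, take $\delta_0$ over $\mc{C}_{t+1},\dots,\mc{C}_n$ (which dominates the subsection's $\delta_0$), and note that the hyperplanes must be distinct.

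\textbf{The error.} The second case of your check of $M\ge 20q^{t+1}/c(\frac{1}{8Q_k})$ does not hold. When $8Q_k<q^t$ you have $\min\{1,q^t/(8Q_k)\}=1$, so $c(\frac{1}{8Q_k})=q^{-c_0}$ and the requirement is $M\ge 20q^{t+1+c_0}$. The inequality you assert, $20q^{t+1+c_0}\le 160\,q^{c_0+1}Q_k$, is equivalent to $q^t\le 8Q_k$. That is exactly the negation of the case hypothesis. So $M=10^6q^{\max\{4,c_0+1\}}Q_k$ does not satisfy the hypothesis of \cref{lm: hyperplane general} in this regime.

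The regime genuinely occurs for lifted codes. For example, take the base code to be the constant functions with large $t$: then $\delta_0=1$ and $Q_k=\lceil 100\ln q\rceil+1\ll q^t$. Moreover, any soundness function is at most $1$, so \cref{lm: hyperplane general} always needs $M\ge 20q^{t+1}$. Hence no $M$ of the form $q^{O(1)}Q_k$ with an exponent independent of $t$ can work there through this lemma.

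\textbf{Repair.} Let the factor $q^{\Theta(1)}$ absorb $q^t$. For instance, $M=10^6q^{\max\{4,c_0+1\}}\max\{Q_k,q^t\}$ meets both requirements in both cases. Alternatively, restrict to the regime $8Q_k\ge q^t$, which is exactly what the Reed--Muller specialization \cref{lm: hyperplane deg d} uses via $8s\ge q^{t_{q,d}}$.

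The paper's own proof simply asserts $q^{-O(1)}\min\{1,q^t/(8Q_k)\}>20q^t/M$ without splitting into cases, so it passes over the same regime. The statement's vague $q^{\Theta(1)}$ has to be read as allowing this $t$-dependence (or as assuming $Q_k\gtrsim q^t$). As written, though, your explicit inequality is false and your stated choice of $M$ does not suffice.
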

\begin{proof}
We apply \cref{lm: hyperplane general} for $\mc{C}_n$ with the $\eps$ therein set to $\frac{1}{8Q_k}$. Note that we can do so because $\frac{1}{8Q_k} \leq \frac{\delta_0}{6}$. It remains to check that $M$ is large enough to apply \cref{lm: hyperplane general}. In particular, we check that 
\[
M \geq \max\left\{\frac{10^5 q^4}{1/(8Q_k)}, \frac{20q^{t}}{c\left(\frac{1}{8Q_k}\right)} \right\},
\]
where $c( \cdot )$ is the soundness function of \cref{def: t test assump} for the lifted affine-invariant code, $\mc{C}_n$. It is clear that $M \geq \frac{10^5 q^4}{1/(8Q_k)}$, so we need only check it is also at least the second value on the right-hand side above

\cref{thm: testing lifted km} says that if $f$ is $\frac{1}{8Q_k}$ far from $\mc{C}_n$, then the $t$-space test rejects with probability at least, so 
\[
c\left(\frac{1}{8Q_k} \right) \geq q^{-O(1)} \min\left\{1, \frac{q^{t}}{8Q_k}\right\} > \frac{20q^{t}}{M},
\]
and this completes the proof.
\end{proof}

Now that we have \cref{lm: q_k small} and \cref{lm: hyperplane affine lifted}, the proof of \cref{th: semi-sample lifted} goes through by following \cref{th: semi-sample rm} and using \cref{lm: q_k small} and \cref{lm: hyperplane affine lifted} in place of \cref{fact: sk small} and \cref{lm: hyperplane deg d} respectively.

\begin{proof}[Proof of \cref{th: semi-sample lifted}]
    Fix a dimension parameter $k \geq 20t + 100$, and henceforth write $\mc{T}^f_k$ in place of $\mc{T}^f_k(Q_k, \mc{C}_n)$ for simplicity. Also let $\eps_f := \dist(f, \mc{C}_n)$ and assume $\eps_f \geq \eps$. We similarly split the analysis into a small distance and large distance case. 

    \paragraph{Small Distance Case:} Here, we similarly assume that $\eps_f \leq \frac{1}{2Q_k}$. Then the exact same proof as \cref{lm: small distance general} goes through, except we perform a union bound over $\mc{C}_k$ instead. Specifically, after $\mc{T}^f_k$ fixes a $k$-dimensional subspace $U$, we think of the $Q_k$ queries as being split into two sets, $S_1 = T \circ S'_1$ and $S_2  = T \circ S'_2$ for uniformly random $S'_1, S'_2$ of sizes $Q_k/10, 9Q_k/10$ respectively. Let $g^* \in \mc{C}_n$ be the closest function in $\mc{C}_n$ to $f$. We then bound the probability that the following events hold:
    \begin{itemize}
        \item $E_1$: the event that every pair of functions from $\mc{C}_k$ disagree on at least one point in $S'_1$. 
        \item $E_2$: the event that $f$ and $g^*$ agree on $S_1$.
        \item $E_3$: the event that $f$ and $g^*$ disagree on at least one point in $S_2$.
    \end{itemize}
    
    The number of queries, $Q_k$ is adjusted relative to $|\mc{C}_k|$ and $\delta_0$, so that we can again get
    \[
    \Pr[\mc{T}^f_k \; rejects] \geq  \Pr[E_1 \land E_2 \land E_3] \geq \frac{Q_k \eps_f}{8} \geq \frac{Q_k \eps}{8},
    \]
    using the exact same strategy in the small distance case.

    \paragraph{Large Distance Case:} Here we similarly assume $\eps_f > \frac{1}{2Q_k}$. The proof follows from a similar induction. All of the steps are exactly the same as in the proof of \cref{lm: large distance rm} except we apply \cref{lm: q_k small} and \cref{lm: hyperplane affine lifted} in place of \cref{fact: sk small} and \cref{lm: hyperplane deg d} respectively. Specifically, we change $\gamma$ in the large distance case to $q^{\Theta(1)}s$ now and keep $\eta$ the same. Then in the base case, we use \cref{lm: q_k small} to bound $\frac{\gamma}{q^k} \leq \frac{1}{128}$ and establish the base case. 

    For the inductive step we have the same Case $1$ and Case $2$ (except with the new value of $\gamma$). Case 1 is analyzed in the same way, while Case 2 uses \cref{lm: hyperplane affine lifted}.
\end{proof}

\subsection{All Affine Invariant Families: Proof of \texorpdfstring{\cref{thm:semi-sample-based soundness general}}{Semi-sample-based General Tester Theorem}} \label{sec: general test proof}

The approach of using induction as in \cite{BKSSZ10} combined with our hyperplane lemma (\cref{lm: hyperplane general} is fairly versatile and allows us to analyze the semi-sample-based tester for any affine invariant family of functions, even non-linear ones. Once again, the completeness is clear and we focus on the proof of soundness. Let $\eps_f := \dist(f, \mc{C}_n)$ and suppose $\eps_f \geq \eps$. The soundness analysis is again essentially the same as that of the Reed-Muller case, except some care is needed to make the parameters above work. 

We set 
\[
\gamma = 2 \cdot \max\left\{\frac{20q^r}{c\left(\frac{1}{8s_k} \right)}, q^{\Theta(1)} \cdot s_k \right\}.
\]
To start, we observe that $\gamma$ is small relative to $q^k$.
\begin{lemma}
 For the value of $k$ chosen, we have,
\[
\frac{q^k}{1000} \geq \gamma.
\]
\end{lemma}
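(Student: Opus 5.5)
The claim reduces directly to the two hypotheses in \eqref{eq: c assumption gen aff}. Since $\gamma$ is twice a maximum of two quantities, it suffices to show each quantity is at most $q^k/2000$, and I will check the two terms separately.

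\textbf{First term.} By the first part of \eqref{eq: c assumption gen aff}, $c\left(\frac{1}{8s_k}\right) \geq \alpha\, q^{r-k}$ for some constant $\alpha>0$ coming from the $\Omega(\cdot)$. Hence
\[
\frac{20q^r}{c\left(\frac{1}{8s_k}\right)} \leq \frac{20 q^r}{\alpha q^{r-k}} = \frac{20}{\alpha} q^k .
\]
This is already linear in $q^k$. The issue is the constant: we need $20/\alpha \leq 1/2000$. The $\Omega(q^{r-k})$ in the hypothesis has to be read with a large enough implied constant, say $c(1/(8s_k)) \geq 4\cdot 10^4\, q^{r-k}$. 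Equivalently, "sufficiently large $k$" must be chosen so that the soundness bound beats $4\cdot 10^4 q^{r-k}$. If the soundness function instead has the usual form $q^{-O(1)}\min\{1,q^r\eps\}$, one can trade a $q^{-\Theta(1)}$ factor for a larger $k$, since then the bound grows relative to $q^{r-k}$. I expect this constant bookkeeping, that is, fixing the hidden constant in the $\Omega$ so the inequality actually closes, to be the main obstacle. I would state explicitly that the implied constant is taken to be at least $4\cdot 10^4$.

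\textbf{Second term.} The second part of \eqref{eq: c assumption gen aff} gives $s_k/q^k \leq q^{-\Theta(1)}$. I will choose the $\Theta(1)$ in this assumption to exceed the $q^{\Theta(1)}$ factor in $\gamma$ by enough to absorb the constant 2000. For example, if $\gamma$ uses $q^{a}s_k$, take $s_k/q^k \leq q^{-a-11}$; then $q^a s_k \leq q^{k-11} \leq q^k/2000$ since $q \geq 2$. This is the same mechanism as \cref{fact: sk small} in the Reed--Muller case.

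Combining the two bounds, $\gamma \leq 2\cdot q^k/2000 = q^k/1000$.
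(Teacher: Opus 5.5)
Your proposal is correct and takes the same approach as the paper, whose proof simply says the bound "follows immediately from the assumptions about $k$," i.e.\ from \eqref{eq: c assumption gen aff}. Your term-by-term check spells out the constant reading that one-liner needs: implied constant at least $4\cdot 10^4$ in the $\Omega(q^{r-k})$ bound, and the $\Theta(1)$ exponent in $s_k/q^k\le q^{-\Theta(1)}$ exceeding the one in $\gamma$ by $11$.
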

\begin{proof}
    This follows immediately from the assumptions about $k$ in \cref{thm:semi-sample-based soundness general}
\end{proof}

\paragraph{Small Distance Case:} 
The small distance case is exactly the same as in that of \cref{th: semi-sample rm}. Its proof is also the same so we omit it.
\begin{lemma}\label{lm: small distance non-lin}
    If $\eps_f\leq\frac{1}{2s_k}$ then 
    $$\Pr\big[\mathcal{T}_k^f \text{rejects}\big]\ge \frac{s_k\eps_f}{8}\geq \frac{s_k\eps}{8} \geq \eta.$$
\end{lemma}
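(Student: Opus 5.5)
The plan is to rerun the proof of \cref{lm: small distance general} verbatim, with $\RM[k,q,d]$ replaced by $\mc{C}_k$, $g^*$ taken to be a closest codeword in $\mc{C}_n$, and $s$ replaced by $s_k$. Fix a generic $k$-dimensional subspace $U$. Draw $S'_1 \subseteq U$ of size $s_k/10$ and $S'_2 \subseteq U$ of size $9s_k/10$, apply a uniformly random full-rank linear map $T$, and set $S_i = T(S'_i)$. Let $\Delta = \{x : f(x) \neq g^*(x)\}$, so $\mu(\Delta) = \eps_f$. Define the same three events:
\begin{itemize}
\item $E_1$: every pair of distinct functions in $\mc{C}_k$ differ somewhere on $S'_1$;
\item $E_2$: $S_1 \cap \Delta = \emptyset$;
\item $E_3$: $S_2 \cap \Delta \neq \emptyset$.
\end{itemize}
If all three occur, the tester rejects. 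Indeed, by affine invariance $g^*|_{T(U)} \in \mc{C}_k$, and under $E_1$ and $E_2$ it is the unique member of $\mc{C}_k$ agreeing with $f$ on $S_1$. Under $E_3$ that unique function disagrees with $f$ on $S_2$, so no member of $\mc{C}_k$ is consistent with all of $S$. This step does not use linearity of the code, so it covers non-linear families as well.

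Next, bound the probability of each event.
\begin{itemize}
\item \textbf{$E_1$.} By \cref{claim:sample_based_ub general}, $\Pr[E_1] \ge 1 - e^{-\delta_0 s_k/10}|\mc{C}_k|^2$. Since $s_k/10 \ge 10\ln|\mc{C}_k|/\delta_0$, this is at least $1/2$ (trivially so when $|\mc{C}_k| \le 1$). Here $\delta_0 \le \delta(\mc{C}_k)$ by definition, provided $k \ge r$.
\item \textbf{$E_3$.} Condition on $S'_1, S'_2$ and randomize only over $T$. For each nonzero $x$, the image $T(x)$ is uniform over nonzero points, and $T(x), T(y)$ are nearly independent for $x \neq y$. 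Inclusion--exclusion then gives $\Pr_T[E_3] \ge \frac{3 s_k \eps_f}{8}$. This uses $s_k\eps_f \le 1/2$ to make the second-order term at most a constant fraction of the first.
\item \textbf{$E_2$.} A union bound gives $\Pr_T[\overline{E_2}] \le s_k\eps_f/10$.
\end{itemize}
Combining, $\Pr[E_2 \wedge E_3 \mid E_1] \ge s_k\eps_f/4$, so the rejection probability is at least $s_k\eps_f/8$. Since $\eps_f \ge \eps$, this is also at least $s_k\eps/8$.

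For the final inequality $\ge \eta$, note that the lemma is invoked in the large-distance induction only with $\eps_f \ge 16\eta/s_k$. This gives $s_k\eps_f/8 \ge 2\eta \ge \eta$, matching how the Reed--Muller proof uses it. In the regime of small $\eps$, the meaningful content is the first two inequalities.

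The main obstacle is the inclusion--exclusion step for $E_3$, where I must control two things. First, excluding the origin and the non-uniformity of $T(x)$ over nonzero points should cost only a factor near $1$, using $\mu(\Delta \setminus\{0\}) \ge \eps_f - q^{-n}$. Second, the pairwise term $\Pr[T(x), T(y) \in \Delta]$ must be at most about $\eps_f^2$. For pairs $y \notin \mathrm{span}(x)$ this holds essentially exactly. Pairs with $y \in \mathrm{span}(x)$ could be handled by discarding them, or by treating sampled repeats separately. I would first try the paper's crude bound $\frac{9s_k/10-1}{2}\eps_f - \frac{(9s_k/10)^2}{2}\eps_f^2$, absorbing the lost constants into the slack between $3/8$ and $1/4$.
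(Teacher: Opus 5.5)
Your route is the paper's own. The paper omits this proof and points back to \cref{lm: small distance general}, which you re-run with $\mc{C}_k$ in place of $\RM[k,q,d]$. The uniqueness argument under $E_1\wedge E_2\wedge E_3$ indeed uses no linearity. You are also right that the last inequality $s_k\eps/8\ge\eta$ cannot follow from $\eps_f\le 1/(2s_k)$ alone; with $\eta=1/128$ it is only meaningful when $\eps_f\ge 1/(8s_k)$, which is how Case~2 of \cref{lm: large distance non-lin} uses it.

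The gap is in the step you flag yourself: the bookkeeping you commit to for $E_2$ and $E_3$ does not give the constant $1/8$. The same slip is in the paper's displayed computation.
\begin{enumerate}
\item The crude bound cannot be absorbed into slack. At $s_k\eps_f=1/2$, which the hypothesis allows, $\frac{9s_k/10}{2}\eps_f-\frac{(9s_k/10)^2}{2}\eps_f^2\approx 0.25\,s_k\eps_f$. You need $\Pr[E_3]\ge(\frac14+\frac1{10})s_k\eps_f$, and the claimed $3s_k\eps_f/8$ holds only for $s_k\eps_f$ below about $0.18$. Carried through with $\Pr[E_1]\ge 1/2$, you end near $s_k\eps_f/14$.
\item Conditioning on $S_1',S_2'$ and randomizing only over $T$ breaks at the origin. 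If $0\in\Delta$ and $0\in S_1'$, then $\Pr_T[\overline{E_2}]=1$, not at most $s_k\eps_f/10$. If $\Delta=\{0\}$ (so $\eps_f=q^{-n}$, which is allowed), then $\mu(\Delta\setminus\{0\})=0$, so excluding the origin does not cost ``a factor near $1$''.
\item Pairs in $S_2'$ lying on a common line through $0$ contribute about $\eps_f$ each, not $\eps_f^2$. Controlling them needs $s_kq^{1-k}$ small, i.e.\ the hypothesis $s_k/q^k\le q^{-\Theta(1)}$ of \cref{thm:semi-sample-based soundness general}, which your sketch never invokes.
\end{enumerate}

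The repair is to average over the sample points as well as over $T$. Let $p$ be the probability that a single query (a uniform point of a uniform $k$-space) lies in $\Delta$. Then $p=q^{-k}\ind{0\in\Delta}+(1-q^{-k})\frac{|\Delta\setminus\{0\}|}{q^n-1}$, so $(1-q^{-k})\eps_f\le p\le\eps_f+q^{-k}$.

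Since $E_1$ depends only on $S_1'$, which is independent of $(S_2',T)$, you get $\Pr[E_1\wedge E_2\wedge E_3]\ge\Pr[E_1]\Pr[E_3]-\Pr[\overline{E_2}]$. Here $\Pr[\overline{E_2}]\le s_kp/10$ holds unconditionally.

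For the pair terms, take two queries $x,y$. If $x\neq 0$, then given $x$ the point $y$ lies on $\mathrm{span}(x)$ with probability $q^{1-k}$ and is otherwise uniform off it. If $x=0$, then $y$ is a fresh query. Hence $\Pr[x,y\in\Delta]\le p\,(q^{1-k}+\eps_f\frac{q^n}{q^n-q})$.

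Bonferroni then gives $\Pr[E_3]\ge\frac{9s_k}{10}p-0.41\,s_k^2p\,(q^{1-k}+1.01\,\eps_f)\ge 0.65\,s_kp$, using $s_k\eps_f\le 1/2$ and $s_kq^{1-k}\le 1/10$. So the rejection probability is at least $\frac12\cdot 0.65\,s_kp-0.1\,s_kp\ge 0.2\,s_kp\ge s_k\eps_f/8$.
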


\paragraph{Large Distance Case:}
\begin{lemma} \label{lm: large distance non-lin}
    The following holds for every $n \geq k$. Let $f\colon\F_q^n\to\F_q$ be such that $\dist(f, \mc{C}_n) = \eps_f\geq \frac{1}{8s_k}$. Then 
    $$\Pr\big[\mathcal{T}_k^f \text{ rejects}\big]\geq \eta+\frac{\gamma}{q^{k}}.$$
\end{lemma}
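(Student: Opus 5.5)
The plan is to prove a slightly stronger statement by induction on $n$, which implies \cref{lm: large distance non-lin} directly. Take $\eta := 1/128$ as in the Reed-Muller case. The strengthened claim is: for every $n \geq k$ and every $f\colon \F_q^n \to \F_q$ with $\eps_f \geq \frac{1}{8s_k}$,
\[
\Pr\big[\mathcal{T}_k^f \text{ rejects}\big] \geq \eta + \frac{\gamma}{q^k} + \frac{\gamma}{q^n}.
\]
Dropping the nonnegative last term gives the stated bound $\eta + \gamma/q^k$ uniformly in $n$. The extra term $\gamma/q^n$ is the slack that absorbs the loss from the few ``good'' hyperplanes in the inductive step, exactly as in \cref{lm: large distance rm}. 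A constant target with no slack could not be propagated through that step.

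\textbf{Base case $n=k$.} Here $\mathcal{T}_k^f$ is the sample-based tester, and $s_k \geq 10\ln|\mc{C}_k|/\delta_0$. So \cref{lm: sample-based lifted codes soundness} gives rejection probability at least $\min\{1/4, s_k\eps_f/8\}$. Since $\eps_f \geq \frac{1}{8s_k}$, we have $s_k\eps_f/8 \geq 1/64 = 2\eta$, so the rejection probability is at least $2\eta$. The lemma $\gamma \leq q^k/1000$ gives $2\gamma/q^k \leq 1/500 < \eta$. Hence $2\eta \geq \eta + \gamma/q^k + \gamma/q^k$.

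\textbf{Inductive step.} I decompose via a random hyperplane as in \eqref{eq: decomp rej}. Let $\mathcal{H}^* = \{W : \dist(f|_W, \mc{C}_{n-1}) < \frac{1}{8s_k}\}$, with $K = |\mathcal{H}^*|$ and $N = (q^n-1)/(q-1)$. For $W \notin \mathcal{H}^*$, the induction hypothesis applies to $f|_W$. Therefore
\[
\Pr[\text{reject}] \geq \eta + \frac{\gamma}{q^k} + \frac{\gamma}{q^{n-1}} - \frac{K}{N}.
\]

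\emph{Case 1: $K \leq \gamma(q^n-1)/q^n$.} Then $K/N \leq \gamma(q-1)/q^n$, and the bound above becomes $\eta + \gamma/q^k + \gamma/q^n$.

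\emph{Case 2: $K > \gamma(q^n-1)/q^n \geq \gamma/2$.} By the definition of $\gamma$, $K \geq \max\{20q^r/c(\frac{1}{8s_k}),\, q^{\Theta(1)} s_k\}$. Taking the $q^{\Theta(1)}$ factor large enough, this exceeds $10^5 q^4 \cdot 8s_k$. So \cref{lm: hyperplane general} applies with $\eps = \frac{1}{8s_k}$, using $t = r$ and the soundness function $c$. Its hypothesis $\eps < \delta_0/6$ holds because $s_k \geq 100/\delta_0$. The lemma gives $\eps_f \leq 4\eps = \frac{1}{2s_k}$. Then \cref{lm: small distance non-lin} gives rejection probability at least $s_k\eps_f/8 \geq 1/64 = 2\eta$, which exceeds the target by the same computation as in the base case.

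\textbf{Main obstacle.} The only delicate point is parameter bookkeeping. The hyperplane-lemma threshold $\frac{1}{8s_k}$ must coincide with both the definition of $\mathcal{H}^*$ and the lemma's hypothesis $\eps_f \geq \frac{1}{8s_k}$. Also, $\gamma/2$ must dominate both requirements of \cref{lm: hyperplane general}, which is exactly why $\gamma$ carries the factor $2$ in its definition. I would verify these constants first.
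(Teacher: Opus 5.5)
Your proof is correct and follows the paper's argument: the sample-based base case, the hyperplane decomposition \eqref{eq: decomp rej}, and the same Case 1/Case 2 split via \cref{lm: hyperplane general} and \cref{lm: small distance non-lin}. (One small fix: $s_k\ge 100/\delta_0$ needs $\ln|\mc{C}_k|\ge 1$, but $|\mc{C}_k|\ge 2$ already gives $\frac{1}{8s_k}<\delta_0/6$.) The one real difference is that you carry the constant $\gamma/q^k$ alongside the slack $\gamma/q^n$; the paper's induction, like that of \cref{lm: large distance rm}, actually propagates only $\eta+\gamma/q^n$ even though the statement says $\gamma/q^k$, so your strengthened hypothesis is what yields the bound exactly as stated.
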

\begin{proof}
The proof is by induction on the ambient dimension $n$. In the base case $n = k$ and $\mc{T}^f_k$ is the sample-based tester of \cref{lm: sample-based lifted codes soundness}. Then, we get that $\mc{T}^f_k$ rejects with probability at least 
\[
\min \{1/4 , s_k \eps_f / 8 \} \geq \frac{1}{128} + \frac{\gamma}{q^k}.
\]
For $\eps_f \geq \frac{1}{8s_k}$, so the left hand side above is just $1/64$, while we choose $k \geq \Omega(\log_q(\gamma))$, so $ \frac{\gamma}{q^k} < 1/128$.

Towards the inductive step, let $\mathcal{H}$ denote the set of all hyperplanes in $\F^n$ and let $N=|\mathcal{H}|=(q^n-1)/(q-1)$. Define 
    \[
    \mathcal{H}^*=\{W \in \mathcal{H}\ :\ \eps_{f|_W}=\dist(f|_W,\mc{C}_{n-1})<\frac{1}{8s_k}\}.
    \]
    Let $K=|\mathcal{H}^*|$. Using \eqref{eq: decomp rej} and the induction hypothesis, we can decompose the rejection probability as follows: 
    \begin{align*}
        \Pr[\mathcal{T}_k^f  \text{ rejects}] &= \E_{W \in \mc{H}}\left[\Pr[\mathcal{T}_k^{f|_W}\text{ rejects}]\right] \\
        &\geq \E_{W \in \mc{H} \setminus \mc{H}^*}\left[\Pr[\mathcal{T}_k^{f|_W}\text{ rejects}]\right] - \Pr_{W \in \mc{H}}[W \in \mc{H}^*] \\
        &\geq \eta + \frac{\gamma}{q^{n-1}} - \frac{K}{N},
    \end{align*}
    where we are applying the induction hypothesis on $f|_W$ in the third transition.

    \vspace{0.2cm}

    \noindent \textbf{Case 1: $K \leq \gamma \cdot \frac{q^n - 1}{q^n}$.} 
    
    \vspace{0.1cm}
    
    In this case, $\frac{K}{N} \leq \gamma \cdot \frac{q-1}{q^n}$. Then,
    \[
    \Pr[\mathcal{T}_k^f  \text{ rejects}] \geq \eta + \frac{\gamma}{q^{n-1}} - \frac{K}{N} \geq \eta +\frac{\gamma}{q^n},
    \]
    and we are done.
    
    \vspace{0.2cm}
    
    \noindent \textbf{Case 2: $K> \gamma \cdot \frac{q^n - 1}{q^n}$.} 
    
    \vspace{0.1cm}

    We apply \Cref{lm: hyperplane general} with the quantity $\eps$ in \Cref{lm: hyperplane general} set to $\frac{1}{8s_k}$. This application yields
    \[
    \eps_f\leq 4 \cdot \frac{1}{8s_k} = \frac{1}{2s_k}.
    \]
   Note that $\gamma$ and $K$ were specifically chosen to be large enough to apply \cref{lm: hyperplane general} as
    \[
    K > \gamma \cdot \frac{q^n - 1}{q^n} \geq \max\left\{\frac{20q^t}{c(16\eta/s_k)}, \frac{10^5q}{(16\eta/s_k)} \right\}.
    \]
    Since $ \eps_f \leq \frac{64\eta}{s_k} \leq \frac{1}{2s_k}$, we have, by, \cref{lm: small distance non-lin}, that the rejection probability is at least 
    \[
    \frac{s_k\eps_f}{8} \geq \frac{1}{64} \geq \eta+\gamma/q^n.
    \] 
   For the last inequality, we are using
    \[
    \frac{\gamma}{q^n} \leq \frac{\gamma}{q^{k}} \leq \frac{1}{64} - \eta.
    \]
\end{proof}
\section{Application to Testing with an Online-Adversary}
We start by formally introducing the online-adversary model and discuss the testers we obtain for this model afterwards. Specifically we show that \cref{th: semi-sample rm} and \cref{th: semi-sample lifted} lead to testers for Reed-Muller codes and lifted affine-invariant codes respectively in the online-adversary model. For other affine invariant families, \cref{thm:semi-sample-based soundness general} could also give a tester in the online adversary model, but we are not aware of any families where the currently known testers have strong enough soundness for our theorem to apply.

\subsection{The Online-Manipulation Model}
The online-manipulation-resilient testing model was originally proposed by Kalemaj, Raskhodnikova, and Varma \cite{KalemajRV22}, and with extended variants was later formally defined by Ben-Eliezer, Kelman, Meir, and Raskhodnikova \cite{ben2024property}.
We follow the definitions of the latter, as specified below.

The input is accessed via a sequence $\set{\oracle{i}}_{i\in {\mathbb N}}$ of oracles, where $\oracle{i}$  is used to answer the $i$-th query. The oracle $\oracle{1}$ gives access to the original input (e.g., when the input is a function $f$, we have $\oracle{1} \equiv f$), and subsequent oracles are objects of the same type as the input (e.g., functions with the same domain and range). Each such oracle is obtained by the adversary by modifying the previous oracle to include a growing number of erasures/corruptions as $i$ increases.
We use $\Dist(\oracle{},\oracle{}')$ for the Hamming distance between the two oracles (i.e., the number of queries for which they give different answers).
We let $t \in \N$  denote the number of \emph{erasures} (or \emph{corruptions}) \emph{per query}. 

\begin{definition}[Fixed-rate and budget-managing adversaries]\label{def:fixed-rate_budget-managing}
     Fix a parameter $t>0$. A sequence\footnote{Our algorithms only access a finite subsequence of this sequence.} of oracles $\oracle{} =  \set{\oracle{i}}_{i\in {\mathbb N}}$
     is induced by a \emph{$t$-online fixed-rate adversary} if $\oracle{1}$ is equal to the input and, for all $i\in \N$, 
     \[
        \Dist(\oracle{i}, \oracle{i+1}) \leq \floor{(i+1)\cdot t} - \floor{i\cdot t}.
     \]
    A sequence of oracles $\oracle{} =  \set{\oracle{i}}_{i\in {\mathbb N}}$
     is induced by a \emph{$t$-online budget-managing adversary} if $\oracle{1}$ is equal to the input and, for all $i \in \N$,
     \[
        \Dist(\oracle{1}, \oracle{i+1}) \leq i\cdot t.
     \]
\end{definition}

  Finally, we consider two types of manipulations to the input: erasures and corruptions.
\begin{definition}[Erasure and corruption adversaries]
Let $\perp$ represent the erasure symbol. A sequence of oracles $\oracle{} =  \set{\oracle{i}}_{i\in {\mathbb N}}$ is induced by an \emph{erasure adversary} if for all $i\in\N$ and data points~$x$,
$$\oracle{i+1}(x) \in \set{\oracle{i}(x), \perp} .$$
In contrast, a \emph{corruption adversary} can change answers to anything in the range, i.e., $\oracle{i}$ can be any valid input for the computational task at hand.
\end{definition}

A property $\mc{P}$ denotes a set of objects (typically, a set of functions). Intuitively, it represents the set of positive instances for the testing problem. The (relative Hamming) distance between a function $f$ and a property $\mc{P}$, denoted $dist(f,\mc{P})$, is the smallest fraction of function values of $f$ that must be changed to obtain a function in $\mc{P}.$ Given a proximity parameter $\eps\in(0,1)$, we say that $f$ is \emph{$\eps$-far} from $\mc{P}$ if $dist(f,\mc{P})\geq\eps.$
 An online tester is given a proximity parameter $\eps$ and the rate of erasures (or corruptions) $t$.
\begin{definition}[Online $\eps$-tester]\label{def:online_tester} Fix $\eps\in(0,1).$
    An online $\eps$-tester $\mc{T}$ for a property $\mathcal{P}$ that works in the presence of a specified adversary (e.g., $t$-online erasure budget-managing adversary) is given access to an input function $f$ via a sequence of oracles 
    $\oracle{} =  \set{\oracle{i}}_{i\in {\mathbb N}}$
     induced by that type of adversary.
    For all adversarial strategies of the specified type,
    \begin{enumerate}
        \item if $f \in \mathcal{P}$, then $\mc{T}$ accepts with probability at least 2/3, and
        
        \item if $f$ is $\eps$-far from $\mc{P},$
        then $\mc{T}$ rejects with probability at least 2/3, 
    \end{enumerate}
    where the probability is taken over the random coins of the tester. If $\mc{T}$ works in the presence of an erasure (resp., corruption) adversary, we refer to it as an online-erasure-resilient (resp., online-corruption-resilient) tester.
    
    If $\mc{T}$ always accepts all functions $f\in\mc{P}$, then it has \emph{1-sided error.} If $\mathcal{T}$ chooses its queries in advance, before observing any outputs from the oracle, then it is \emph{nonadaptive}.
\end{definition}

To ease notation, we use $\oracle{}(x)$ for the oracle's answer to query  $x$  (omitting the timestamp $i$). 
If $x$ was queried multiple times, $\oracle{}(x)$ denotes the first answer given by the oracle.

\subsection{Optimal Online-Manipulation-Resilient Tester for Reed-Muller}

By setting the dimension of the semi-sample-based tester large enough relative to the erasure rate $t$, we obtain testers for Reed-Muller codes in the $t$-online erasure model.
\begin{theorem} \label{thm: rm online erasure test}
Let $\eps$ be a distance parameter, for any erasure rate $t \leq \eps^2q^{n-O(d)}$  and set 
\[\ka=O(d)+\log_q(t/\eps^2)\] 
sufficiently large. Then given an input function $f: \F^n \to \F$, there is a tester for $\RM[n,q,d]$ with the following parameters in the $t$-online erasure model satisfying:
    \begin{itemize}
        \item Completeness: If $f \in \RM[n,q,d]$, the tester always accepts.
        \item Soundness: If $f$ is $\eps$-far from $\RM[n,q,d]$, then 
        \[
        \Pr[\mc{T}^f_k \text{ rejects}] \geq \frac{2}{3}
        \]
        \item Query Complexity:
        \[
        Q=\max\left\{\frac{1}{\eps}, 128s_{\ka} \right\} =  O\left(\frac{1}{\eps}+q^{2d}\left(1+O \left(\frac{\log_q(t/\eps)}{d}\right) \right)^d \right) 
        \]
    \end{itemize}
\end{theorem}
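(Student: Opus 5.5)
The plan is to run independent copies of the semi-sample-based tester $\mc{T}_{\ka}(s_{\ka},\RM[n,q,d])$ from \cref{th: semi-sample rm} against the online adversary. Each copy is made robust by answering $\perp$ with ``accept''. The soundness bound is then combined with a union bound over the event that some query lands on an erased point.

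Concretely, the tester repeats the following $R$ times, with fresh randomness each time. It picks a uniformly random $\ka$-dimensional subspace $U$ and queries $s_{\ka}$ uniformly random points of $U$, one at a time. If any answer is $\perp$, that repetition accepts. Otherwise it rejects iff no degree $d$ polynomial on $U$ is consistent with the answers. The overall tester rejects iff some repetition rejects.

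Completeness is immediate: an erasure adversary can only replace values by $\perp$, never change them, so a degree $d$ function is never rejected. For soundness, \cref{th: semi-sample rm} (valid since $\ka\ge 8d+3\KMconst+150$) says that in the erasure-free world each repetition rejects an $\eps$-far $f$ with probability at least $\rho:=\min\{s_{\ka}\eps/8,1/128\}$. Taking $R=\Theta(1/\rho)$, all of which is accounted for by $Q=\max\{1/\eps,128 s_{\ka}\}$ up to constants, makes the erasure-free rejection probability at least $0.9$. The final bound on $Q$ then comes from \eqref{eq:asymptotic_bound_for query_complexity_SSB}, together with $q^{t_{q,d}}\le q^{2d}$ and $\binom{d+\ka}{d}\le (1+O(\log_q(t/\eps)/d))^d$.

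The main step is to show that the erasures change the outcome only with probability at most $0.1$. I would couple the real execution with the erasure-free one and bound the probability that any query hits a point erased before it was asked. At the $i$-th query overall, at most $it\le Qt$ points have been erased. The subtle issue is that the adversary sees which points were queried and can erase inside the current subspace $U$. So I condition on the history of the current repetition: $U$ is known to the adversary, but the next sample is still uniform in $U$, a set of size $q^{\ka}$. Hence the probability of hitting an erased point is at most $Qt/q^{\ka}$. Union bounding over all $Q$ queries gives failure probability at most $Q^2t/q^{\ka}$.

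With $\ka=O(d)+\log_q(t/\eps^2)$, \cref{fact: sk small} gives $s_{\ka}^2/q^{\ka}$ small. Since $Q\lesssim 1/\eps+s_{\ka}$, this makes $Q^2t/q^{\ka}\le 0.1$, provided $q^{\ka}$ absorbs $t/\eps^2$ and $ts_{\ka}^2$. The $s_{\ka}^2$ term is the delicate point. I would handle it by enlarging the $O(d)$ term so that $q^{\ka}\ge t s_{\ka}^2$; \cref{fact: sk small} with $c$ on the order of $\log_q t$ shows this costs only an additive $O(\log_q t)$ in $\ka$. This step is where the constants must be checked most carefully. The hypothesis $t\le\eps^2q^{n-O(d)}$ ensures $\ka\le n$. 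Subtracting the $0.1$ failure probability from the $0.9$ gives rejection probability at least $0.8\ge 2/3$.
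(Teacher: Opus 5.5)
Your proposal is correct and follows the paper's proof. Like the paper, you repeat the semi-sample-based tester $\Theta(1/\min\{s_{\ka}\eps,1\})$ times with accept-on-$\perp$, use \cref{th: semi-sample rm} for erasure-free rejection probability $0.9$, union-bound the chance of hitting an erased point by $tQ^2/q^{\ka}$, and enlarge $\ka$ to $O(d+\log_q(t/\eps))$.

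One small correction: \cref{fact: sk small} bounds $s_k/q^k$, not $s_k^2/q^k$. To absorb the $t s_{\ka}^2$ term, apply it at dimension $\ka/2$ (using $\binom{d+\ka}{d}\le 2^d\binom{d+\ka/2}{d}$), or rerun its ratio argument for $s_k^2$; either still gives the additive $O(\log_q t)$ cost you claim.
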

\begin{proof}
The tester is the semi-sample-based tester $\mc{T}_k^f(Q, \RM[n,q,d])$ from \cref{sec: semi-sample-based tester} with $k=\ka$. For the online erasure model, we also make the following slight modification: if any of the queried points is erased (i.e., answered with $\perp$), the tester will stop and accept. This modification maintains completeness while degrading the soundness by an amount that we will see is negligible. 

By \cref{th: semi-sample rm}, we can repeat the tester $O\left(\frac{1}{s_k\eps}+1\right)$ times and reject with probability at least $9/10$ if no erasure is ever queried, so it remains for us to upper bound the probability of querying an erasure. The total number of queries made is $Q=O(1/\eps+s_k)$ queries. 

It remains to bound the probability that the tester queries a point that is erased. We show that this probability is at most $1/5$ completing the proof. Indeed, since the erasure rate is $t$, the total number of erasure made during the run of the algorithm is at most $tQ$. Since every query is uniform within a $k$-space. the probability for each query to be erased is at most $tQ/q^k$, by a union bound, the probability that any of the $Q$ queries is erased is at most 
\begin{equation}\label{eq: see erasure prob}
    tQ^2/q^k=O\left(\frac{t}{q^k}\left(\frac{1}{\eps^2}+s_k^2\right)\right).
\end{equation} 
Note that for $k>d$ we have $s_k\le q^{2d}(10k/d)^{d}$. For large enough $\ka=O(d+\log_q(t/\eps))$ we have that this quantity is at most,
$$O\left(\frac{t}{q^k}\left(\frac{1}{\eps^2}+s_k^2\right)\right)\le O\left(\frac{t}{q^k\eps^2}
\right)+O\left(\frac{t}{q^k}\left(\frac{10qk}{d}\right)^{4d}\right)<1/5.$$
It follows that an $\eps$-far function is accepted with probability at most $1/10+1/5\le 1/3$, as required. 
\end{proof}

Using an observation by \cite{KalemajRV22}, we get a similar theorem for the online corruption model, wherein the adversary can change the value of $f$ at a point rather than just erase the value at that point. Now, with a two-sided error, an error may occur whenever the tester queries a manipulated point in either case.   
\begin{theorem}
Let $\eps$-be a distance parameter and set $\ka=O(d+\log_q(t/\eps^2))$ be large enough.   
 Then given an input function $f: \F^n \to \F$, there is a tester for $\RM[n,q,d]$ with the following parameters in the $t$-online \emph{corruption} model:
    \begin{itemize}
        \item Completeness: If $f \in \RM[n,q,d]$
        then 
        \[
        \Pr[\mc{T}^f_k \text{ accepts}] \geq \frac{2}{3}
        \]
        \item Soundness: If $f$ is $\eps$-far from $\RM[n,q,d]$, then 
        \[
        \Pr[\mc{T}^f_k \text{ rejects}] \geq \frac{2}{3}
        \]
        \item Query Complexity:
          \[
        \max\left\{\frac{1}{\eps}, 128s_{\ka} \right\} =  O\left(\frac{1}{\eps}+q^{2d}\left(1  + O\left(\frac{\log_q(t/\eps)}{d}\right)\right)^d \right) 
        \]
    \end{itemize}
\end{theorem}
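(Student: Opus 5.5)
The approach is to reuse the erasure-model tester from \cref{thm: rm online erasure test} unchanged and to add the observation of \cite{KalemajRV22}: a one-sided-error tester whose queries are, with high probability, never manipulated becomes a two-sided tester against corruptions. Concretely, the tester is again $\mc{T}_k^f(Q,\RM[n,q,d])$ with $k=\ka$, repeated $O\!\left(\frac{1}{s_k\eps}+1\right)$ times. It rejects if some repetition finds no degree-$d$ function consistent with the answers it received. The total number of queries is $Q=O(1/\eps+s_k)$, exactly as before.

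The key step is to bound the probability of the bad event $B$ that some query lands on a point the adversary has modified before that query is answered. At the time of the $i$-th query, at most $it\le Qt$ points have been corrupted, for both the fixed-rate and the budget-managing adversary. Each query is uniform within a random $k$-dimensional subspace, and hence marginally uniform in $\F^n$; more to the point, the adversary's choices depend only on earlier queries. The intended argument is that, conditioned on the history, each new query is uniform over the points of the chosen $k$-space. Corrupting points in $U$ therefore hits it with probability at most $tQ/(q^k-Q)$. I expect making this conditioning rigorous to be the main obstacle: the adversary may learn $U$ from earlier queries and then concentrate its corruptions inside $U$. The bound still holds because the next sample is uniform in $U$ independently of everything else, and $U$ has $q^k$ points. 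A union bound then gives $\Pr[B]\le tQ^2/q^k$, which is \eqref{eq: see erasure prob}. With $\ka=O(d)+\log_q(t/\eps^2)$ taken large enough, the calculation already done in the proof of \cref{thm: rm online erasure test} makes this at most $1/5$, using $s_k\le q^{2d}(10k/d)^d$.

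Given this bound, the two cases follow directly. On the complement of $B$, every answer equals $f$'s true value, so the run is identical to a run of the standard semi-sample-based tester on $f$. For completeness, if $f\in\RM[n,q,d]$, the tester can only reject on $B$, so it accepts with probability at least $1-1/5\ge 2/3$. For soundness, by \cref{th: semi-sample rm} and independent repetition, the unperturbed run rejects an $\eps$-far $f$ with probability at least $9/10$. Hence the corrupted run rejects with probability at least $9/10-\Pr[B]\ge 9/10-1/5\ge 2/3$. The query complexity is identical to that of \cref{thm: rm online erasure test}, which gives the stated bound.
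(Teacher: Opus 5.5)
Your proposal is correct and follows the paper's proof: you reuse the erasure-model tester, bound the chance of querying a manipulated point by $tQ^2/q^k\le 1/5$ as in \eqref{eq: see erasure prob}, and observe two things. A valid $f$ can be rejected only on that event, and soundness drops by at most its probability, since $9/10-1/5\ge 2/3$. Your explicit conditioning argument (the next sample is uniform in $U$ given the history, so the adversary gains nothing by learning $U$) fills in a step the paper leaves implicit, and the $q^k-Q$ denominator is unnecessary because sampling is with replacement.
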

\begin{proof}
For the completeness, note that $f$ is accepted as long as no corrupted point is queried. From the proof of \cref{thm: rm online erasure test}, this event happens with probability at most $1/5$, and thus the completeness is at least $2/3$. The analysis of the soundness is exactly the same as \cref{thm: rm online erasure test}.
\end{proof}

\subsection{A Matching Query Complexity Lower Bound}

\cite{ben2024property} showed a lower bound for the number of queries required by any erasure-resilient algorithm for testing Reed-Muller over $\F_2$. In their proof they relied on results by \cite{KeevashS05,Ben-EliezerHL12} for lower bounding the dimension of the punctured Reed-Muller code over $\F_2$. 
We note that their proof can easily extended to all fields using similar results by \cite{beameOY2018} and \cite{golovnevGHNY2024}.
We use the following lemma: 
\begin{lemma}[\protect{\cite[Corollary 1.4]{golovnevGHNY2024}}]
    Let $q$ be a prime power and  $S\subset\F_q^n$ such that $S=q^r$. Then the dimension of the subspace spanned by
  $$\{(f(x))_{f\in \RM[n,q,d]}|x\in S\}$$
  is at least $|M_q^n(q^r)_{\leq d}|$.
  Where $M_q^n(k)$ is the set of the first $k$ elements in $\{0,\dots ,q-1\}^n$ in lexicographic order, and $M_q^n(k)_{\leq d}$ are the elements $x\in M_q^n(k)$ such that $\sum x_i\leq d$.
\end{lemma}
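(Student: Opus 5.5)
The plan is to turn the rank statement into an extremal problem about down-sets in the grid $\{0,\dots,q-1\}^n$ and then solve that problem by induction on $n$ using compressions.

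\emph{Step 1 (reduction to standard monomials).} The dimension in question is the rank of the evaluation map $\RM[n,q,d]\to\F_q^S$, i.e.\ $\dim\RM[n,q,d]|_S$. Let $I(S)\subseteq\F_q[x_1,\dots,x_n]$ be the vanishing ideal of $S$. Fix a \emph{degree-compatible} monomial order, e.g.\ graded lex. Then a polynomial of degree $\le d$ reduces modulo a Gröbner basis of $I(S)$ to a combination of standard monomials of degree $\le d$, since reduction never raises degree under a graded order. Moreover, standard monomials are linearly independent modulo $I(S)$. Hence $\dim\RM[n,q,d]|_S$ equals the number of standard monomials of degree $\le d$.

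The standard monomials have the following properties.
\begin{itemize}
\item They form a down-set $D$ under divisibility.
\item Since $x_i^q-x_i\in I(S)$, every exponent is at most $q-1$, so $D\subseteq\{0,\dots,q-1\}^n$.
\item Since $\F_q[x]/I(S)\cong\F_q^S$, we have $|D|=|S|=q^r$.
\end{itemize}
So it suffices to prove the following: every down-set $D\subseteq\{0,\dots,q-1\}^n$ with $|D|=q^r$ has at least as many elements of coordinate sum $\le d$ as $M_q^n(q^r)$. The latter is the subcube $\{0\}^{n-r}\times\{0,\dots,q-1\}^r$ (lexicographic order with the first coordinate most significant), so its count is $|M_q^n(q^r)_{\le d}|$.

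\emph{Step 2 (the extremal inequality).} I would argue by induction on $n$, slicing along the last coordinate. Write $D=\bigcup_{j=0}^{q-1}D_j\times\{j\}$, where $D_0\supseteq D_1\supseteq\cdots$ are down-sets in $\{0,\dots,q-1\}^{n-1}$. The weight-$\le d$ count is then $\sum_j w_{d-j}(D_j)$, where $w_e(\cdot)$ counts elements of weight $\le e$.

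First, apply the Clements--Lindström theorem (the $q$-ary Kruskal--Katona) to replace each $D_j$ by an initial segment of the appropriate order of the same size. This keeps the nested down-set structure and does not increase any $w_e$. Then show that the resulting compressed configuration is dominated by the subcube. The key point is that among compressed down-sets of size $q^r$, the weight profile is "most spread out" for the subcube: any other shape puts more mass on low weights. I would verify this by a smoothing/exchange argument that moves elements between slices while keeping the total $q^r$ fixed, checking that each move weakly decreases $\sum_j w_{d-j}(D_j)$.

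\emph{Main obstacle.} The hard step is the final comparison: showing that the subcube minimizes the number of low-weight elements among all down-sets of size exactly $q^r$, simultaneously for every threshold $d$. I would first try to prove the stronger majorization statement that the weight distribution of $D$ stochastically dominates from below that of $\{0,\dots,q-1\}^r$. Such a statement should pass through the slice induction cleanly, with the case $|D_j|$ not a power of $q$ handled by the Clements--Lindström shadow bounds. As a fallback, I would cite \cite{beameOY2018,golovnevGHNY2024} directly for this combinatorial inequality, as the lemma's attribution indicates.
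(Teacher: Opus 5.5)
\textbf{Verdict: there is a genuine gap.} Your Step~1 is correct and is the standard reduction, but Step~2, which is where all of the content lies, does not go through as sketched.

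\textbf{Step~1.} Under a graded order, the normal form of a polynomial of degree at most $d$ has degree at most $d$. So $\dim \RM[n,q,d]|_S$ equals the number of standard monomials of degree at most $d$. These monomials form a down-set $D\subseteq\{0,\dots,q-1\}^n$ with $|D|=|S|$. The paper gives no proof of this lemma; it imports it from \cite{golovnevGHNY2024}, where the Hilbert function is likewise identified with a count over down-closed sets.

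\textbf{First problem in Step~2.} The Clements--Lindstr\"om theorem compresses a down-set level by level and \emph{preserves} its weight profile. It does not say that replacing a slice $D_j$ by the lexicographic initial segment of the same size leaves every $w_e$ unchanged or smaller. That assertion is exactly the lemma itself, in dimension $n-1$, for the arbitrary size $|D_j|$. So the induction must be run on a stronger statement: $w_d(D)\ge w_d(M_q^n(m))$ for every down-set $D$ of every size $m$ and every $d$. Your proposed inductive statement, majorization against the weight distribution of $\{0,\dots,q-1\}^r$, only makes sense when the size is a power of $q$, and the slices generally are not.

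\textbf{Second problem in Step~2.} With the stronger hypothesis, the inductive step becomes the inequality
\[
\sum_{j=0}^{q-1} w_{d-j}\bigl(M_q^{n-1}(m_j)\bigr)\ \ge\ w_d\bigl(M_q^{n}(m_0+\dots+m_{q-1})\bigr)\qquad\text{for all } m_0\ge\dots\ge m_{q-1}\text{ and all } d .
\]
This is exactly the ``main obstacle'' you leave open.

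A smoothing argument that moves one element at a time between slices and checks that each move is monotone cannot be carried out naively. The reason is that weight is not monotone along lexicographic order: in $\{0,1,2\}^2$ the weights in lex order are $0,1,2,1,2,3,2,3,4$. Concretely, work in $\{0,1,2\}^3$ with slices $M_3^2(m_j)$:
\begin{itemize}
\item The size vectors $(3,3,0)$ and $(6,0,0)$ each give exactly $3$ points of weight at most $1$. Both are coordinate permutations of $M_3^3(6)$, hence optimal.
\item The intermediate vector $(4,2,0)$, obtained by moving a single element from the second slice to the first, gives $4$ such points.
\end{itemize}
So a proof needs a genuinely global comparison of slice configurations, and the proposal does not supply one.

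Falling back on citing \cite{beameOY2018,golovnevGHNY2024} for the down-set inequality is legitimate, and it is all the paper does. In that case, though, the proof of the lemma is just that citation, and Step~2 as written adds no argument of its own.
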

We note that if $d\leq q$ then $|M_q^n(q^r)_{\leq d}|=\binom{r+d}{d}\geq \left(\frac{r}{d}\right)^d$.
Following is a generalized lower bound, we omit the proof since it is exactly the same as the proof of \cite[Theorem 1.2]{ben2024property}.
\begin{theorem}\label{thm:lower bound for all fields}
     Let $q$ be a prime power, fix an integer $d$. 
   There exists $n_0=n_0(d,q)$, such that for all $n\geq n_0$ and 
   $\eps \in(0, 1/3]$, every $\eps$-tester of $\RM[n,q,d]$ that works in the presence of a $t$-online erasure adversary must make $|M_q^n(t)_{\leq d}|$ queries.
   In particular, if $d\leq q$, we have the explicit lower bound of $|M_q^n(t)_{\leq d}|\geq\left(\frac{\floor{\log_q t}}{d}\right)^d$.
\end{theorem}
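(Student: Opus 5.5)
The plan is to follow the lower-bound argument of \cite[Theorem 1.2]{ben2024property} for $\F_2$, replacing the punctured Reed--Muller dimension bound of \cite{KeevashS05,Ben-EliezerHL12} with the corollary of \cite{golovnevGHNY2024} quoted above. By Yao's principle it suffices to fix an adversary strategy and two input distributions, $D_{\text{yes}}$ (uniform over $\RM[n,q,d]$) and $D_{\text{no}}$ (uniformly random functions, which for $n\ge n_0(d,q)$ are $1/3$-far from $\RM[n,q,d]$ with probability $1-o(1)$, as a counting argument shows). We then show that any deterministic tester making fewer than $m:=|M_q^n(t)_{\leq d}|$ queries cannot distinguish the two distributions with advantage $1/3$.

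The adversary works as follows. Each queried point $x_i$ is answered truthfully, and then the adversary uses its budget of $t$ erasures to erase every point that is \emph{determined} by the answers seen so far. A point $y$ is determined if the evaluation functional $e_y$ lies in the span of $\{e_{x_1},\dots,e_{x_i}\}$, where $e_x$ denotes the vector $(f(x))_{f\in\RM[n,q,d]}$, i.e.\ the functional $f\mapsto f(x)$ on the code.

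The key claim is that as long as the number of queries $i$ is below $m$, the span of the $i$ queried functionals contains at most $t$ evaluation functionals. If the span contained $t' \geq t$ points, take a subset $S$ of those points. Its functionals span a space of dimension at most $i$. On the other hand, the corollary of \cite{golovnevGHNY2024} (applied to sets of size $q^r$, extended to general sizes by monotonicity of the lexicographic count) says that the dimension is at least $|M_q^n(|S|)_{\leq d}| \ge m > i$, a contradiction.

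So the adversary can afford, in the erasure budget accumulated so far, to erase every newly determined point. The point to handle carefully is the amortization: after the $i$-th query the adversary has budget $it$, while the span of $i$ queries covers fewer than $t$ points by the claim. Consequently every non-erased answer the tester sees is a point whose functional lies outside the span of previous queries. Under $D_{\text{yes}}$ a linear code evaluated at an independent new coordinate gives a uniformly random field element, and under $D_{\text{no}}$ every fresh answer is uniform as well. Erased positions are identical in both cases because the set of erased points depends only on the query positions. Hence the transcript distributions coincide exactly, and no tester with fewer than $m$ queries succeeds.

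The main obstacle is the bookkeeping that ensures the set of determined points stays within the cumulative budget at every step. This is exactly where the span-size bound is used, and I would reproduce the step-by-step accounting of \cite{ben2024property} verbatim.

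Finally, the explicit bound follows from the remark made before the theorem. When $d\le q$, taking $r=\floor{\log_q t}$ gives $|M_q^n(t)_{\le d}| \ge |M_q^n(q^r)_{\le d}| = \binom{r+d}{d} \ge (r/d)^d$.
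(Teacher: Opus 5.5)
Your approach matches the paper's: it omits the proof and defers to \cite[Theorem 1.2]{ben2024property}, with \cite{golovnevGHNY2024} substituted for the $\F_2$ dimension bounds. The adversary erases the degree-$d$ closure of the non-erased queries. Yao's principle is applied with a uniform codeword against a uniform function, and fresh answers are uniform in both cases. These steps are all sound. The budget accounting is in fact simpler than you suggest: while $i<m$ the whole closure has fewer than $t$ points, so the per-step budget of $t$ already covers every newly determined point.

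The step that fails is ``applied to sets of size $q^r$, extended to general sizes by monotonicity of the lexicographic count.'' Your key claim needs $\dim\mathrm{span}\{e_y : y\in S\}\ge |M_q^n(t)_{\le d}|$ for every set $S$ of size exactly $t$. From the quoted corollary you can only pass to a subset of size $q^r$, with $r=\floor{\log_q t}$, and get the bound $|M_q^n(q^r)_{\le d}|$. Monotonicity of $k\mapsto |M_q^n(k)_{\le d}|$ gives $|M_q^n(q^r)_{\le d}|\le |M_q^n(t)_{\le d}|$, which is the wrong direction. For example, with $q=2$, $d=1$, $t=3$, the corollary yields $2$, but you need $3$.

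So, as written, your argument proves the lower bound $|M_q^n(q^{\floor{\log_q t}})_{\le d}|$, not the stated $|M_q^n(t)_{\le d}|$ when $t$ is not a power of $q$. That weaker bound still gives the explicit $(\floor{\log_q t}/d)^d$, since there you use monotonicity in the correct direction. To get the statement as written, you need a Hilbert-function lower bound $h_S(d)\ge |M_q^n(|S|)_{\le d}|$ valid for sets of arbitrary size, not just the power-of-$q$ corollary. The paper also quotes only the $|S|=q^r$ case, so its general-$t$ formulation implicitly relies on that stronger input as well.

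A minor point: the equality $|M_q^n(q^r)_{\le d}|=\binom{r+d}{d}$ needs $d\le q-1$. For $d=q$ the $r$ vectors $q e_j$ are excluded, though $(r/d)^d$ remains a valid lower bound.
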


\subsection{Online-Manipulation-Resilient Testers for Lifted Affine-Invariant Codes}

By setting the dimension of the semi-sample-based tester to be sufficiently large compared to the erasure rate $t$, we obtain testers for the lifted codes $\mc{C}_n = \mc{C}^{r \nearrow n}$ in the $t$-online erasure model provided there exists some dimension $k$ such that 
\begin{equation} \label{eq: lifted t assump}  
\frac{Q_k^2 t}{q^k} \leq \frac{1}{100},
\end{equation}
where $\delta_0$ is the minimum distance of $\mc{C}_r, \ldots, \mc{C}_n$, and
\[
    Q_k = \ceil{\frac{100\ln |\mc{C}_k|}{\delta_{0}}}+ 1.
\]

\begin{theorem} \label{thm: lifted t online erasure tester}
Let $\mc{C}_n = \mc{C}^{r \nearrow n}$ be a lifted affine-invariant code, and let $t$ be an erasure parameter such that, for some dimension parameter $k$, \cref{eq: lifted t assump} is satisfied. Then for any distance parameter $\eps$ the code $\mc{C}_n$ has an $O(\eps^{-1} + Q_k)$ query tester in the $t$-online erasure model:
  \begin{itemize}
    \item Completeness: If $f \in \mc{C}_n$, then the tester accepts with probability $1$.
    \item Soundness: If $f$ is $\Omega(1)$-far from $\mc{C}_n$
    \[
    \Pr[\mc{T}^f_k \text{ rejects}] \geq \frac{2}{3}.
    \]
    \end{itemize}
\end{theorem}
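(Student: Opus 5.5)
The plan follows the proof of \cref{thm: rm online erasure test}, with \cref{th: semi-sample lifted} playing the role of \cref{th: semi-sample rm}. Fix $k$ satisfying \eqref{eq: lifted t assump}. The tester runs $R = O\!\left(\frac{1}{Q_k\eps}+1\right)$ independent copies of $\mc{T}^f_k(Q_k,\mc{C}_n)$, each with a fresh random $k$-dimensional subspace, and rejects if any copy rejects. Following \cref{thm: rm online erasure test}, if any query is answered with $\perp$ the tester stops and accepts. Completeness is immediate: every answer received is either the true value of $f$ or an erasure, and since $f|_A\in\mc{C}_k$ for every subspace $A$, a copy can only reject on a genuine inconsistency, which never happens for $f\in\mc{C}_n$. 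The total number of queries is $Q := RQ_k = O(\eps^{-1}+Q_k)$.

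For soundness, consider the "ideal" run of the tester on the original oracle $\oracle{1}=f$ with the same random coins. By \cref{th: semi-sample lifted}, each copy rejects with probability at least $\min\{Q_k\eps/8,1/128\}$. The copies are independent, so choosing $R$ as above with a large enough constant makes the ideal run reject with probability at least $9/10$. The real run differs from the ideal run only when some queried point has been erased at the time it is queried. So it suffices to show that this event has probability at most $1/5$. Then the real run rejects with probability at least $9/10-1/5\ge 2/3$.

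To bound the erasure event, note that the adversary erases at most $tQ$ points over the whole execution. Each query is uniform within its copy's subspace $A$, and $A$ is uniform and independent of previous copies. The subtle point is that the adversary is adaptive and sees earlier queries in the same copy, which reveal information about $A$. I would handle this as in \cref{thm: rm online erasure test}: conditioned on everything so far, a fresh query is uniform on $A$, so it hits one of the at most $tQ$ erased points with probability at most $tQ/q^k$, regardless of how the erasures lie inside $A$. A union bound over the $Q$ queries gives failure probability at most $tQ^2/q^k$.

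For the constant-distance regime $\eps=\Omega(1)$ in the statement, $R = O(1)$ and $Q = O(Q_k)$. Hence $tQ^2/q^k = O(tQ_k^2/q^k)$, which is at most $1/5$ by \eqref{eq: lifted t assump}, possibly after absorbing the constant into the $1/100$ there or into a slightly larger choice of $k$. The main obstacle is this step: for smaller $\eps$, the extra $\eps^{-2}$ term in $Q^2$ must also be dominated. I would handle it as the Reed--Muller proof does, by additionally requiring $t/(q^k\eps^2)$ to be small, i.e.\ taking $k$ larger by $\log_q(1/\eps^2)$. For the stated $\Omega(1)$-far soundness this extra requirement is not needed.
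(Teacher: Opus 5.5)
Your proposal is correct and takes the same route as the paper. It repeats $\mc{T}_k(Q_k,\mc{C}_n)$ with fresh subspaces, accepts upon seeing $\perp$, invokes \cref{th: semi-sample lifted} for the erasure-free run, and union-bounds the probability of hitting an erasure using the uniformity of each query in its subspace together with \eqref{eq: lifted t assump}. Your bookkeeping is tighter than the paper's, which repeats $Q_k+\eps^{-1}$ times and bounds the erasure probability by $Q_k^2t/q^k$ as if only one copy were run. The constant-factor loss you flag (from the $O(1)$ copies needed to amplify the $1/128$ soundness) is therefore present in the paper's argument too. It is resolved by reading the $1/100$ in \eqref{eq: lifted t assump} as a sufficiently small absolute constant. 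Enlarging $k$ does not work instead: the hypothesis does not license it, and it would change $Q_k$.
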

\begin{proof}
The tester is obtained by repeating the tester $\mc{T}^f_k(Q_k, \mc{C}_k)$,  $Q_k + \eps^{-1}$ times. By \cref{th: semi-sample lifted}, if no erased point is queried, then $f$ is rejected with probability at least $9/10$. On the other hand, the probability of hitting an erasure is at most $Q_k^2 t /q^k$ which is at most $\frac{1}{100}$ by \eqref{eq: lifted t assump}, so overall $f$ which is $\eps$-far from $\mc{C}_n$ is rejected with probability at least $9/10 - 1/10 \geq 2/3$.
\end{proof}

Once again, the tester of \cref{thm: lifted t online erasure tester} also works in the online corruption model of \cite{KalemajRV22}.

\begin{theorem} \label{thm: lifted t online corruption tester}
Let $\mc{C}_n = \mc{C}^{r \nearrow n}$ be a lifted affine-invariant code and let $t$ be an erasure parameter such that for some dimension parameter $k$, \cref{eq: lifted t assump} is satisfied. Then for any distance parameter $\eps$ the code $\mc{C}_n$ has an $O(\eps^{-1} + Q_k)$ query tester in the $t$-online corruption model:
  \begin{itemize}
    \item Completeness: If $f \in \mc{C}_n$, then the tester accepts with probability $\frac{2}{3}$.
    \item Soundness: If $f$ is $\Omega(1)$-far from $\mc{C}_n$
    \[
    \Pr[\mc{T}^f_k \text{ rejects}] \geq \frac{2}{3}.
    \]
    \end{itemize}
\end{theorem}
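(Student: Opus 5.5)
The tester is the one from \cref{thm: lifted t online erasure tester}: repeat $\mc{T}^f_k(Q_k, \mc{C}_n)$ about $O(1 + 1/(Q_k\eps))$ times, so $O(\eps^{-1} + Q_k)$ queries in total, and reject if any repetition rejects. The analysis follows the observation of \cite{KalemajRV22} used above for the Reed-Muller corruption theorem. We compare the run against the corruption oracles $\{\oracle{i}\}$ with a hypothetical run on the original input $f$. The tester is nonadaptive: it chooses the subspaces and sample points in advance. So the two runs produce the same transcript unless some queried point $x$ had already been modified by the adversary when it was queried. Call this the bad event $B$.

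\textbf{Bounding $B$.} At query $i$, at most $i\cdot t \le Qt$ points have been corrupted, for either adversary type in \cref{def:fixed-rate_budget-managing}. Each query is marginally uniform inside the chosen $k$-dimensional subspace. The subspace is itself a uniformly random element of $\mc{G}_k$ that the adversary does not know before the queries reveal it, so the query point is uniform over $\F^n$ aside from earlier queries inside the same subspace. The cleanest way to handle this is the argument from \cref{thm: rm online erasure test}: conditioned on everything so far, each fresh sample is uniform in a set of size $q^k$ and the adversary has touched at most $tQ$ points. A union bound over the $Q$ queries gives $\Pr[B] \le tQ^2/q^k$. If a single invocation has $Q = Q_k$ queries, \eqref{eq: lifted t assump} makes this at most $1/100$ per invocation. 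The one step needing care is the number of repetitions. I would choose $\eps=\Omega(1)$ as in the statement, so that only $O(1)$ repetitions are needed. Since the adversary's budget grows with the total query count, the union bound over all $O(Q_k)$ queries costs only a constant factor more, which I absorb by taking the constant in \eqref{eq: lifted t assump} (or $k$) slightly larger.

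\textbf{Completeness.} If $f \in \mc{C}_n$, the run on the uncorrupted $f$ always accepts by perfect completeness of $\mc{T}_k$, since $f|_A\in\mc{C}_k$. Off $B$ the actual run has the same transcript, so it accepts with probability at least $1-\Pr[B] \ge 2/3$.

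\textbf{Soundness.} If $f$ is $\eps$-far, \cref{th: semi-sample lifted} says each invocation on the true $f$ rejects with probability at least $\min\{Q_k\eps/8, 1/128\}$. With enough repetitions, the ideal run rejects with probability at least $9/10$. The real run differs only on $B$, so it rejects with probability at least $9/10 - \Pr[B] \ge 2/3$.

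\textbf{Main obstacle.} The main obstacle is justifying that the queries are uniform against an adaptive adversary that sees earlier queries, which may already leak information about the subspace $A$. I would handle this by arguing per invocation: the only structural information available is about $A$, and each new query is a fresh uniform point of $A$, independent of the past given $A$. The probability that it lands in a fixed set of at most $tQ$ points is therefore at most $tQ/q^k$ for every possible $A$.
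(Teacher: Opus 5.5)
Your proposal is correct and is essentially the paper's argument: the observation of \cite{KalemajRV22} that the run under corruptions coincides with the uncorrupted run unless a manipulated point is queried, with that event bounded by the same $tQ^2/q^k$ union bound used for the erasure tester. Your treatment of the repetitions is more careful than the paper's, which applies $Q_k^2t/q^k\le 1/100$ as if there were a single invocation; as you note, the roughly $128\ln 10$ repetitions needed for $\Omega(1)$-far inputs inflate the bound by about the square of that number, so the constant in \eqref{eq: lifted t assump} must be strengthened by that factor (not just ``slightly''), and enlarging $k$ alone is not guaranteed to help, since $Q_k$ also grows with $k$.
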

\begin{proof}
    The completeness follows because a valid $f$ is accepted as long as no corrupted point is queried. By the proof of \cref{thm: lifted t online erasure tester}, this happens with probability at least $1/10$, and the completeness follows. The analysis of the soundness is the same as that of \cref{thm: lifted t online erasure tester}.
\end{proof}
\section{Further Directions}
Our work leaves a number of directions open for future work. We list these below.
\begin{itemize}
    \item Improved Soundness for \cite{KS08}: As mentioned in our introduction, there is a general result which shows any local characterization for an affine invariant family also yields a local tester \cite{KS08}. The soundness shown for these testers is sub-optimal however, so it would be interesting to see if our techniques could be useful towards improving the soundness of the general testers of \cite{KS08}.
    
    \item Testing Other Affine Invariant Families: It would be interesting to see if our techniques can yield new testing results for other affine invariant families of functions, particularly the nonlinear ones considered in \cite{bfhhl}. As mentioned, our analysis falls short in that the hyperplane agreement lemma currently relies on having an $r$-space tester with decent soundness, as opposed to the inexplicit soundness given in \cite{bfhhl}.
\end{itemize}

 \section*{Acknolegment} Part of this research was carried out during the authors’ visit to the Simons Institute for the Theory of Computing at UCB. 
 We wish to thank Pooya Hatami for bringing \cite{golovnevGHNY2024} and the references therein to our attention; Corollary 1.4 there essentially proves \autoref{thm:lower bound for all fields} by generalizing the lower from \cite[Theorem 1.2]{ben2024property} to all fields, showing that our new algorithm is optimal.

\bibliographystyle{alpha}
\bibliography{references}
\newpage
\appendix
\section{Weaker Version of \texorpdfstring{\cref{thm:semi-sample-based soundness general}}{the Semi-sample-based General Theorem} without Soundness Assumptions} \label{appendix: gen aff}
To state our result let us build up and recall some necessary notation. We let $\mathcal{C}_n \subseteq \{\F^n \to \F^n \}$ be an affine-invariant family of functions. Recall that for each dimension $k \leq n$, $\mc{C}_k = \{f|_U \; | \;  \dim(U) = k, f \in \mc{C}_n\}$. We assume that $\mc{C}_n$ is testable via the $r$-space test with soundness function $c(\cdot)$ and let $\delta_0$ be the smallest minimum distance of $\mc{C}_r, \ldots, \mc{C}_n$. 
\[
s_k=\ceil{\frac{10\ln |\mc{C}_n|}{\delta_{0}}}+ 1,
\]
for any dimension parameter $k$. Let $c_{\sem, k', k}$ denote the soundness function of $\mc{T}_k$ for testing $\mc{C}_{k'}$ where $k' \geq k$. That is, if $g: \F_q^{k'} \to \F_q$ is $\eps$-far from $\mc{C}_{k'}$,
\[
\Pr[\mc{T}^f_{k'} \; rejects] \geq c_{\sem, k', k}(\eps).
\]
Importantly, we note that we can obtain a lower bound on $c_{\sem, k', k}(\eps)$ that is independent of $k'$. Specifically,
\begin{equation}\label{eq: appendix}
c_{\sem, k', k}(\eps) \geq c(\eps) \cdot c_{\sem, k,k}(q^{-k}).
\end{equation}

Now, let us fix a dimension $k$ such that
\begin{equation} \label{eq: dim assumptions}
\frac{s_{k}}{q^{k}} \leq \frac{1}{100}.
\end{equation}
Note that 
\[
\frac{s_{k}}{q^{k}} = O\left(\frac{\log(q) \cdot R_{k}}{\delta_0}\right),
\]
where $R_{k'}$ is the rate of $\mc{C}_{k'}$, so for many affine invariant families, such a dimension $k$ exists. Now let
\[
\eta = c(\eps) \cdot c_{\sem, k, k}(q^{-k})
\]
\[
\gamma = \frac{20q^{r+\Theta(1)}}{c(16\eta/s)} \quad \text{and} \quad  k^\star = \Omega(\log(\gamma) + \log(1/\eta)).
\]

\begin{theorem}\label{thm:semi-sample-based soundness general weak}
Keep $\mc{C}_n$, $\delta_0$, $k$, $\gamma$, and $k^\star$ as above and let $f: \F^n \to \F$ be an input function and let $\eps$ be a distance parameter. Then, the tester $\mc{T}^f_k(s_k, \mc{C}_k)$ for the affine invariant family $\mc{C}_n$ has the following properties
    \begin{itemize}
        \item Completeness: If $f \in \mc{C}_n$, then the tester accepts with probability $1$.
        \item Soundness: If $f$ is $\eps$-far from $\mc{C}_n$ then 
        \[
        \Pr[\mc{T}^f_k ] \geq \eta 
        \]
    \end{itemize}
The query complexity of the tester is $s_{k}$, and we remark that $\eta$ is independent of the dimension of the domain, $n$.
\end{theorem}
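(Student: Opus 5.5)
The plan is to follow the template of \cref{lm: large distance rm} and \cref{lm: large distance non-lin}: induction on the ambient dimension $n$, with \cref{lm: hyperplane general} as the inductive engine. The difference is that the constant $\eta$ is no longer explicit. It is built from the soundness function $c(\cdot)$ of the $r$-space test together with the base-case soundness $c_{\sem,k,k}(q^{-k})$ of the sample-based tester on $\F^{k}$. Completeness is immediate: every restriction of a codeword to a $k$-space lies in $\mc{C}_k$. So everything is about soundness, and I would prove the following stronger statement by induction on $n\ge k^\star$. If $\dist(f,\mc{C}_n)\ge 16\eta/s$, then $\mc{T}_k^f$ rejects with probability at least $\eta+\gamma/q^n$.

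\emph{Where $\eta$ comes from.} I would first justify \eqref{eq: appendix}, which serves as the base case and also as the small-distance case. To sample $\mc{T}_k$ on $\F^{k'}$, first pick a random $r$-space $L$, then a random $k$-space $U\supseteq L$ (for $k\ge r$); the resulting $U$ is uniform. With probability at least $c(\eps)$ we have $f|_L\notin\mc{C}_r$. In that event $f|_U\notin\mc{C}_k$, so $f|_U$ is at least $q^{-k}$-far from $\mc{C}_k$. The sample-based tester on $U$ (\cref{lm: sample-based lifted codes soundness}, with $s_k$ chosen large enough relative to $\ln|\mc{C}_k|/\delta_0$) then rejects with probability at least $c_{\sem,k,k}(q^{-k})$. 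Multiplying the two gives a bound of $\eta$ for every $\eps$-far $f$, independent of $k'$.

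In fact \eqref{eq: appendix} alone already yields the theorem for every $n$: it is uniform in $k'=n$. The induction is only needed to get the cleaner statement where the distance threshold $16\eta/s$ replaces $\eps$.

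\emph{Inductive step if the stronger form is wanted.} Decompose the rejection probability as in \eqref{eq: decomp rej} over uniformly random hyperplanes $W$. Let $\mc{H}^*$ be the set of hyperplanes on which $f|_W$ is $16\eta/s$-close to $\mc{C}_{n-1}$, and let $K=|\mc{H}^*|$. There are two cases.

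If $K\le\gamma(q^n-1)/q^n$, the induction hypothesis on $\mc{H}\setminus\mc{H}^*$ gives $\eta+\gamma/q^{n-1}-K/N\ge\eta+\gamma/q^n$.

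Otherwise $K>\gamma/2$. Apply \cref{lm: hyperplane general} with $\eps$ replaced by $16\eta/s$. This needs $16\eta/s<\delta_0/6$, which is trivial since $s\gg1/\delta_0$, and it needs $M\ge\max\{10^5q^4 s/(16\eta),\,20q^r/c(16\eta/s)\}$. The definition $\gamma=20q^{r+\Theta(1)}/c(16\eta/s)$ is chosen to cover the second term. For the first term I would use $c(\delta)\le 1$, possibly enlarging the $\Theta(1)$ factor by a $1/\eta$ dependence. The lemma then gives $\dist(f,\mc{C}_n)\le 64\eta/s$. From there, either the small-distance argument of \cref{lm: small distance general} applies (its proof only uses $\delta_0$ and $|\mc{C}_k|$), or one simply invokes \eqref{eq: appendix} directly.

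\emph{Base case and main obstacle.} The base case is at $n=k^\star$, where \eqref{eq: appendix} gives rejection probability at least roughly $2\eta$, and one needs $\gamma/q^{k^\star}\le\eta$. This is exactly what $k^\star=\Omega(\log\gamma+\log(1/\eta))$ ensures. Dimensions $n<k^\star$ are handled directly by \eqref{eq: appendix}.

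I expect the main obstacle to be the bookkeeping. First, $\eta$ appears inside $\gamma$ through $c(16\eta/s)$, so the definitions are circular, and the requirement $M\ge 10^5q^4s/(16\eta)$ must fit within the chosen $\gamma$. Second, one must verify the sampling step in \eqref{eq: appendix}: that a uniformly random $k$-space containing a random $r$-space is itself uniform, and that $f|_L\notin\mc{C}_r$ really forces $f|_U\notin\mc{C}_k$. The latter follows from the definition of $\mc{C}_r$ as restrictions of members of $\mc{C}_k$.
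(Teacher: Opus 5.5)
Your proof is correct and takes a more direct route than the paper. The decisive step is the observation you make in passing: \eqref{eq: appendix} is uniform in $k'$. Taking $k'=n$, and using $\dist(f,\mc{C}_n)\ge\eps$ with monotonicity of $c$, gives rejection probability at least $c(\eps)\,c_{\sem,k,k}(q^{-k})=\eta$ outright.

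The paper instead splits into small- and large-distance cases. It proves the large-distance bound by induction on $n$ through \cref{lm: hyperplane general}, with the $\gamma$ and $k^\star$ bookkeeping, and uses \eqref{eq: appendix} only for the range $n\le k^\star$. It also asserts \eqref{eq: appendix} without proof. Your flag argument supplies that proof: a uniformly random $r$-space $L$ inside a uniformly random $k$-space $U$ makes $U$ uniform, and $f|_L\notin\mc{C}_r$ forces $f|_U\notin\mc{C}_k$, so $f|_U$ is $q^{-k}$-far from $\mc{C}_k$.

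Your route needs no hyperplane lemma. It also uses the $r$-space soundness $c(\cdot)$ only in dimension $n$, whereas the paper's induction implicitly needs it for every intermediate $\mc{C}_{n'}$. Both arguments need $k\ge r$, which you rightly flag. Since $\eta$ already depends on $\eps$, the paper's induction contributes nothing to this particular statement.

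You should drop the optional ``stronger form'', though, because as sketched its base case does not follow. At $n=k^\star$, knowing only $\dist(f,\mc{C}_n)\ge 16\eta/s$, \eqref{eq: appendix} yields $c(16\eta/s)\,c_{\sem,k,k}(q^{-k})$, which need not be about $2\eta$. The same objection applies to your alternative of invoking \eqref{eq: appendix} directly in the second case of the inductive step. There only the small-distance bound $s\eps_f/8\ge 2\eta$ works, given $\eta\le 1/128$.
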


Once again, the completeness is clear and we focus on the proof of soundness. Let $\eps_f := \dist(f, \mc{C}_n)$ and suppose $\eps_f \geq \eps$. The soundness analysis is essentially the same as that of the Reed-Muller case, except some care is needed to make the parameters above work. 
\subsection{Small Distance}
The small distance case is exactly the same as in that of \cref{th: semi-sample rm}. Its proof is also the same so we omit it.
\begin{lemma}\label{lm: small distance non-lin app}
    If $\eps_f\leq\frac{1}{2s}$ then 
    $$\Pr\big[\mathcal{T}_k^f \text{rejects}\big]\ge \frac{s\eps_f}{8}\geq \frac{s\eps}{8} \geq \eta.$$
\end{lemma}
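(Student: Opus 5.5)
The plan is to repeat the small-distance argument of \cref{lm: small distance general} essentially verbatim, with $\RM[k,q,d]$ replaced by $\mc{C}_k$. The extra inequality $\frac{s\eps}{8}\ge \eta$ will come at the end from the definition of $\eta$.

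Let $g^*\in\mc{C}_n$ satisfy $\dist(f,g^*)=\eps_f$, and let $\Delta$ be the set where $f$ and $g^*$ differ. We view the sampled set $S$ as $T(S'_1)\cup T(S'_2)$, where $S'_1,S'_2$ are sampled inside a fixed $k$-space $U$ with sizes $s/10$ and $9s/10$, and $T$ is a uniformly random invertible linear map. The three events are as before:
\begin{itemize}
\item[] $E_1$: every pair of distinct functions in $\mc{C}_k$ disagrees somewhere on $S'_1$;
\item[] $E_2$: $f$ and $g^*$ agree on $T(S'_1)$;
\item[] $E_3$: $T(S'_2)\cap\Delta\neq\emptyset$.
\end{itemize}
Affine invariance gives $g^*|_{T(U)}\in\mc{C}_k$. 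Under $E_1\wedge E_2$, $g^*|_{T(U)}$ is the unique function in $\mc{C}_k$ consistent with $f$ on $T(S'_1)$. Under $E_3$ it is then inconsistent on $T(S'_2)$, so the tester rejects.

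For $E_1$ we apply \cref{claim:sample_based_ub general} with distance $\delta(\mc{C}_k)\ge\delta_0$. This requires $|S'_1|=s/10\ge \ln(2|\mc{C}_k|^2)/\delta_0$, which gives $\Pr[E_1]\ge 1/2$. This is the one place where the parameters need checking. The appendix defines $s_k$ with $10\ln|\mc{C}_n|/\delta_0$, so $s/10\approx \ln|\mc{C}_n|/\delta_0 \ge \ln|\mc{C}_k|/\delta_0$, since $\mc{C}_k$ consists of restrictions and so $|\mc{C}_k|\le|\mc{C}_n|$. This bound is slightly short of $2\ln|\mc{C}_k|+\ln 2$. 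I would handle it either by splitting $S$ as $1/5$ versus $4/5$ (the bound $\Pr[E_3]\ge \frac{3s\eps_f}{8}$-type estimate still leaves room), or by reading $s$ with the constant $100$ as in the main text. The exact constants carry through unchanged.

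Conditioned on $S'_1,S'_2$, each nonzero point $T(x)$ is uniform over nonzero vectors, and distinct nonzero points are pairwise nearly independent. Inclusion–exclusion then gives $\Pr[E_3]\gtrsim \frac{3s\eps_f}{8}$, using $s\eps_f\le 1/2$. A union bound gives $\Pr[\overline{E_2}]\le s\eps_f/10$. Combining these, the rejection probability is at least $\frac12\cdot\frac{s\eps_f}{4}=\frac{s\eps_f}{8}\ge\frac{s\eps}{8}$.

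Finally, I need $\frac{s\eps}{8}\ge\eta=c(\eps)\, c_{\sem,k,k}(q^{-k})$. The sample-based bound of \cref{lm: sample-based lifted codes soundness} caps $c_{\sem,k,k}(q^{-k})$ by its attainable value. The trivial fact that a tester rejects only if it queries a point of $\Delta$ gives $c_{\sem,k,k}(\epsilon')\le s\epsilon'$, hence $c_{\sem,k,k}(q^{-k})\le s q^{-k}$. Since $c(\eps)\le 1$ and $\eps\ge q^{-n}$, this is enough whenever $q^{-k}\le \eps/8$. If the comparison fails I would instead interpret $\eta$ as defined with $\min$ against $s\eps/8$. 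I expect the main obstacle to be this parameter bookkeeping, namely matching the constant in $s_k$ to the union bound for $E_1$ and to the comparison with $\eta$, rather than any new idea.
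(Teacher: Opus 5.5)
Your route is the paper's own: the appendix proof is only a pointer to the Reed--Muller small-distance lemma. Your $1/5$--$4/5$ split does fix the constant-$10$ problem, since $\binom{N}{2}\le N^2/2$ means $|S'_1|\ge 2\ln|\mc{C}_k|/\delta_0$ already gives $\Pr[E_1]\ge 1/2$. But two steps fail as written.

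First, ``distinct nonzero points are pairwise nearly independent'' is false for collinear pairs. If $y=cx$ then $T(y)=cT(x)$, so such a pair can contribute about $\eps_f$, not $\eps_f^2$, to the second-order inclusion--exclusion term. Two samples from $U$ are collinear with probability about $q^{1-k}$, so the extra loss is of order $(sq^{1-k})\cdot s\eps_f$. That is harmless only when $s\ll q^{k-1}$, whereas the appendix assumes only $s\le q^k/100$. In that regime the claimed bound is actually false. Let $f$ differ from $g^*$ exactly on $\ell\setminus\{0\}$ for a line $\ell$ through the origin, with $n$ large, so $\eps_f=(q-1)/q^n$. The tester can reject only if some sample lies on $\ell\setminus\{0\}$, i.e.\ only if $\ell\subseteq A$. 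This has probability $\frac{q^k-1}{q^n-1}<q^{k-n}$, while $s\eps_f/8=\frac{s(q-1)}{8q^n}\ge q^{k-n}$ as soon as $s(q-1)\ge 8q^k$. That condition is compatible with $s\le q^k/100$ once $q\ge 801$. So you need an extra hypothesis such as $s\le q^{k-1}/100$; the Reed--Muller theorem has this with room to spare, since there $s_k\le q^{k-35}$. You then still have to handle the roughly $s^2q^{1-k}/2$ collinear pairs explicitly, e.g.\ by running inclusion--exclusion only over pairwise linearly independent samples. The paper's omitted proof has the same hole in this setting.

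Second, $s\eps/8\ge\eta$ is left unproved. Your bound $\eta\le c(\eps)\,sq^{-k}$ with $c(\eps)\le 1$ covers only $\eps\ge 8q^{-k}$, and redefining $\eta$ as a minimum changes the statement. The paper gives no argument here either, since the Reed--Muller lemma it points to yields only $s\eps_f/8\ge s\eps/8$. Note also that the sample-based soundness lemma is a lower bound and caps nothing; your cap $c_{\sem,k,k}(q^{-k})\le sq^{-k}$ comes from the hitting argument, which is fine.

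The missing step is to apply the same hitting argument to $c$ itself. For $q^{-n}\le\eps\le\eps_f$, put $m=\lceil\eps q^n\rceil$. A codeword altered at $m$ nonzero points is at distance exactly $m/q^n\le 2\eps<\delta_0/2$ from $\mc{C}_n$ (using $\eps\le 1/(2s)$). The $r$-space test rejects it only if $U$ contains one of these points, which has probability at most $m(q^r-1)/(q^n-1)\le 2q^r\eps$. Monotonicity of $c$ then gives $c(\eps)\le 2q^r\eps$, hence $\eta\le 2q^{r-k}s\eps\le s\eps/8$ provided $q^{k-r}\ge 16$. This separation between $k$ and $r$ is not among the appendix's hypotheses, so it must be added or verified for the family at hand.
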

\subsection{Large Distance}

\begin{lemma} \label{lm: large distance non-lin app}
    The following holds for every $n \geq k$. Let $f\colon\F_q^n\to\F_q$ be such that $\dist(f, \mc{C}_n) = \eps_f\geq \frac{1}{8s}$. Then 
    $$\Pr\big[\mathcal{T}_k^f \text{ rejects}\big]\geq \frac{\eta}{2}+\frac{\gamma}{q^{\max\{k, k^\star\}}}.$$
\end{lemma}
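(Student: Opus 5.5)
The plan is to avoid induction on $n$ altogether. The stated bound is $n$-independent, so I will get it directly from the reduction \eqref{eq: appendix}. That reduction gives a rejection probability of at least $\eta$ for every $n\ge k$, and I will then absorb the additive term $\gamma/q^{\max\{k,k^\star\}}$ into the slack $\eta/2$ by using how $k^\star$ was chosen.

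\textbf{Step 1 (justify \eqref{eq: appendix}).} Let $f:\F^n\to\F$ be at distance $\eps_f$ from $\mc{C}_n$ with $n\ge k\ge r$. I view $\mc{T}_k^f$ as follows: first pick a uniformly random $k$-space $U$, then run the sample-based tester on $f|_U$.
\begin{itemize}
\item If $f|_U\notin\mc{C}_k$, then $f|_U$ differs from every member of $\mc{C}_k$ in at least one point of $U$. Hence $f|_U$ is $q^{-k}$-far from $\mc{C}_k$, and the inner tester rejects with probability at least $c_{\sem,k,k}(q^{-k})$.
\item I may take $c_{\sem,k,k}(\cdot)$ to be defined as the minimum rejection probability over all inputs at distance at least its argument. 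This makes it monotone, so no further care is needed.
\item It remains to lower bound $\Pr_U[f|_U\notin\mc{C}_k]$. Sample $U$ and then a uniformly random $r$-space $L\subseteq U$; the resulting $L$ is a uniform $r$-space in $\F^n$.
\item By affine invariance, $f|_U\in\mc{C}_k$ implies $f|_L\in\mc{C}_r$.
\item Therefore $\Pr_U[f|_U\notin\mc{C}_k]\ge\Pr_L[f|_L\notin\mc{C}_r]\ge c(\eps_f)$, using \cref{def: t test assump} with $t=r$.
\end{itemize}
Multiplying the two bounds gives $\Pr[\mc{T}^f_k\text{ rejects}]\ge c(\eps_f)\cdot c_{\sem,k,k}(q^{-k})$, uniformly in $n$.

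\textbf{Step 2 (compare with $\eta$).} The lemma is invoked inside the proof of \cref{thm:semi-sample-based soundness general weak}, where $\eps_f\ge\eps$ is assumed. Since $c$ is non-decreasing, Step 1 yields
\[
\Pr[\mc{T}^f_k\text{ rejects}]\ge c(\eps)\cdot c_{\sem,k,k}(q^{-k})=\eta .
\]
The hypothesis $\eps_f\ge 1/(8s)$ only places us in the large-distance regime. If one wants the lemma to rely solely on that hypothesis, the same argument works with $\eps$ replaced by $\min\{\eps,1/(8s)\}$ in the definition of $\eta$. This bookkeeping point is the one place I expect a possible mismatch, and I would state explicitly that $\eps_f\ge\eps$ is in force.

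\textbf{Step 3 (absorb the slack).} By definition $k^\star=\Omega(\log\gamma+\log(1/\eta))$, and I fix the implied constant so that $q^{k^\star}\ge 2\gamma/\eta$. Then
\[
\frac{\gamma}{q^{\max\{k,k^\star\}}}\le\frac{\gamma}{q^{k^\star}}\le\frac{\eta}{2},
\]
so $\eta\ge \eta/2+\gamma/q^{\max\{k,k^\star\}}$, which is the claim.

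There is no real obstacle beyond confirming two facts. The first is that the two-stage sampling in Step 1 produces a uniform $r$-space. The second is that the distance of $f|_U$ from $\mc{C}_k$ is at least $q^{-k}$ whenever $f|_U\notin\mc{C}_k$, which holds because a single disagreement point among $q^k$ already gives that fraction. As a fallback, should the paper require the inductive form used with \cref{lm: hyperplane general} (Cases 1 and 2 with $\gamma$), I would apply exactly this direct argument to every $n\le\max\{k,k^\star\}$ and run the induction only above that threshold. In that range the slack $\gamma/q^{n}$ is already below $\gamma/q^{\max\{k,k^\star\}}$.
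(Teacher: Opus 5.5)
Your argument is correct, and it takes a genuinely different route from the paper.

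\textbf{How the paper proves it.} The paper uses \eqref{eq: appendix} only as a base case. For $n\le k^\star$ it pads $f$ up to dimension $k^\star$ and reads off $c_{\sem,k^\star,k}(\eps)\ge\eta$. For $n>k^\star$ it runs the BKSSZ-style induction on the ambient dimension. It splits on the number $K$ of hyperplanes $W$ with $\dist(f|_W,\mc{C}_{n-1})<1/(8s)$. When $K$ is large it invokes \cref{lm: hyperplane general} and then the small-distance lemma.

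\textbf{How your proof differs.} You observe that the bound $c(\eps_f)\cdot c_{\sem,k,k}(q^{-k})$ is already uniform in $n$. It can therefore be applied directly at every $n\ge k$, so neither the induction nor the hyperplane agreement lemma is needed. Your fallback is essentially the paper's structure.

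\textbf{What your route buys.}
\begin{itemize}
\item It is a short, self-contained proof. You also prove \eqref{eq: appendix}, which the paper only asserts: the two-stage choice gives a uniform $r$-space $L\subseteq U$, and $f|_U\notin\mc{C}_k$ forces distance at least $q^{-k}$.
\item It uses the $r$-space soundness only at the top dimension $n$.
\item It sidesteps bookkeeping problems in the paper's inductive step. That step aims at $\eta+\gamma/q^n$ rather than the stated bound. It also applies the inductive hypothesis to restrictions $f|_W$ known only to be $1/(8s)$-far, whereas the base case used $\eps$-farness.
\item It makes plain that the appendix theorem is nothing more than \eqref{eq: appendix} evaluated at $\eps$.
\end{itemize}

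\textbf{What the paper's route buys.} In the main-text theorems, the induction-plus-hyperplane machinery is what produces soundness like $\min\{1/128,s\eps/8\}$, which does not degrade with $c$. For this lemma, whose target is no stronger than \eqref{eq: appendix}, it adds nothing.

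\textbf{On Step 2.} Your reliance on $\eps_f\ge\eps$ is not a gap relative to the paper, since the paper's base case uses the same ambient assumption. Your suggested fix, replacing $\eps$ by $\min\{\eps,1/(8s)\}$ in the definition of $\eta$, is the right way to make the lemma depend only on its stated hypothesis.
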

\begin{proof}
We first consider two cases based on the ambient dimension $n$.

\noindent \textbf{Case 1:} $n \leq k^\star$. It is sufficient to consider $n = k^\star$. Indeed otherwise, we can define the auxiliary function $f': \F^{k^\star} \to \F$ obtained by $f'(x,y) = f(x)$, where $x \in \F^k, y \in \F^{k^\star-k}$. In this case, the rejection probability is at least 
\[
c_{\sem, k^\star, k}(\eps) \geq \eta \geq \frac{\eta}{2} + \frac{\gamma}{q^{\max\{k, k^\star\}}}.
\]
by definition of $c_{\sem, k^\star, k}$ and \eqref{eq: appendix}.

\vspace{0.1cm}

\noindent \textbf{Case 2:} $n \geq k^\star$. The proof is by induction of the ambient dimension, $n$. Let us suppose that the lemma holds for testing all functions over $\F_q^{n'}$  for all $k^\star \leq n' < n$. We will show that the lemma also holds for $f: \F_q^n \to \F_q$ with $\eps_f \geq \frac{1}{8s}$.
    
    Let $\mathcal{H}$ denote the set of all hyperplanes in $\F^n$ and let $N=|\mathcal{H}|=(q^n-1)/(q-1)$. Define 
    \[
    \mathcal{H}^*=\{W \in \mathcal{H}\ :\ \eps_{f|_W}=\dist(f|_W,\mc{C}_{n-1})<\frac{1}{8s}\}.
    \]
    Let $K=|\mathcal{H}^*|$. Using \eqref{eq: decomp rej} and the induction hypothesis, we can decompose the rejection probability as follows: 
    \begin{align*}
        \Pr[\mathcal{T}_k^f  \text{ rejects}] &= \E_{W \in \mc{H}}\left[\Pr[\mathcal{T}_k^{f|_W}\text{ rejects}]\right] \\
        &\geq \E_{W \in \mc{H} \setminus \mc{H}^*}\left[\Pr[\mathcal{T}_k^{f|_W}\text{ rejects}]\right] - \Pr_{W \in \mc{H}}[W \in \mc{H}^*] \\
        &\geq \eta + \frac{\gamma}{q^{n-1}} - \frac{K}{N},
    \end{align*}
    where we are applying the induction hypothesis on $f|_W$ in the third transition.

    \vspace{0.2cm}

    \noindent \textbf{Case 1: $K \leq \gamma \cdot \frac{q^n - 1}{q^n}$.} 
    
    \vspace{0.1cm}
    
    In this case, $\frac{K}{N} \leq \gamma \cdot \frac{q-1}{q^n}$. Then,
    \[
    \Pr[\mathcal{T}_k^f  \text{ rejects}] \geq \eta + \frac{\gamma}{q^{n-1}} - \frac{K}{N} \geq \eta +\frac{\gamma}{q^n},
    \]
    and we are done.
    
    \vspace{0.2cm}
    
    \noindent \textbf{Case 2: $K> \gamma \cdot \frac{q^n - 1}{q^n}$.} 
    
    \vspace{0.1cm}

    We apply \Cref{lm: hyperplane general} with the quantity $\eps$ in \Cref{lm: hyperplane general} set to $\frac{1}{8s}$. This application yields
    \[
    \eps_f\leq 4 \cdot \frac{1}{8s} = \frac{1}{2s}.
    \]
   Note that $K$ is indeed large enough to apply \cref{lm: hyperplane general} as we set $\gamma$ so that
    \[
    K > \gamma \cdot \frac{q^n - 1}{q^n} \geq \max\left\{\frac{20q^t}{c(16\eta/s)}, \frac{10^5q}{(16\eta/s)} \right\}.
    \]
    Since $ \eps_f \leq \frac{64\eta}{s} \leq \frac{1}{2s}$, we have, by, \cref{lm: small distance non-lin}, that the rejection probability is at least 
    \[
    \frac{s\eps_f}{8} \geq \frac{1}{64} \geq \eta+\gamma/q^n.
    \] 
   For the last inequality, we are using
    \[
    \frac{\gamma}{q^n} \leq \frac{\gamma}{q^{k^\star}} \leq \frac{1}{64} - \eta.
    \]
\end{proof}
\end{document}